\newcommand{\agda}[1]{\mintinline{agda}{#1}}
\newcommand{\agdafrag}[3]{\inputminted[fontsize=\footnotesize,firstline=#2,lastline=#3]{agda}{agda/#1}}
\tikzstyle{func}=[rectangle,draw,fill=black!20,minimum size=1.9em,text width=2.4em, text centered]
\tikzset{>=latex}
\tikzstyle{blackdot}=[circle, draw=black, fill=black!75, inner sep=.4ex, line width=.7pt]
\tikzstyle{whitedot}=[circle, draw=black, fill=white, inner sep=.4ex, line width=.7pt]
\newcommand\tinymatrix[1]{\left( \renewcommand\thickspace{\kern2pt} \scriptstyle\begin{smallmatrix} #1 \end{smallmatrix} \right)}
\newtheorem{theorem}{Theorem}
\newtheorem{corollary}[theorem]{Corollary}
\newtheorem{lemma}[theorem]{Lemma}
\theoremstyle{definition}
\newtheorem{definition}[theorem]{Definition}
\begin{document}

\title{The Quantum Effect}
\subtitle{A Recipe for \qpi}

\author{Jacques Carette}
\orcid{}
\affiliation{
  \department{}
  \institution{McMaster University}
  \city{Hamilton}
  \postcode{}
  \country{Canada}
}
\email{}

\author{Chris Heunen}
\orcid{0000-0001-7393-2640}
\affiliation{
  \department{School of Informatics}
  \institution{University of Edinburgh}
  \city{Edinburgh}
  \postcode{EH8 9AB}
  \country{United Kingdom}
}
\email{chris.heunen@ed.ac.uk}

\author{Robin Kaarsgaard}
\orcid{0000-0002-7672-799X}
\affiliation{
  \department{Department of Mathematics and Computer Science}
  \institution{University of Southern Denmark}
  \city{Odense}
  \postcode{5230}
  \country{Denmark}
}
\email{kaarsgaard@imada.sdu.dk}

\author{Amr Sabry}
\orcid{0000-0002-1025-7331}
\affiliation{
  \department{Department of Computer Science}
  \institution{Indiana University}
  \city{Bloomington}
  \postcode{47408}
  \country{USA}
}
\email{sabry@indiana.edu}

\begin{abstract}
Free categorical constructions characterise quantum
computing as the combination of two copies of a reversible classical
model, glued by the complementarity equations of classical
structures. This recipe effectively constructs a computationally
universal quantum programming language from two copies of $\PiLang$, the
internal language of rig groupoids. The construction consists of Hughes' arrows. Thus we answer positively the question whether a computational effect exists that turns reversible classical computation into quantum computation: the quantum effect. Measurements can be added by 
layering a further effect on top. Our construction also enables some reasoning about
quantum programs (with or without measurement) through a combination
of classical reasoning and reasoning about complementarity.
\end{abstract}

\maketitle

\section{Introduction}

A distinguishing and well-established aspect of quantum
mechanics is the concept of \emph{complementarity}. Roughly speaking
quantum data is such that an observation in one experimental
setting excludes the possibility of gaining any information in
\emph{complementary} settings. This experimental
characterisation has been formalised as an equation that relates a
particular collection of ``classical morphisms'' in the category of
finite-dimensional Hilbert spaces, and has inspired the development of
axiomatic models of quantum theory~\cite{coeckeduncan:zx}.

From a programming language perspective, one thus expects that a
particular collection of ``classical subexpressions'' in any universal
quantum programming language exhibits this complementarity property. 
More foundationally, and of practical significance, is the opposite
question: is it possible to induce quantum
behaviour by imposing the complementarity property on two
classical structures, each modeled by a classical reversible
programming language? The most important practical significance of
such a construction would be that some forms of reasoning about quantum programs reduce to classical reasoning augmented with one complementarity
equation.  Additionally, the design of quantum programming languages
itself would reduce to designing two appropriate
classical languages whose interaction is constrained by
complementarity. Foundationally, this construction would turn around
the prevalent view of quantum computing~\cite{GLRSV2013-pldi,SV2005-qlambda,SV2009-qlambdabook,voichick2022qunity},
potentially shedding some light on a long-standing foundational
question in physics about the relationship between quantum and
classical theories~\cite{phys-foundations}.

This article gives just such a recipe, namely constructing a computationally universal quantum
programming language from two copies of a (particlar) universal classical
reversible language. The mathematical formalism is expressed
using free categorical constructions, and makes heavy use of Hughes' arrows~\cite{hughes:arrows,jacobsheunenhasuo:arrows}. To
also demonstrate the recipe in action and explore its pragmatics in
programming and reasoning about quantum circuits, we apply it to a
canonical reversible programming language \PiLang\ yielding the
computationally universal quantum programming language \qpi, and
implement the entire project in Agda\footnote{Available from the
repository \url{https://github.com/JacquesCarette/QuantumPi}.}

\paragraph*{Related work.}
We discuss related work in detail where appropriate, and provide just
a brief summary here. Quantum programming
languages~\cite{voichick2022qunity, bichsel2020silq, GLRSV2013-pldi,
  SV2009-qlambdabook, sabryvalironvizzotto:symmetric, selinger:qpl,
  paykinrandzdancewic:qwire}, classical reversible programming
languages~\cite{jamessabry:infeff, 10.1145/3498667,
  10.1007/978-3-662-49498-1, jacobsenkaarsgaardthomsen:corefun,
  yokoyamaglueck:janus, yokoyamaaxelsenglueck:rfun}, their categorical
semantics~\cite{pechouxperdrixrennelazamdzhiev:causality,
  chowesterbaan:vonneumann, rennelastaton:enriched,
  westerbaan:kleisli,10.1145/3498687, heunenkaarsgaard:qie,
  heunenkaarsgaardkarvonen:arrows,
  glueckkaarsgaardyokoyama:reversible}, complementarity of classical
structures~\cite{coeckeduncan:zx,coeckeetal:strongcomplementarity},
and amalgamation of categories~\cite{macdonaldscull:amalg} have all
been individually studied before. In particular, complementarity is
central to the ZX-~\cite{coeckeduncan:zx} and
ZH-calculi~\cite{backenskissinger:zh,comfort:tofH}, though more axioms are needed for completeness. Given this wealth
of prior work, our main contribution is to design an infrastructure in
which these established ideas can be organised in a way for quantum
behaviour to emerge from classical programming languages using
computational effects in the form of arrows. 
The quest for such a computational `quantum effect' also underlies~\cite{altenkirchgreen:quantumio}, and the current article vastly improves on that very early work.

\begin{figure}[t]
\begin{center}
\begin{tikzpicture}[scale=0.7, every node/.style={scale=0.7}]

\node [circle,draw] (pi) at (0,0) {$\PiLang$};
\node [circle,draw,text=cyan] (piz) at (-1.5,1.5) {$\PizLang$};
\node [circle,draw,text=magenta] (pih) at (1.5,1.5) {$\PihLang$};
\node [circle,draw] (pizh) at (0,3.5) {$\PizhLang$};
\node [circle,draw] (spizhe) at (0,6.5) {$\SPizheLang$};
\node [circle,draw] (spizheq) at (0,10) {$\qpi$};

\draw[-stealth] (pi) to node[left=2mm]{} (piz);
\draw[-stealth] (pi) to node[right=2mm]{} (pih);
\draw[-stealth] (piz) to node[left]{} (pizh);
\draw[-stealth] (pih) to node[right]{} (pizh);
\draw[-stealth] (pizh) to node[left]{Arrow} (spizhe);
\draw[-stealth] (spizhe) to node[left] {Complementarity} (spizheq);

\node[draw,text width=2cm, below=0.1cm of pi]
     {$\pid; \swapp; \ldots$};
\node[below=0.2cm of pizh] {Arrows};
\node[draw,text width=2cm, left=0.1cm of piz]
     {$\zlang{\pid}; \zlang{\swapp}; \ldots$};
\node[draw,text width=2cm, right=0.1cm of pih]
     {$\hlang{\pid}; \hlang{\swapp}; \ldots$};
\node[draw,text width=3.4cm,left=0.3cm of pizh]
     {Fig.~\ref{fig:pizh} new constructs: \\
       $[\zlang{c_1}, \hlang{c_2}, \zlang{c_3}, \hlang{c_4}, \ldots]$};
\node[draw,text width=3.4cm,left=0.3cm of spizhe]
     {Fig.~\ref{fig:spizhe} new constructs: \\
       $\cm{lift}~xs$};
\end{tikzpicture}
\begin{tikzpicture}[scale=0.7, every node/.style={scale=0.7}]

\node (pi) at (0,0) {Rig groupoid};
\node [text=cyan] (piz) at (-1.5,1.5) {$\Unitary$};
\node [text=magenta] (pih) at (1.5,1.5) {$\Aut_{R\phid}(\Unitary)$};
\node (pizh) at (0,3.5) {$\SymMonAmalg(\zlang{\Unitary}, \hlang{\Aut_{R\phid}(\Unitary))}$};
\node (spizhe) at (0,6.5) {$\SymMonAmalg(\zlang{\Unitary}, \hlang{\Aut_{R\phid}(\Unitary))}_{I\oplus I}$};
\node (spizheq) at (0,10) {$\cat{Contraction}$};

\node[text width=2.8cm, below=0.2cm of pi] {};

\draw[-stealth] (pi) to node[left=2mm]{} (piz);
\draw[-stealth] (pi) to node[right=2mm]{} (pih);
\draw[-stealth] (piz) to node[left]{} (pizh);
\draw[-stealth] (pih) to node[right]{} (pizh);
\draw[-stealth] (pizh) to node[right]{} (spizhe);
\draw[-stealth] (spizhe) to node[right]{} (spizheq);
\end{tikzpicture}\end{center}
\caption{\label{fig:struct}General overview of the approach to
  computationally universal quantum computation. The left diagram
  shows the progression of languages and the right diagram shows the
  corresponding progression of categories. We start at the bottom with
  a classical reversible language universal for finite types, that take semantics in rig groupoids. Moving up, we make two copies of the
  language that are rotated with respect to each other and embed them
  in the category $\Unitary$. The two copies are amalgamated in a
  first effectful arrow layer that provides interleaving constructs. A
  second arrow layer is then added in which classical information can
  be created and deleted. Once the complementarity equation is imposed
  on the final language at the top, we get a computationally universal
  quantum programming language, with semantics in the category $\cat{Contraction}$ of finite-dimensional Hilbert spaces and linear contractions. }
\end{figure}

\paragraph*{Outline.}
Fig.~\ref{fig:struct} provides a summary of our technical development,
which proceeds in two parallel threads: the design of a
computationally universal quantum programming language on the left and
the corresponding categorical models on the right. This summary is
expanded in Sec.~\ref{sec:story}, which presents the resulting
programming language and its model without exposing or explaining the
intermediate steps. The main point is for the reader to appreciate the
main technical result --- the canonicity theorem --- without the
distraction of some of the more involved technical properties and
their proofs. The full technical development starts in
Sec.~\ref{sec:background} with a review of relevant background on the
categorical semantics of quantum computing. Sec.~\ref{sec:pi} then
reviews the core classical reversible programming language
\PiLang\ and its semantics in rig groupoids. The first step of our
construction in Sec.~\ref{sec:automorphisms} is to take two copies of
\PiLang, called \PizLang\ and \PihLang, and embeds their semantics, in
two different ways, in the category $\Unitary$ of finite-dimensional
Hilbert spaces and unitaries. These two languages are then amalgamated
in Sec.~\ref{sec:amalgamation} to produce a language \PizhLang\ in
which expressions from \PizLang\ and \PihLang\ can be freely
interleaved. The amalgamated language is then extended in
Sec.~\ref{sec:spice} to the language \SPizheLang\ which exposes the
classical structures explicitly. Sec.~\ref{sec:canonicity} proves our
main result, the canonicity theorem. Sec.~\ref{sec:reasoning}
introduces \qpi, the user-level interface of \SPizheLang. Unlike the
previous section, the development uses our Agda specification to
provide executable circuits \emph{and} machine checkable proofs of
various circuit equivalences. This section demonstrates that some forms of reasoning
about quantum programs in \qpi\ indeed reduces to classical reasoning
augmented with the complementarity equation. It also shows that
\qpi\ can model gates with complex numbers and be extended with a
measurement postulate to model complete quantum algorithms. The
article concludes with a section summarising the results and
discussing possible future directions. 

\section{Three Arrows to Quantum}
\label{sec:story}

We first present a complete, somewhat informal, overview of
our main result.  The necessary categorical constructions, models, and
their properties that are needed to justify the result are quite
involved and form the main body of the paper.

Usually computational effects are introduced as a layer above a pure language using
monads or arrows. Here, we take the unusual step of using \emph{two}
copies of a core language and introduce an arrow that amalgamates
them, allowing arbitrary interleaving of expressions from either
language. As will be explained in the remainder of the paper, the core
language is $\PiLang$, a universal language for classical reversible
circuits providing facilities for sequential, additive, and
multiplicative compositions of circuits. The language $\PiLang$
includes many constructs, the most important of which for this section
is $\swapp : \twot \fromto \twot$ where $\twot$ is the
type containing two values, namely a left injection and a right injection;
the semantics of $\swapp$ is an automorphism of the type $\twot$
that exchanges the left and right injections.  Distributivity of
the multiplicative structure over the additive one allows the language
to express conditional (or controlled) execution via an expression
$\ctrlgate~c$ which takes a pair of inputs, one of which is of type
$\twot$ and executes $c$ only if the given value of type
$\twot$ is the one denoting true.

The two copies of $\PiLang$, called $\PizLang$ and~$\PihLang$, are
amalgamated into a larger language via their own constructor $\cm{arr}_Z$ and
$\cm{arr}_{\phi}$. We think of $\phi$ as an angle which specifies how much
one language is ``rotated'' with respect to the other. The rotations for
general expressions are generated in a compositional type-directed
manner from the rotation on $\swapp$. As a consequence, the two arrows
share the same multiplicative structure, whose semantics is identical
in each of the sublanguages and in their amalgamation. The semantics
of the additive structure is however not lifted to the amalgamated
language. Finally, a third arrow, which also shares the same
multiplicative structure, is layered above the amalgamated language
allowing the introduction of partially reversible maps for a state
$\cm{zero}$ and an effect $\cm{assertZero}$.

Collecting these ideas, given a classical reversible language whose
expressions are denoted by $c$, the syntax of \qpi\ is:
\begin{align*}
  d & \defeq \cm{arr}_Z~c \mid \cm{arr}_{\phi}~c \\
  & \mid \unitetl \mid \unititl \mid \swapt \mid \assoct \mid \associt \\
  & \mid \pid \mid d \ggg d \mid d \arrprod d \mid
    \inv~d \mid \cm{zero} \mid \cm{assertZero}
\end{align*}
The first line lifts the underlying $c$ expressions using two
different arrows. The next line provides a multiplicative monoidal
structure. The last line provides identity, sequential and parallel
compositions, a (partial) inverse, and the two state and effect
constants. We will additionally use the shorthands
\begin{align*}
  \cm{one} &= \cm{zero} \ggg \cm{arr}_Z~\swapp \\
  \cm{assertOne} &= \cm{arr}_Z~\swapp \ggg \cm{assertZero}
\end{align*}
but we stress that these are merely syntactic conveniences and not
part of \qpi as such.

The intention is for the underlying languages to be lifted
in such a way that $\cm{arr}_\phi~\swapp$ can be thought of as a
change of basis between a pair of complementary orthonormal bases. Let
$\{F_Z,T_Z\}$ be the denotations of the two values of $\twot$
lifted via $\cm{arr}_Z$, and let $\{F_X,T_X\}$ be these values applied
to $\cm{arr}_\phi~\swapp$ (note that this is strictly different from the
denotations of the two values of $\twot$ lifted via
$\cm{arr}_\phi$).  We require that the denotation of $\cm{zero}$ be
$F_Z$ which is the one value recognized by the predicate
$\cm{assertZero}$. Formally, the semantics satisfies the following
execution equations:
\begin{align*}
  \cm{zero} \ggg \cm{assertZero} &~=~ \pid \\
  (\cm{zero} \arrprod \pid) \ggg \cm{arr}_Z~(\ctrlgate~c) &~=~ \cm{zero} \arrprod \pid \\
  (\cm{one} \arrprod \pid) \ggg \cm{arr}_Z~(\ctrlgate~c) &~=~ \cm{one} \arrprod \cm{arr}_Z~c \\
  \cm{zero} \ggg \cm{arr}_\phi~\swapp \ggg \cm{assertOne} &~=~ \cm{one} \ggg \cm{arr}_\phi~\swapp \ggg \cm{assertZero}
\end{align*}
\noindent The first equation states that $\cm{assertZero}$ is the
partial inverse of $\cm{zero}$. The second and third equation ensure
that $F_Z$ behaves as false in conditional expressions and that $T_Z$
behaves as true, while the final equation describes a relationship the
states and effects of the two bases. Later, we will see this forces
the meaning of $\cm{arr}_\phi~\swapp$ to be a real matrix.

\begin{figure}
  \begin{align*}
    \begin{aligned}\begin{tikzpicture}[scale=.5, every node/.style={scale=0.75}]
      \node[whitedot] (t) at (0,1) {};
      \node[whitedot] (b) at (1,-.5) {};
      \draw (t.center) to[out=180,in=-90] +(-.75,1.5);
      \draw (t.center) to[out=0,in=-90] +(.75,1.5);
      \draw (b.center) to[out=180,in=-90] (t.south);
      \draw (b.center) to +(0,-1);
      \draw (b.center) to[out=0,in=-90] +(1.5,3);
      \node[whitedot,fill=cyan] (t) at (0,1) {};
      \node[whitedot,fill=cyan] (b) at (1,-.5) {};
      \node[below left] at (t) {$\cm{copy}_Z$};
      \node[below left] at (b) {$\cm{copy}_Z$};
    \end{tikzpicture}\end{aligned}
    &=
    \begin{aligned}\begin{tikzpicture}[yscale=.5,xscale=-.5,every node/.style={scale=0.75}]
      \node[whitedot] (t) at (0,1) {};
      \node[whitedot] (b) at (1,-.5) {};
      \draw (t.center) to[out=180,in=-90] +(-.75,1.5);
      \draw (t.center) to[out=0,in=-90] +(.75,1.5);
      \draw (b.center) to[out=180,in=-90] (t.south);
      \draw (b.center) to +(0,-1);
      \draw (b.center) to[out=0,in=-90] +(1.5,3);
      \node[whitedot,fill=cyan] (t) at (0,1) {};
      \node[whitedot,fill=cyan] (b) at (1,-.5) {};
      \node[below right] at (t) {$\cm{copy}_Z$};
      \node[below right] at (b) {$\cm{copy}_Z$};
    \end{tikzpicture}\end{aligned}
    &
  \begin{aligned}\begin{tikzpicture}[scale=.5,every node/.style={scale=0.75}]
    \node[whitedot] (d) {};
    \draw (d.center) to +(0,-1.5);
    \draw (d.center) to[out=0,in=-90] +(1,2.5);
    \draw (d.center) to[out=180,in=-90] +(-1,2.5);
    \node[whitedot,fill=cyan] (d) {};
      \node[below left] at (d) {$\cm{copy}_Z$};
  \end{tikzpicture}\end{aligned}
  &=
  \begin{aligned}\begin{tikzpicture}[scale=.5,every node/.style={scale=0.75}]
    \node[whitedot,fill=cyan] (d) {};
    \draw (d.center) to +(0,-1.5);
    \draw (d.center) to[out=0,in=-90] ++(1,1) to[out=90,in=-90] ++(-2,1.5);
    \draw (d.center) to[out=180,in=-90] ++(-1,1) to[out=90,in=-90] ++(2,1.5);
    \node[whitedot,fill=cyan] (d) {};
      \node[below right] at (d) {$\cm{copy}_Z$};
  \end{tikzpicture}\end{aligned}
  \\
  \begin{aligned}\begin{tikzpicture}[every node/.style={scale=0.75}]
      \node (0) at (0,0.33) {};
      \node [whitedot] (1) at (0,1) {};
      \node [whitedot] (2) at (0,2) {};
      \node (3) at (0,2.66) {};
      \draw  (0) to (1.center);
      \draw  (2.center) to (3);
      \draw [in=left, out=left, looseness=1.5] (1.center) to (2.center);
      \draw [in=right, out=right, looseness=1.5] (1.center) to (2.center);
      \node [whitedot,fill=cyan] (1) at (0,1) {};
      \node [whitedot,fill=cyan] (2) at (0,2) {};
      \node[below left] at (1) {$\cm{copy}_Z$};
      \node[above left] at (2) {$\inv~{\cm{copy}_Z}$};
  \end{tikzpicture}\end{aligned}
  &= 
  \begin{aligned}\begin{tikzpicture}
      \node (0) at (0,0.33) {};
      \node (1) at (0,2.66) {};
      \draw  (0) to (1);
  \end{tikzpicture}\end{aligned}
    &
  \begin{aligned}\begin{tikzpicture}[scale=0.75,every node/.style={scale=0.75}]
        \node (0) at (0,0) {};
        \node (0a) at (0,1) {};
        \node [whitedot] (1) at (0.5,2) {};
        \node [whitedot] (2) at (1.5,1) {};
        \node (3) at (1.5,0) {};
        \node (4) at (2,3) {};
        \node (4a) at (2,2) {};
        \node (5) at (0.5,3) {};
        \draw (0) to (0a.center);
        \draw [out=90, in=180] (0a.center) to (1.center);
        \draw [out=0, in=180] (1.center) to (2.center);
        \draw [out=0, in=270] (2.center) to (4a.center);
        \draw (4a.center) to (4);
        \draw (2.center) to (3);
        \draw (1.center) to (5);
        \node [whitedot,fill=cyan] (1) at (0.5,2) {};
        \node [whitedot,fill=cyan] (2) at (1.5,1) {};
      \node[above left] at (1) {$\inv~\cm{copy}_Z$};
        \node[below right] at (2) {$\cm{copy}_Z$};
  \end{tikzpicture}\end{aligned}
  &=
  \begin{aligned}\begin{tikzpicture}[yscale=0.75,xscale=-0.75,every node/.style={scale=0.75}]
        \node (0) at (0,0) {}; 
        \node (0a) at (0,1) {};
        \node [whitedot] (1) at (0.5,2) {};
        \node [whitedot] (2) at (1.5,1) {};
        \node (3) at (1.5,0) {};
        \node (4) at (2,3) {};
        \node (4a) at (2,2) {};
        \node (5) at (0.5,3) {};
        \draw (0) to (0a.center);
        \draw [out=90, in=180] (0a.center) to (1.center);
        \draw [out=0, in=180] (1.center) to (2.center);
        \draw [out=0, in=270] (2.center) to (4a.center);
        \draw (4a.center) to (4);
        \draw (2.center) to (3);
        \draw (1.center) to (5);
        \node [whitedot,fill=cyan] (1) at (0.5,2) {};
        \node [whitedot,fill=cyan] (2) at (1.5,1) {};
      \node[above right] at (1) {$\inv~\cm{copy}_Z$};
        \node[below left] at (2) {$\cm{copy}_Z$};
  \end{tikzpicture}\end{aligned}
  \end{align*}
  \caption{Depictions of the properties for the classical structure
    for $\cm{copy}_Z$. The laws for $\cm{copy}_X$ are
    identical. These diagrams are provided only as illustration; the
    graphical language of monoidal categories~\cite{heunenvicary:cqt}
    will not be used in the rest of this article.}
  \label{fig:stringdiagrams1}
\end{figure}  

With the introduction of state $\cm{zero}$ and effect
$\cm{assertZero}$, the arrow layer can express two different
computations that copy and merge expressions of type $\twot$:
\begin{align*}
\cm{copy}_Z &~=~
    (\unititl \ggg \swapt) \ggg (\pid \arrprod \cm{zero}) \ggg \cm{arr}_Z~(\ctrlgate~\swapp) \\
    \cm{copy}_X &~=~ \cm{arr}_{\phi}~\swapp \ggg \cm{copy}_Z \ggg
    (\cm{arr}_{\phi}~\swapp \arrprod \cm{arr}_{\phi}~\swapp)
\end{align*}

\noindent The expression $\cm{copy}_Z$ takes a value $b$ which is
either $F_Z$ or $T_Z$ and duplicates it. The expression $\cm{copy}_X$
rotates the values of type $\twot$ before and after invoking
$\cm{copy}_Z$ which has the result of duplicating $F_X$ or
$T_X$. Intuitively, the inverse copy maps merge two equal values and
are undefined otherwise. Formally, each $\cm{copy}$ and its inverse
form a classical structure that satisfies the laws depicted in
Fig.~\ref{fig:stringdiagrams1}.  These properties state that the
duplicated values are indistinguishable, that the duplication can
happen in any order, and that immediately applying the inverse
perfectly undoes the effect of copying.

\begin{wrapfigure}{r}{0.5\textwidth}
\begin{center}\begin{tikzpicture}[scale=0.5, every node/.style={scale=0.5}, outer sep=3pt]
  \tikzstyle{box}=[fill=cyan!30, rounded corners=3pt];
  \tikzstyle{markedbox}=[fill=magenta!30, rounded corners=3pt];
  \node[box] (A) at (0,0) {$F_Z$};
  \node[box] (B) at (6,0) {$F_X$};
  \node[box] (C) at (-2,2) {$F_Z$};
  \node[box] (D) at (2,2) {$F_Z$};
  \node[box] (E) at (6,2) {$F_X$};
  \node[markedbox] (F) at (4,4) {$F_X$};
  \node[box] (G) at (-2,6) {$F_Z$};
  \node[box] (H) at (2,6) {$F_X$};
  \node[box] (I) at (6,6) {$F_X$};
  \node[markedbox] (J) at (0,8) {$F_Z$};
  \node[box] (K) at (6,8) {$F_X$};
  \draw[-stealth] (A) to[out=150,in=-60] node[right=2mm]{$\cm{copy}_Z$}(C);
  \draw[-stealth] (A) to[out=30,in=-120] (D);
  \draw[-stealth] (B) to (E);
  \draw[-stealth] (D) to[out=60,in=-150] node[right=0.2mm]{$\inv~\cm{copy}_X$}(F);
  \draw[-stealth] (E) to[out=120,in=-30] (F);
  \draw[-stealth] (F) to[out=150,in=-60] node[right=2mm]{$\cm{copy}_X$}(H);
  \draw[-stealth] (F) to[out=30,in=-120] (I);
  \draw[-stealth] (H) to[out=120,in=-30] (J);
  \draw[-stealth] (C) to (G);
  \draw[-stealth] (G) to[out=60,in=-150] node[right=0.5mm]{$\inv~\cm{copy}_Z$}(J);
  \draw[-stealth] (I) to (K);
\end{tikzpicture}\end{center}
\end{wrapfigure}
Recall however that the amalgamated language allows arbitrary
interleaving of expressions from either copy of $\PiLang$. In other
words, it is possible to apply $\cm{copy}_Z$ to $F_X$ or the inverse
of $\cm{copy}_X$ to $(F_Z,F_X)$. Specifically, consider the
execution diagram on the right.  By the laws of classical structures
we know that the two copying actions in the execution diagram
duplicate the given value but, a priori, we have no idea what would
happen in the two magenta nodes. However if we require that the entire
diagram behaves as the identify, there is one choice for the marked
values that forces $\inv~\cm{copy}_X~(F_Z,F_X) = F_X$ and
$\inv~\cm{copy}_Z~(F_Z,F_X) = F_Z$.  Formally, this requirement
is expressed by the complementarity equation with the following
graphical representations:

  \begin{align*}
    \begin{aligned}\begin{tikzpicture}[scale=.75,every node/.style={scale=0.75}]
      \node (A) at (0,0) {};
      \node (B) at (1.75,0) {};
      \node (b1) [blackdot] at (0,-1) {};
      \node (w1) [whitedot] at (1,-2) {};
      \node (w2) [whitedot] at (1,-3) {};
      \node (b2) [blackdot] at (0,-4) {};
      \node (C) at (0,-5) {};
      \node (D) at (1.75,-5) {};
      \draw (A.center) to (b1.center);
      \draw (b1.center) to [out=right, in=left] (w1.center);
      \draw (w1.center) to (w2.center);
      \draw (w2.center) to [out=left, in=right] (b2.center);
      \draw (b2.center) to (C.center);
      \draw (w2.center) to [out=right, in=up] (D.center);
      \draw (w1.center) to [out=right, in=down] (B.center);
      \draw (b1.center) to [out=left, in=left] (b2.center);
      \node (b1) [whitedot,fill=magenta] at (0,-1) {};
      \node (w1) [whitedot,fill=cyan] at (1,-2) {};
      \node (w2) [whitedot,fill=cyan] at (1,-3) {};
      \node (b2) [whitedot,fill=magenta] at (0,-4) {};
    \node[below right=-0.7mm] at (w1) {$\cm{copy}_Z$};
    \node[above right=-0.7mm] at (w2) {$\inv~\cm{copy}_Z$};
    \node[above left] at (b1) {$\inv~\cm{copy}_X$};
    \node[below left] at (b2) {$\cm{copy}_X$};
    \end{tikzpicture}\end{aligned}
    &\qquad=\qquad\;
    \begin{aligned}\begin{tikzpicture}[scale=.75]
      \draw (0,0) to ++(0,5);
      \draw (1,0) to ++(0,5);
    \end{tikzpicture}\end{aligned}
  \end{align*}

Remarkably, requiring that the semantics of our language satisfies
this equation induces full quantum behaviour: the language becomes
computationally universal for quantum computing. The informal
argument, to be made precise in Thm.~\ref{thm:canonicity}, follows.

\begin{theorem}[Canonicity (informal)]
  If a categorical semantics $\sem{-}$ for our language satisfies the
  classical structure laws, the execution laws, and the
  complementarity law, then the language is computationally universal
  for quantum computing.
\end{theorem}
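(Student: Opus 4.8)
The plan is to reduce computational universality to the problem of exhibiting a known universal set of quantum gates among the expressible programs, and then to check that the stated laws force the denotations of a few generators to be exactly those gates. I would target the gate set consisting of the Toffoli and Hadamard gates, which is computationally universal for quantum computing by the Shi--Aharonov theorem. The argument then has three parts: (i) recover every classical reversible gate, and in particular Toffoli, from the $\cm{arr}_Z$ copy of $\PiLang$; (ii) show that complementarity, together with the classical-structure and execution laws, forces $\cm{arr}_\phi~\swapp$ to denote a Hadamard gate; and (iii) assemble these with the state $\cm{zero}$ and effect $\cm{assertZero}$ and invoke the universality result.

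For (i), since $\PiLang$ is universal for classical reversible computation on finite types it expresses every reversible Boolean map, in particular the doubly-controlled negation $\ctrlgate~(\ctrlgate~\swapp)$. The execution laws pin $F_Z$ and $T_Z$ to the classical false and true of the computational basis (the second and third laws say a $Z$-control fires exactly on $T_Z$), so $\cm{arr}_Z~(\ctrlgate~(\ctrlgate~\swapp))$ denotes the Toffoli gate, and more generally $\cm{arr}_Z$ realises the whole group of classical reversible circuits as permutation unitaries.

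For (ii), the classical-structure laws of \cref{fig:stringdiagrams1} make $\cm{copy}_Z$ and $\cm{copy}_X$ special commutative dagger-Frobenius algebras on the object interpreting $\twot$, which by the standard representation theorem for such algebras in finite-dimensional Hilbert spaces correspond to orthonormal bases $\{F_Z,T_Z\}$ and $\{F_X,T_X\}$ of a two-dimensional space. The complementarity equation is precisely the assertion that these two Frobenius structures are complementary, i.e. that the two bases are mutually unbiased; in dimension two this fixes the change-of-basis matrix up to phases, and the reality constraint extracted from the fourth execution law removes the remaining freedom, leaving the real Hadamard $\tfrac{1}{\sqrt{2}}\left(\begin{smallmatrix}1&1\\1&-1\end{smallmatrix}\right)$. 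Since $\cm{arr}_\phi~\swapp$ is by construction the change of basis carrying $\{F_Z,T_Z\}$ to $\{F_X,T_X\}$, its denotation is this Hadamard.

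The hard part is (ii): extracting genuine mutual unbiasedness, and then a concrete real Hadamard, from the single algebraic complementarity equation. This is where both the Frobenius representation theorem and a careful analysis of the execution laws are indispensable, since one must argue that the phase and degenerate solutions left open by bare complementarity are excluded --- exactly the point at which reality of $\cm{arr}_\phi~\swapp$ is forced. Once (i) and (ii) hold, (iii) is comparatively routine: by the first execution law $\cm{assertZero}$ is the adjoint effect of $\cm{zero}$, so $\cm{zero}$ supplies ancillae in the state $F_Z$ on which Toffoli and Hadamard may be applied with ancilla support; the Shi--Aharonov theorem then gives universality for real quantum computing, and the standard constant-overhead encoding of complex amplitudes into a real register upgrades this to full computational universality for quantum computing.
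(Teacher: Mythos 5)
Your overall architecture is the same as the paper's: express Toffoli through $\cm{arr}_Z$, pin down the denotation of $\cm{arr}_\phi~\swapp$ as a real basis-changing unitary, and conclude by Theorem~\ref{thm:aharonov}. Parts (i) and (iii) are sound. The gap is in part (ii), and it is genuine. The fourth execution law is not a reality constraint: unwinding $\cm{zero} \ggg \cm{arr}_\phi~\swapp \ggg \cm{assertOne} = \cm{one} \ggg \cm{arr}_\phi~\swapp \ggg \cm{assertZero}$ gives $\bra{1}U\ket{0} = \bra{0}U\ket{1}$ for $U = \sem{\cm{arr}_\phi~\swapp}$, i.e.\ that $U$ is \emph{symmetric}, not that it is real. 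And mutual unbiasedness plus symmetry does not force reality: the matrix $U = \tfrac{1}{\sqrt{2}}\left(\begin{smallmatrix} 1 & i \\ i & 1 \end{smallmatrix}\right)$ is unitary, symmetric, and carries the $Z$-basis to a complementary orthonormal basis, yet it is neither real nor Hadamard up to conjugation by $X$ and $Z$, so the real-gate form of Theorem~\ref{thm:aharonov} cannot be applied to it. The inference you rely on --- ``complementarity fixes $U$ up to phases, and the reality constraint from the fourth execution law removes the remaining freedom'' --- is therefore invalid as stated.

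What rescues the argument, and what the paper's proof uses, is a hypothesis you never invoke: $\cm{arr}_\phi~\swapp$ denotes the symmetry of a symmetric monoidal structure (equivalently, $\swapp$ is self-inverse and the semantics preserves this), so $U$ is \emph{involutive}, hence Hermitian. Proposition~\ref{prop:charhad} characterises the involutive unitary changes of basis to a complementary basis as two one-parameter families with off-diagonal entries $e^{\pm i\theta}$; only within that family does the symmetry constraint force $e^{i\theta} = e^{-i\theta}$, i.e.\ $e^{i\theta} = \pm 1$, leaving \emph{four} real matrices (Hadamard up to conjugation by $X$ and $Z$), not the single ``real Hadamard'' you claim --- and note that involutivity indeed rules out my counterexample, whose square is $iX$. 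Alternatively, one could repair (ii) without appealing to involutivity by exploiting the classical-structure laws for $\cm{copy}_X$ more fully: since $\cm{copy}_X$ is defined as $\cm{copy}_Z$ conjugated by $\cm{arr}_\phi~\swapp$, its associativity/Frobenius laws force $U^2$ to be a diagonal matrix of phases (the counterexample above violates associativity of $\cm{copy}_X$), which together with unbiasedness and symmetry again pins $U$ to Hadamard up to conjugation and a global phase. But your proposal performs neither analysis: it uses the representation theorem only to obtain orthonormal bases, which is not enough to exclude the non-real solutions.
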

\begin{proof} (Informal) 
  Without loss of generality, we may assume that $\{F_Z,T_Z\}$ is the
  computational~$Z$ basis and that $\cm{arr}_\phi$ rotates to another
  basis related to the $Z$ basis by $\cm{arr}_\phi~\swapp$. Then the
  laws of classical structures ensure that $\{F_Z,T_Z\}$ and
  $\{F_X,T_X\}$ form orthonormal bases; the execution equations ensure
  that these sets of values are copyable by $\cm{copy}_Z$ and
  $\cm{copy}_X$ respectively and that the change of basis is real; and
  the complementarity equation further ensures that they form a
  complementarity pair of bases for $\mathbb{C}^2$.

  Since $\cm{arr}_\phi~\swapp$ is the symmetry of a monoidal
  structure, it is involutive, and by the above, it is a real change of
  basis between the $Z$ basis and an orthonormal basis complementary
  to it. Hadamard can be characterised, up to conjugation by $X$ and
  $Z$, as the unique real involutive unitary change-of-basis between
  the $Z$ basis and any orthonormal basis complementary to it (see
  Proposition~\ref{prop:charhad}). This yields four possible
  interpretations of $\cm{arr}_\phi~\swapp$, but since all of them are
  real and basis changing, it follows by
  Theorem~\ref{thm:aharonov} that either of them are universal
  in conjunction with the Toffoli gate.
\end{proof}

\section{Categories; Quantum Computing; Computational Universality} 
\label{sec:background}

Category theory deals with abstractions in a uniform and systematic
way, and is widely used to provide compositional programming
semantics. We may think of objects as types, and morphisms as
terms. For the basics we refer to~\citet{leinster:categorytheory}.  To
fix notation and the background knowledge assumed, we briefly discuss
the types of categories that are useful in reversible programming:
dagger categories and rig categories.  Then we will discuss quantum
computing in categorical terms, complementarity, and computational
universality.

\subsection{Dagger Categories and Groupoids}

A morphism $f \colon A \to B$ is \emph{invertible}, or an
\emph{isomorphism}, when there exists a morphism $f^{-1} \colon B \to
A$ such that $f^{-1} \circ f = \id_A$ and $f \circ f^{-1} =
\id_B$. This inverse $f^{-1}$ is necessarily unique. A category where
every morphism is invertible is called a \emph{groupoid}.

At first sight, groupoids form the perfect semantics for reversible
computing. But every step in a computation being reversible is
slightly less restrictive than it being invertible. For each step $f
\colon A \to B$, there must still be a way to `undo' it, given by
$f^\dag \colon B \to A$. This should also still respect composition,
in that $(g \circ f)^\dag = f^\dag \circ g^\dag$ and $\id_A^\dag =
\id_A$. Moreover, a `cancelled undo' should not change anything:
$f^{\dag\dag}=f$. Therefore every morphism $f$ has a partner $f^\dag$.
A category equipped with such a choice of partners is called a
\emph{dagger category}.

A groupoid is an example of a dagger category, where every morphism is \emph{unitary}, that is, $f^\dag = f^{-1}$. Think, for example, of the category $\cat{FinBij}$ with finite sets for objects and bijections for morphisms. But not every dagger category is a groupoid. For example, the dagger category $\cat{PInj}$ has sets as objects, and partial injections as morphisms. Here, the dagger satisfies $f \circ f^\dag \circ f = f$, but not necessarily $f^\dag \circ f = \id$ because $f$ may only be partially defined. In a sense, the dagger category $\cat{PInj}$ is the universal model for reversible computation~\cite{kastl:inverse,heunen:ltwo}.

When a category has a dagger, it makes sense to demand that every other structure on the category respects the dagger, and we will do so. The theory of dagger categories is similar to the theory of categories in some ways, but very different in others~\cite{heunenkarvonen:daggermonads}. 

\subsection{Monoidal Categories and Rig Categories}

Programming becomes easier when the programmer can express more programs natively. For example, it is handy to have type combinators like sums and products. Semantically, this is modelled by considering not mere categories, but monoidal ones. A \emph{monoidal category} is a category equipped with a type combinator that turns two objects $A$ and $B$ into an object $A \otimes B$, and a term combinator that turns two morphisms $f \colon A \to B$ and $f' \colon A' \to B'$ into a morphism $f \otimes f' \colon A \otimes A' \to B \otimes B'$. This has to respect composition and identities. Moreover, there has to be an object $I$ that acts as a unit for~$\otimes$, and isomorphisms $\alpha \colon A \otimes (B \otimes C) \to (A \otimes B) \otimes C$ and $\lambda \colon I \otimes A \to A$ and $\rho \colon A \otimes I \to A$. In a \emph{symmetric monoidal category}, there are additionally isomorphisms $\sigma \colon A \otimes B \to B \otimes A$. All these isomorphisms have to respect composition and satisfy certain coherence conditions, see~\cite[Chapter 1]{heunenvicary:cqt}. We speak of a \emph{(symmetric) monoidal dagger category} when the coherence isomorphisms are unitary.
Intuitively, $g \circ f$ models sequential composition, and $f \otimes g$ models parallel composition. For example, $\cat{FinBij}$ and $\cat{PInj}$ are symmetric monoidal dagger categories under cartesian product.

A \emph{rig category} is monoidal in two ways in a distributive fashion. More precisely, it has two monoidal structures $\oplus$ and $\otimes$, such that $\oplus$ is symmetric monoidal but $\otimes$ not necessarily, and there are isomorphisms $\delta_L \colon A \otimes (B \oplus C) \to (A \otimes B) \oplus (A \otimes C)$ and $\delta_0 \colon A \otimes 0 \to 0$. These isomorphisms again have to respect composition and certain coherence conditions~\cite{laplaza:distributivity}. For example, $\cat{FinBij}$ and $\cat{PInj}$ are not only monoidal under cartesian product, but also under disjoint union, and the appropriate distributivity holds. Intuitively, $f \oplus g$ models a choice between $f$ and $g$.

\subsection{Quantum Computing (categorically)}
\label{sub:quantise}

\newcommand{\hilbspace}{\ensuremath{A}}
\newcommand{\hilbspacep}{\ensuremath{B}}

Quantum computing with pure states is a specific kind of reversible
computing. Good references
are~\citet{yanofskymannucci:quantumcomputing,nielsenchuang:quantum}.
A quantum system is modelled by a finite-dimensional Hilbert space
\hilbspace. For example, \emph{qubits} are modelled by
$\mathbb{C}^2$. The category giving semantics to finite-dimensional
pure state quantum theory is therefore $\cat{FHilb}$, whose objects
are finite-dimensional Hilbert spaces, and whose morphisms are linear
maps. Categorical semantics for pure state quantum computing is the
groupoid $\cat{Unitary}$ of finite-dimensional Hilbert spaces as
objects with unitaries as morphisms. Both are rig categories under
direct sum $\oplus$ and tensor product $\otimes$. For example, a sum
type of a triple of qubits and a pair of qubits is modelled by
$(\mathbb{C}^2 \otimes \mathbb{C}^2 \otimes \mathbb{C}^2) \oplus
(\mathbb{C}^2 \otimes \mathbb{C}^2)$.

The pure \emph{states} of a quantum system modelled by a Hilbert space
\hilbspace\ are the vectors of unit norm, conventionally denoted by a \emph{ket}
$\ket{y} \in \hilbspace$. These are equivalently given by morphisms $\mathbb{C}
\to \hilbspace$ in $\cat{FHilb}$ that map $z \in \mathbb{C}$ to $z\ket{y} \in
\hilbspace$.  Dually, the functional $\hilbspace \to \mathbb{C}$ 
maps $x \in \hilbspace$ to the inner product $\braket{x|y}$ is
conventionally written as a \emph{bra} $\bra{x}$. Morphisms $\hilbspace \to
\mathbb{C}$ are also called \emph{effects}.

In fact, \cat{FHilb} is a dagger rig category. The \emph{dagger} of linear map $f \colon A \to B$ is uniquely determined via the inner product by $\braket{f(x)|y}=\braket{x|f^\dag(y)}$.
The dagger of a state is an effect, and vice versa. 
In quantum computing, pure states evolve along unitary gates. These are exactly the morphisms that are unitary in the sense of dagger categories in that $f^\dag \circ f = \id$ and $f \circ f^\dag = \id$, exhibiting the groupoid $\cat{Unitary}$ as a dagger subcategory of $\cat{FHilb}$.

Once orthonormal bases $\{\ket{i}\}$ and $\{\ket{j}\}$ for finite-dimensional Hilbert spaces \hilbspace\ and \hilbspacep\ are fixed, we can express morphisms $f \colon \hilbspace \to \hilbspacep$ as a matrix with entries $\braket{i|f|j}$. The dagger then becomes the complex conjugate transpose, the tensor product becomes the Kronecker product of matrices, and the direct sum becomes a block diagonal matrix. The Hilbert spaces $\mathbb{C}^n$ come with the canonical \emph{computational basis} consisting of the $n$ vectors with a single entry 1 and otherwise 0, also called the \emph{$Z$-basis} and denoted $\{\ket{0},\ldots,\ket{n-1}\}$.

In fact, there is a way to translate the category $\cat{FPInj}$ of
finite sets and partial injections to the category $\cat{FHilb}$, that
sends $\{0,\ldots,n-1\}$ to $\mathbb{C}^n$. This translation preserves
composition, identities, tensor product, direct sum, and dagger: it is
a dagger rig functor $\ell^2 \colon \cat{FPInj} \to \cat{FHilb}$, that
restricts to a dagger rig functor $\cat{FinBij} \to \cat{Unitary}$.
Thus reversible computing ($\cat{FinBij}$) is to classical reversible
theory ($\cat{FPInj}$) as quantum computing ($\cat{Unitary}$) is to
quantum theory ($\cat{FHilb}$).  In particular, in this way, the
Boolean controlled-controlled-not function (known as the \emph{Toffoli
  gate}), which is universal for reversible computing, transfers to a
quantum gate with the same name that acts on vectors.

There are many ways to embed classical reversible computing into
quantum computing like this, one for every uniform choice of
computational basis~\cite{heunen:ltwo}. If we only care about computation with qubits (rather than qutrits or the more general qudits), we could also send a bijection $f \colon
\{0,\ldots,2n-1\} \to \{0,\ldots,2n-1\}$ to $(H^{\otimes n})^\dag
\circ \ell^2(f) \circ H^{\otimes n}$, where $H$ is the Hadamard
matrix, to compute in the $X$ basis rather than the computational
($Z$) basis.

\subsection{Complementarity}
\label{subsec:complementarityagain}
\label{sec:complementarityagain}

A choice of basis $\{\ket{i}\}$ on an $n$-dimensional
Hilbert space \hilbspace\ defines a morphism $\delta \colon\hilbspace \to \hilbspace \otimes \hilbspace$ in $\cat{FHilb}$ that maps $\ket{i}$ to $\ket{ii} = \ket{i} \otimes \ket{i}$. In fact, the morphisms arising this way are characterised by certain equational laws that make them into a so-called \emph{classical structures}, or \emph{commutative special dagger Frobenius structures}~\cite[Chapter~5]{heunenvicary:cqt}. 
  \begin{align}\label{eq:classicalstructure1}
    (\delta \otimes \id_A) \circ \delta 
    &= (\id_A \otimes \delta) \circ \delta &
    \sigma_{A,A} \circ \delta &= \delta \\
    \label{eq:classicalstructure2}
    (\id_A \otimes \delta^\dag) \circ (\delta \otimes \id_A) 
    &= (\delta^\dag \otimes \id_A) \circ (\id_A \otimes \delta) &
    \delta^\dag \circ \delta &= \id_A 
  \end{align}
(To be precise, we deal with nonunital Frobenius structures, but in finite dimension there automatically exists a uniquely determined unit after all~\cite[Prop.~7]{abramskyheunen:hstar}.) In particular, because \hilbspace\ is finite-dimensional, the basis vectors determine a state $\sum_{i=1}^n \ket{i}$ that is in \emph{uniform superposition}. For the computational basis for qubits, this state is also denoted $\ket{+}=\frac{1}{\sqrt{2}}(\ket{0}+\ket{1})$. Similarly, we shorthand $\ket{-}=\frac{1}{\sqrt{2}}(\ket{0}-\ket{1})$. Now $\{\ket{+},\ket{-}\}$ forms an orthogonal basis for the qubit $\mathbb{C}^2$, called the \emph{$X$-basis}, that is different from the $Z$-basis. 

As far as picking a basis to treat as `the' computational basis is concerned, all bases are created equal. But once that arbitrary choice is fixed, some other bases are more equal than others. The $Z$-basis and the $X$-basis are \emph{mutually unbiased}, meaning that a state of the one basis and an effect of the other basis always give the same inner product: $\braket{0|+}=\braket{1|+}=\braket{0|-}=\braket{1|-}$. That is, measuring in one basis a state prepared in the other gives no information at all. This can also be expressed by an equation between the associated Frobenius structures $\delta_1,\delta_2\colon \hilbspace \to \hilbspace \otimes \hilbspace$ (see~\cite{coeckeduncan:zx} or~\cite[Chapter~6]{heunenvicary:cqt}):
  \begin{equation}\label{eq:complementarity}
    (\delta_1^\dag \otimes \id_A) \circ (\id_A \otimes \delta_2) \circ
    (\id_A \otimes \delta_2^\dag) \circ (\delta_1 \otimes \id_A) = \id_{A \otimes A} 
  \end{equation}
(To be precise, we adopt a simplified version using~\citet[Prop.~6.7]{heunenvicary:cqt}, and the fact that in finite dimension any injective morphism $\hilbspace \to \hilbspace$ is an isomorphism.)

Two complementary classical structures $\hilbspace \to \hilbspace
\otimes \hilbspace$ determine a unitary gate $\hilbspace
\to \hilbspace$, corresponding to the linear map that turns the basis
corresponding to one classical structure into the basis corresponding
to the other. In the case of the $Z$ and $X$ bases, this is the
Hadamard gate. Notice that the Hadamard gate is involutive: $H \circ
H=\id$.  It is well-known that there are three mutually unbiased bases
for qubits $\mathbb{C}^2$---namely the $Z$ and $X$ bases together with
another basis $Y$ we haven't discussed here---and there cannot be
four. Out of these three, only $Z$ and $X$ induce a unitary gate that
is involutive.

\subsection{Computational Universality}

The inner product of a Hilbert space \hilbspace\ lets us measure how
close two vectors $\ket{x},\ket{y} \in A$ are by looking at the norm
of their difference $\|x-y\|^2=\braket{x-y~|~x-y}$.  This leads to the
dagger rig category $\cat{Contraction}$ of finite-dimensional Hilbert
spaces and contractions: linear maps $f \colon A \to B$ satisfying
$\|f(a)\| \leq \|a\|$ for all $a \in A$. In $\cat{Contraction}$, the
notion of state is relaxed from a vector of unit length to a vector of
\emph{at most} unit length (these are sometimes called
\emph{subnormalised states} or simply \emph{substates}). This
categorical model adds to pure state quantum computation the ability
to terminate without a useful outcome, where the norm $\|x\|$ of a
state $x$ signifies the probability of a nondegenerate outcome when
measured; interpreting a state of norm $0$ as complete failure,
$\cat{Contraction}$ models pure state quantum computing where
computations may not terminate.

A finite set of unitary gates $\{U_1,\ldots,U_k\}$ on qubits is
\emph{strictly universal} when for any unitary $U$ and any
$\varepsilon>0$ there is a sequence of gates $U_{i_1} \circ \cdots
\circ U_{i_p}$ with distance at most $\varepsilon$ to $U$. This means
that any computation whatsoever can be approximated by a circuit from
the given set of gates up to arbitrary accuracy. The set is
\emph{computationally universal} when it can be used to simulate,
possibly using ancillas and/or encoding, to accuracy within
$\varepsilon>0$ any quantum circuit on $n$ qubits and $t$ gates from a
strictly universal set with only polylogarithmic overhead in $n$, $t$,
and $\tfrac{1}{\varepsilon}$. This means that the gate set can perform
general quantum computation without too much overhead.

\begin{theorem}\cite{aharonov:toffolihadamard,10.5555/2011508.2011515}\label{thm:aharonov}
  The Toffoli and Hadamard gate set is computationally universal. In fact, Toffoli is computationally universal in conjunction with any real basis-changing single-qubit unitary gate.
\end{theorem}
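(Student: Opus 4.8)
The plan is to prove something strictly weaker than strict universality, and for a good reason: every circuit built from Toffoli and a \emph{real} single-qubit gate is a product of matrices with real entries, hence itself a real orthogonal matrix, so no such circuit can ever approximate a genuinely complex unitary like a phase gate. The argument therefore splits into two independent halves. First, a reduction showing that complex quantum computation is simulated by \emph{real} quantum computation with only constant overhead, so that it suffices to approximate real orthogonal gates. Second, a density argument showing that Toffoli together with any real basis-changing single-qubit gate $G$ generates a dense subgroup of the relevant real special orthogonal group. Computational universality in the sense defined above then follows by combining the two with a Solovay--Kitaev efficiency estimate. The stated $\{$Toffoli$, H\}$ case is recovered by taking $G = H$.

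First I would make the complex-to-real reduction precise. Encoding a complex amplitude $a + bi$ by a real pair costs one extra ``phase'' qubit, and sends an $n$-qubit unitary $U = A + iB$ with $A,B$ real to the real $(n+1)$-qubit orthogonal matrix $\tinymatrix{A & -B \\ B & A}$. This map preserves products and the action on encoded states, and since $\det_{\mathbb{R}}\tinymatrix{A & -B \\ B & A} = \lvert\det_{\mathbb{C}} U\rvert^2 = 1$, the image lands in the connected component $SO(2^{n+1})$. Because it costs only a single ancilla, it respects the polylogarithmic overhead budget, and the task reduces to approximating to accuracy $\varepsilon$ the real orthogonal matrix encoding each gate of the source circuit.

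Second I would establish density in $SO(N)$ with $N = 2^{n+1}$. Toffoli alone realises, possibly with ancillas, every even permutation of computational basis states, hence a large supply of signed permutation matrices. A real orthogonal single-qubit $G$ that genuinely mixes $\ket{0}$ and $\ket{1}$ has all four entries nonzero, and so supplies a nontrivial two-level rotation in one coordinate plane. Conjugating this rotation by the permutations produced by Toffoli moves it into every coordinate plane, and since rotations in all coordinate planes generate $SO(N)$, it remains only to upgrade this to a \emph{dense} subgroup. For that one produces from $G$ and the permutations a two-level rotation by an angle $\theta$ that is an irrational multiple of $\pi$: its powers are then dense in the relevant $SO(2)$, their conjugates are dense in every coordinate $SO(2)$, and the closed subgroup they generate is all of $SO(N)$. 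For $G = H$ this uses the explicit gadget rotating by $\theta$ with $\cos\theta = 3/5$, which is an irrational multiple of $\pi$ by Niven's theorem.

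The main obstacle is exactly this last point for a \emph{general} $G$: one must exhibit a word in $G$ and the classical permutations realising a two-level rotation by an angle incommensurable with $\pi$, and simultaneously rule out the degenerate gates for which no such word exists. These degenerate gates are precisely the non-basis-changing ones --- the signed permutation matrices such as $I$, $X$, and $Z$ --- for which every word is again a signed permutation, so that no superposition, let alone an irrational rotation, is ever reached; this is where the hypothesis ``basis-changing'' does its work. Overcoming the obstacle requires a Niven-type irrationality criterion applied to the (algebraic) cosine produced by the gadget, generalising the $\cos\theta = 3/5$ computation. Once density in $SO(N)$ is secured, the Solovay--Kitaev theorem approximates each encoded gate to accuracy $\varepsilon/t$ using $\mathrm{polylog}(t/\varepsilon)$ gates from $\{$Toffoli$, G\}$; summing over the $t$ gates of the source circuit yields the polylogarithmic overhead demanded by computational universality, completing the argument.
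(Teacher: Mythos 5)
First, a point of calibration: the paper does not prove this theorem at all --- it imports it from the cited works of Aharonov and Shi --- so your proposal has to be measured against those proofs. Your skeleton (reduce complex to real via $U = A + iB \mapsto \bigl(\begin{smallmatrix} A & -B \\ B & A \end{smallmatrix}\bigr)$ at the cost of one qubit, then prove density of the generated subgroup of the orthogonal group, then invoke Solovay--Kitaev for the polylogarithmic overhead) is exactly theirs, and the reduction half is correct. The density half, however, has two genuine gaps. The first is the claim that a basis-changing real $G$ ``supplies a nontrivial two-level rotation in one coordinate plane,'' which permutation-conjugation then ``moves into every coordinate plane.'' Applied to one qubit out of $n+1$, the gate acts as $G \otimes I_{2^n}$, which rotates $2^n$ orthogonal planes simultaneously; conjugating by permutation matrices preserves this spectral structure, so no two-level operation is ever produced this way. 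The missing idea --- the heart of Shi's proof --- is the gadget that conjugates the \emph{target} of a multiply-controlled NOT by $G$ and composes with another multiply-controlled NOT: $\Lambda_k(G X G^{-1}) \cdot \Lambda_k(X) = \Lambda_k(G X G^{-1} X)$, a genuine two-level rotation by twice the angle of $G$, where $\Lambda_k(X)$ is built from Toffoli and ancillas.

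Second, the irrationality step fails precisely on the headline case $G = H$. Your proposed anchor --- a Toffoli--Hadamard word rotating by $\arccos(3/5)$ --- cannot exist: every entry of every word in Toffoli and $H$ lies in the ring $\mathbb{Z}[1/\sqrt{2}]$, and $3/5$ is not in that ring (it would force $5a = 3 \cdot 2^k$ for integers $a, k$). Worse, the corrected gadget above yields for $G = H$ the two-level rotation $\Lambda_k(HXHX) = \Lambda_k(ZX)$, whose angle is $\pi/2$, a \emph{rational} multiple of $\pi$; and the matrices $\Lambda_k(X)$, $\Lambda_k(Z)$, $\Lambda_k(ZX)$ and all permutations are signed permutation (monomial) matrices, a class closed under products, so the group generated by your ingredients --- one such two-level rotation plus its permutation conjugates --- is \emph{finite} and provably not dense. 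Any correct proof must use the uncontrolled, non-monomial $H \otimes I$ in an essential way inside the density argument, which your plan never does: Shi's proof splits into cases (when the angle of $G$ is an irrational multiple of $\pi$, the gadget alone suffices; when it is rational, as for $H$, a separate argument is needed), and Aharonov's proof likewise extracts an element of infinite order from words involving naked Hadamards. Until that case is handled, the proposal does not establish the theorem even for $\{\text{Toffoli}, H\}$, which is the instance the rest of the paper's canonicity argument relies on.
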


Notice that this theorem only needs sequential composition $\circ$ and parallel composition $\otimes$, and not sum types $\oplus$. Correspondingly, it only applies to Hilbert spaces of dimension $2^n$.

\section{The Classical Core: $\PiLang$}
\label{sec:pi}

\noindent Our eventual goal is to define a computationally universal
quantum programming language from two copies of a classical reversible
language. In this section, we review the syntax and semantics
of~$\PiLang$~\cite*{jamessabry:infeff}, a language that is universal
for reversible computing over finite types and whose semantics is
expressed in the rig groupoid of finite sets and bijections.

\begin{figure}[t]
\begin{align*}
  b & \defeq 0 \mid 1 \mid b+b \mid b \times b & \text{(value types)} \\
  t & \defeq b \fromto b & \text{(combinator types)} \\
  i & \defeq \pid \mid \swapp \mid \assocp \mid
  \associp \mid \unitepl \mid \unitipl & \text{(isomorphisms)} \\
    & \mid \swapt \mid \assoct \mid \associt 
      \mid \unitetl \mid \unititl \\
    & \mid \dist \mid \factor \mid \absorbl \mid \factorzr \\
  c & \defeq i \mid c \seqq c \mid c + c \mid c \times c \mid \inv~c & \text{(combinators)}
\end{align*}
\caption{\label{fig:pi}$\PiLang$ syntax}
\end{figure}

\begin{figure}[t]
\begin{equation*}
  \begin{array}{rcrclcl}
    \pid & \of & b &\fromto& b & \of & \pid \\
    \swapp & \of & b_1 + b_2 &\fromto& b_2 + b_1 & \of & \swapp \\
    \assocp & \of & (b_1 + b_2) + b_3 &\fromto& b_1 + (b_2 + b_3) & 
    \of & \associp \\
    \unitepl & \of & 0 + b &\fromto& b & \of & \unitipl \\
    \swapt & \of & b_1 \times b_2 &\fromto& b_2 \times b_1 & \of &
    \swapt \\
    \assoct & \of & (b_1 \times b_2) \times b_3 &\fromto& b_1 \times
    (b_2 \times b_3) & \of & \associt \\
    \unitetl & \of & 1 \times b &\fromto& b & \of & \unititl \\
    \dist & \of & (b_1 + b_2) \times b_3 &\fromto& (b_1 \times b_3)
    + (b_2 \times b_3) & \of & \factor \\
    \absorbl & \of & b \times 0 &\fromto& 0 & \of & \factorzr
  \end{array}
\end{equation*}
\begin{equation*}
  \frac{c_1 \of b_1 \fromto b_2 \quad c_2 \of b_2 \fromto b_3}{c_1 \seqq c_2 
    \of b_1 \fromto b_3}
  \qquad\qquad\qquad
  \frac{c \of b_1 \fromto b_2}{\inv~c \of b_2 \fromto b_1}
\end{equation*}
\begin{equation*}
  \frac{c_1 \of b_1 \fromto b_3 \quad c_2 \of b_2 \fromto b_4}{c_1 + c_2 
  \of b_1 + b_2 \fromto b_3 + b_4} \qquad
  \frac{c_1 \of b_1 \fromto b_3 \quad c_2 \of b_2 \fromto b_4}{c_1 \times c_2 
  \of b_1 \times b_2 \fromto b_3 \times b_4}
\end{equation*}
\caption{\label{fig:pitypes}Types for $\PiLang$ combinators}
\end{figure}

\begin{figure}[t]
\begin{align*}
  \ctrlgate~c &= \dist \seqq \pid + (\pid \times c ) \seqq \factor \\
  \xgate &= \swapp \\
  \cxgate &= \ctrlgate~\swapp \\
  \ccxgate &= \ctrlgate~\cxgate
\end{align*}
\caption{\label{fig:pid}Derived $\PiLang$ constructs.}
\end{figure}

\subsection{Syntax and Types}
\label{sub:ctrl}

In reversible boolean circuits, the number of input bits matches the
number of output bits. Thus, a key insight for a programming language
of reversible circuits is to ensure that each primitive operation
preserves the number of bits, which is just a natural number. The
algebraic structure of natural numbers as the free commutative
semiring (or, commutative rig), with $(0,+)$ for addition, and
$(1,\times)$ for multiplication then provides sequential, vertical,
and horizontal circuit composition. Generalizing these ideas, a typed
programming language for reversible computing should ensure that every
primitive expresses an isomorphism of finite types, \textit{i.e.}, a
permutation. The syntax of the language~$\PiLang$ in Fig.~\ref{fig:pi}
captures this concept.  Type expressions $b$ are built from the empty
type (0), the unit type (1), the sum type ($+$), and the product type
($\times$). A type isomorphism $c : b_1 \fromto b_2$ models a
reversible circuit that permutes the values in $b_1$ and $b_2$. These
type isomorphisms are built from the primitive identities $i$ and
their compositions. These isomorphisms are not ad hoc: they correspond
exactly to the laws of a \emph{rig} operationalised into invertible
transformations~\cite{10.1007/978-3-662-49498-1,CARETTE202215} which
have the types in Fig.~\ref{fig:pitypes}. Each line in the top part of
the figure has the pattern $c_1 : b_1 \fromto b_2 : c_2$
where $c_1$ and $c_2$ are self-duals; $c_1$ has type $b_1
\fromto b_2$ and $c_2$ has type $b_2 \fromto b_1$.

\medskip To see how this language expresses reversible circuits, we
first define types that describe sequences of booleans
($\bool^{n}$). We use the type $\bool = 1 + 1$ to represent booleans
with the left  injection representing false and the right injection
representing true. Boolean negation (the \xgate-gate) is
straightforward to define using the primitive combinator $\swapp$. We
can represent $n$-bit words using an n-ary product of boolean
values. To express the \cxgate- and \ccxgate-gates we need to encode a
notion of conditional expression. Such conditionals turn out to be
expressible using the distributivity and factoring identities of rigs
as shown in Fig.~\ref{fig:pid}. An input value of type $\bool \times
b$ is processed by the $\dist$ operator, which converts it into a
value of type $(1 \times b) + (1 \times b)$. Only in the right branch,
which corresponds to the case when the boolean is true, is the
combinator~\ensuremath{c} applied to the value of type~\ensuremath{b}.
The inverse of $\dist$, namely $\factor$ is applied to get the final
result. Using this conditional, \cxgate\ is defined as
$\ctrlgate~\xgate$ and the Toffoli \ccxgate\ is defined as
$\ctrlgate~\cxgate$. Because $\PiLang$ can express the Toffoli gate
and can generate ancilla values of type $1$ as needed, it is universal
for classical reversible circuits using the original 
construction of~\citet{10.1007/3-540-10003-2104}.

\begin{theorem}[$\PiLang$ Expressivity]
  $\PiLang$ is universal for classical reversible circuits, \textit{i.e.}, boolean
  bijections $\bool^n \to \bool^n$ (for any natural number $n$).
\end{theorem}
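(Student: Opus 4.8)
The plan is to separate soundness from completeness. Soundness---that every combinator $c \colon \bool^n \fromto \bool^n$ denotes a bijection of the finite set $\bool^n$---is immediate from the typing rules, so all the content lies in completeness: every bijection $f$ arises as the denotation of some $c$. Since the endo-combinators on $\bool^n$ denote a subgroup of the symmetric group $S_{2^n}$ (closed under $\seqq$, with inverses supplied by $\inv$ and identity by $\pid$), it is enough to exhibit combinators whose denotations generate $S_{2^n}$.

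First I would catalogue the gates $\PiLang$ expresses on $\bool^n$. Negation on any single coordinate is $\swapp$ routed into place with $\swapt$, $\assoct$, and $\associt$; these already realise every XOR-by-a-constant map, i.e. the translations of $\bool^n$. Iterating the controlled construct $\ctrlgate~(\ctrlgate~(\cdots~\swapp))$ with $n-1$ controls yields a combinator of type $\bool^n \fromto \bool^n$ whose denotation is a single transposition: it swaps the all-ones string with the string equal to all-ones except in the target coordinate, and fixes the other $2^n - 2$ points. I would stress that this is built directly, with no ancilla of type $\bool$; the only units that appear are the factors of $1$ created and destroyed as $\ctrlgate$ unfolds through $\dist$, $\unitetl$, and $\factor$.

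Next I would generate the rest. Conjugating the distinguished transposition by coordinate negations sends its control pattern ``all ones'' to an arbitrary pattern, and re-routing the target coordinate with $\swapt$ chooses which bit flips; between them these produce the transposition of any two strings differing in exactly one coordinate---that is, every edge of the Boolean hypercube. The final ingredient is the standard group-theoretic fact that the transpositions indexed by the edges of a connected graph generate the full symmetric group on its vertices. As the hypercube is connected, these edge-transpositions generate $S_{2^n}$, and composing them with $\seqq$ realises the given $f$.

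I expect the main obstacle to be the bookkeeping behind the second and third paragraphs: checking that $\ctrlgate$, once unfolded through distributivity, truly implements the intended multiply-controlled negation on the standard product structure, and that the structural isomorphisms suffice to route any chosen target and control pattern into the shape $\ctrlgate$ expects---so that no genuine ancilla is required and the generation argument goes through verbatim. A lighter alternative, matching the remark preceding the theorem, is to invoke Toffoli's classical universality construction~\cite{10.1007/3-540-10003-2104} directly, using that $\PiLang$ expresses $\ccxgate$ (Fig.~\ref{fig:pid}) and can supply the unit-typed scratch space that embedding requires; the argument above is essentially a self-contained specialisation of that construction to $\PiLang$.
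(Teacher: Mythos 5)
Your proof is correct, but it follows a genuinely different route from the paper's. The paper handles this theorem in a single sentence preceding it: \PiLang\ expresses the Toffoli gate \ccxgate\ (Fig.~\ref{fig:pid}) and can introduce ancillas of type $1$, so universality follows from the original construction of \citet{10.1007/3-540-10003-2104} --- i.e., exactly the ``lighter alternative'' you sketch in your last paragraph. Your main argument is instead a self-contained generation argument: iterated $\ctrlgate$ gives $(n-1)$-controlled negations without any ancilla, conjugation by coordinate negations together with routing of the target produces every hypercube-edge transposition, and connectivity of the hypercube yields all of $S_{2^n}$. This buys something real. Toffoli's construction uses \emph{Boolean} ancillas held at constant values, and \PiLang, being a language of isomorphisms, cannot create a $\bool$ value out of nothing (that is precisely what $\cm{zero}$ later adds in \SPizheLang); ancillas of type $1$ carry no information. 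The distinction matters because of parity: NOT, CNOT and \ccxgate\ acting on $n \geq 4$ wires all induce \emph{even} permutations of $\bool^n$, so no circuit over these gates alone realises a single transposition; one needs either Boolean constants or, as in your proof, multiply-controlled gates of unbounded arity, which $\PiLang$'s $\ctrlgate$ supplies directly. So your route is not only more elementary and detailed, it also closes a gap that a literal reading of the paper's one-line proof leaves open. The bookkeeping you flag --- that $\ctrlgate$ unfolded through $\dist$ and $\factor$ really is the intended controlled gate, and that $\swapt$/associativity routing suffices --- is the right thing to verify and does go through; the only slip is cosmetic, since $\unitetl$ does not actually occur in the unfolding of $\ctrlgate$, whose unit factors stay inside the sum.
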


\begin{figure}
  \begin{equation*}
    \begin{array}{rclrclrcl}
      \multicolumn{9}{l}{\text{\textbf{Types}}} \\
      \sem{0} &=& O &&&& \sem{1} &=& I \\
      \sem{b_1 + b_2} &=& \sem{b_1} \oplus \sem{b_2} &&&&
      \sem{b_1 \times b_2} &=& \sem{b_1} \otimes \sem{b_2} \\ \\
      \multicolumn{9}{l}{\text{\textbf{Terms}}} \\

      \sem{\pid} &=& \id & 
      \sem{\inv~c} &=& \sem{c}^\dagger \\

      \sem{\swapp} &=& \sigma_\oplus &
      \sem{\assocp} &=& \alpha_\oplus &
      \sem{\associp} &=& \alpha_\oplus^{-1} \\

      \sem{\unitipl} &=& \rho_\otimes &
      \sem{\unitepl} & = & \lambda_{\oplus} \\

      \sem{\swapt} &=& \sigma_\otimes &
      \sem{\assoct} &=& \alpha_\otimes & 
      \sem{\associt} &=& \alpha_\otimes^{-1} \\

      \sem{\unititl} &=& \lambda_{\otimes}^{-1} &
      \sem{\unitetl} &=& \lambda_{\otimes} \\

      \sem{\dist} &=& \delta_R &
      \sem{\factor} &=& \delta_R^{-1} \\
 
      \sem{\absorbl} &=& \delta_0 & 
      \sem{\factorzr} &=& \delta_0^{-1} \\

      \sem{c_1 + c_2} &=& \sem{c_1} \oplus \sem{c_2} &
      \sem{c_1 \seqq c_2} &=& \sem{c_2} \circ \sem{c_1} &
      \sem{c_1 \times c_2} &=& \sem{c_1} \otimes \sem{c_2}
    \end{array}
  \end{equation*}
  \caption{The semantics of $\PiLang$ in rig groupoids with monoidal
    structures $(O,\oplus)$ and $(I,\otimes)$.}
  \label{fig:semantics}
\end{figure}

\subsection{Semantics}

By design, $\PiLang$ has a natural model in \emph{rig
  groupoids}~\cite{10.1007/978-3-662-49498-1,10.1145/3498667}. Indeed,
every atomic isomorphism of $\PiLang$ corresponds to a coherence
isomorphism in a rig category, while sequencing corresponds to
composition, and the two parallel compositions are handled by the two
monoidal structures. Inversion corresponds to the canonical dagger
structure of groupoids. This interpretation is summarised in
Fig.~\ref{fig:semantics}.

\section{Models of $\PiLang$ from Automorphisms}
\label{sec:automorphisms}

When $\PiLang$ is used as a stand-alone classical language, the rig
groupoid of finite sets and bijections is the canonical choice for the
semantics. In our case, as we aim to recover quantum computation from
two copies of $\PiLang$, the canonical choices need more structure
individually and, more importantly, must eventually be related to each
other in a particular way to express complementarity. We begin by
explaining the categorical construction needed to embed a rig groupoid
in the category $\Aut_a(\Unitary)$ parameterised by a family of
automorphisms $a$. We then use this construction to give two models
for $\PiLang$ embedded in two instances of $\Aut_a(\Unitary)$.

\subsection{$\Aut_a(\Unitary)$}
\label{sub:autunitary}

The semantics of the two copies of $\PiLang$ (which we call $\PizLang$
and $\PihLang$) will use a generalisation of the category $\Unitary$ to a
family of categories parameterised by automorphisms that are pre- and
post-composed with every morphism.

\begin{definition}
  Let $\cat{C}$ be a category, and for each object $C$ let $a_C \colon C \to C$ be an automorphism (that is not necessarily natural in any sense). Form a new category $\Aut_a(\cat{C})$ with:
  \begin{itemize}
    \item \textbf{Objects:} objects of $\cat{C}$.
    \item \textbf{Morphisms:} morphisms are those of the form $a_B^{-1}\circ f \circ a_A $ for every $f \colon A \to B$ of $\cat{C}$.
    \item \textbf{Composition:} as in $\cat{C}$.
  \end{itemize}
\end{definition}

\begin{proposition}\label{prop:Autriggroupoid}
  When $\cat{C}$ is a rig groupoid, so is $\Aut_a(\cat{C})$.
\end{proposition}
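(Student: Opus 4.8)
The plan is to show that $\Aut_a(\cat{C})$ inherits each piece of the rig groupoid structure from $\cat{C}$ by conjugating the relevant morphisms by the automorphisms $a$. The central observation is that the morphisms of $\Aut_a(\cat{C})$ are exactly the $\cat{C}$-morphisms presented in a ``twisted'' form $a_B^{-1} \circ f \circ a_A$, and that composition is unchanged: indeed $(a_C^{-1} \circ g \circ a_B) \circ (a_B^{-1} \circ f \circ a_A) = a_C^{-1} \circ (g \circ f) \circ a_A$, so the conjugation telescopes and composition in $\Aut_a(\cat{C})$ matches composition in $\cat{C}$ under this presentation. This means there is a bijection between hom-sets $\cat{C}(A,B)$ and $\Aut_a(\cat{C})(A,B)$ that is functorial, so $\Aut_a(\cat{C})$ is equivalent (in fact isomorphic on the nose, being identity-on-objects) to $\cat{C}$ as a plain category; in particular every morphism is invertible, so it is a groupoid.

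The main work is transporting the monoidal and rig structure along this correspondence. For each structural isomorphism of $\cat{C}$ — the associators $\alpha_\oplus, \alpha_\otimes$, unitors $\lambda, \rho$, symmetries $\sigma_\oplus, \sigma_\otimes$, distributors $\delta_L, \delta_0$ — I would define its counterpart in $\Aut_a(\cat{C})$ by conjugating with the appropriate $a$'s, for example taking the new associator at $(A,B,C)$ to be $a_{(A \oplus B) \oplus C}^{-1} \circ \alpha_\oplus \circ a_{A \oplus (B \oplus C)}$, which is a legitimate morphism of $\Aut_a(\cat{C})$ by definition. For the tensor of morphisms I would set $(a_B^{-1} f a_A) \otimes (a_D^{-1} g a_C)$ to be the $\Aut_a(\cat{C})$-morphism represented by $f \otimes g$, i.e. $a_{B \otimes D}^{-1} \circ (f \otimes g) \circ a_{A \otimes C}$, and similarly for $\oplus$. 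The key point is that since the hom-set correspondence $f \mapsto a_B^{-1} f a_A$ is a bijection that strictly preserves composition and identities, every equation among structural morphisms that holds in $\cat{C}$ — bifunctoriality of $\oplus$ and $\otimes$, naturality of the coherence isomorphisms, the pentagon, triangle, hexagon, and Laplaza's distributivity coherences — transports verbatim to $\Aut_a(\cat{C})$, because each such equation is a composite of the transported pieces and the conjugating automorphisms cancel telescopically on both sides.

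The cleanest way to organise this is to exhibit the assignment $\cat{C} \to \Aut_a(\cat{C})$ sending $f \colon A \to B$ to $a_B^{-1} \circ f \circ a_A$ as an \emph{isomorphism of categories} that is the identity on objects, and then simply transport the entire rig groupoid structure of $\cat{C}$ across it; an isomorphism of categories lets one pull back any algebraic structure definable in terms of functors, natural transformations, and equations between them, which is exactly what a rig groupoid structure is. I would therefore state and prove the single fact that this assignment is a bijective-on-morphisms, identity-on-objects functor preserving and reflecting composition, and then observe that rig groupoid structure is preserved under transport along such an isomorphism.

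The step I expect to require the most care is the bookkeeping in verifying that the transported structural isomorphisms genuinely satisfy the coherence diagrams, since one must check that the conjugating automorphisms inserted at every object along a coherence diagram cancel consistently regardless of how the objects are bracketed — in particular that the $a$ attached to a composite object like $A \otimes (B \oplus C)$ is treated as a single atomic automorphism and not confused with any tensor or sum of the $a$'s on its components (the definition only posits one automorphism per object, with no naturality or monoidality assumed). Once it is clear that $a$ is used purely as an opaque per-object conjugator, the cancellation is automatic and the obstacle dissolves; the essential content of the proposition is thus the telescoping-composition lemma, and everything else is transport of structure along an isomorphism of categories.
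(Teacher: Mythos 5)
Your proof is correct and arrives at exactly the structure the paper constructs, but by a different organisation of the argument. The paper proceeds by direct verification: it defines the monoidal product of presented morphisms by conjugation, takes the coherence isomorphisms of $\cat{C}$ conjugated by the appropriate automorphisms, and then explicitly checks identities, groupoid inverses, bifunctoriality, naturality, and the coherence conditions, each check being an instance of the telescoping identity $(a_C^{-1} \circ g \circ a_B) \circ (a_B^{-1} \circ f \circ a_A) = a_C^{-1} \circ (g \circ f) \circ a_A$. You compress all of these checks into one lemma---conjugation $f \mapsto a_B^{-1} \circ f \circ a_A$ is an identity-on-objects isomorphism of categories $\cat{C} \to \Aut_a(\cat{C})$---plus the standard principle that structure given by functors, natural isomorphisms, and equations between their composites transports along such an isomorphism. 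The two constructions agree on the nose: your transported tensor of $h$ and $h'$ is $a_{B \otimes B'}^{-1} \circ \bigl((a_B \circ h \circ a_A^{-1}) \otimes (a_{B'} \circ h' \circ a_{A'}^{-1})\bigr) \circ a_{A \otimes A'}$, which is literally the unsimplified formula appearing in the paper's proof, and your transported coherence isomorphisms are the paper's conjugated ones. What your route buys is brevity and generality: nothing specific to the rig axioms is used, so the same argument transports any essentially algebraic structure on $\cat{C}$. What the paper's route buys is an explicit, self-contained description of the resulting structure, which it exploits downstream---for instance to compute that the additive symmetry $\sigma_\oplus$ on $I \oplus I$ in $\Aut_{R\phid}(\Unitary)$ is the rotated swap matrix rather than the usual one, which is the entire point of the construction. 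Finally, you correctly isolate the one genuine pitfall: $a$ is not assumed natural or monoidal, so $a_{A \otimes B}$ must be treated as an opaque automorphism of the composite object and never decomposed as $a_A \otimes a_B$; your treatment of $a$ as a per-object conjugator respects this, as does the paper's.
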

\begin{proof}
  To see that $\Aut_a(\cat{C})$ is a category, observe that, since $\Aut_a(\cat{C})$ inherits composition from~$\cat{C}$,  identities are just those from
  $\cat{C}$ since $a_A^{-1} \circ \id_A \circ a_A = a_A^{-1} \circ
  a_A = \id_A$, and composition is simply conjugated composition of
  morphisms from $\cat{C}$ since $a_C^{-1} \circ g \circ a_B \circ
  a_B^{-1} \circ f \circ a_A = a_C^{-1} \circ g \circ f \circ a_A$.
  Associativity and unitality of composition in $\Aut_a(\cat{C})$ 
  follows directly. That 
  $\Aut_a(\cat{C})$ is a groupoid when $\cat{C}$ is follows since for every 
  isomorphism $f$:
  \[
    a_B^{-1} \circ f \circ a_A \circ a_A^{-1} \circ f^{-1} \circ a_B
    = a_B^{-1} \circ f \circ f^{-1} \circ a_B
    = a_B^{-1} \circ a_B = \id_B
  \]
  and analogously $a_A^{-1} \circ f^{-1} \circ a_B \circ 
  a_B^{-1} \circ f \circ a_A = \id_A$.
  
  Supposing now $\cat{C}$ is symmetric monoidal, define a symmetric
  monoidal structure on objects as in $\cat{C}$ (with
  unit $I$ as in $\cat{C}$), and on morphisms $a_B^{-1} \circ f \circ a_A$ and $a_B'^{-1} \circ f' \circ a_A'$ by
  \[
    a_{B \otimes B'}^{-1} \circ (a_B \circ (a_B^{-1} \circ f \circ
    a_A) \circ a_A^{-1}) \otimes (a_B' \circ (a_B'^{-1} \circ f'
    \circ a_A') \circ a_A'^{-1}) \circ a_{A \otimes A'}
  \]
  in $\cat{C}$, which simplifies to
  $
    a_{B \otimes B'}^{-1} \circ (f \otimes f') \circ a_{A \otimes A'}
  $.
  In other words, monoidal products of morphisms in $\Aut_a(\cat{C})$ are merely 
  monoidal products of morphisms from $\cat{C}$ conjugated by the appropriate
  automorphisms. Coherence isomorphisms are those from $\cat{C}$, but 
  conjugated by the appropriate automorphisms. Bifunctoriality then follows because
  \[
    a_{A \otimes B}^{-1} \circ (\id_A \otimes \id_B) \circ a_{A \otimes B} = a_{A \otimes B}^{-1} \circ a_{A \otimes B} = \id_{A \otimes B}
  \]
  and for $f \colon B \to C$ and $g \colon C \to D$
  \begin{align*}
    & a_{A \otimes D}^{-1} \circ (\id_A \otimes g) \circ a_{A \otimes C} \circ a_{A \otimes C}^{-1} \circ (\id_A \otimes f) \circ a_{A \otimes B} \\
    & \quad = 
    a_{A \otimes D}^{-1} \circ (\id_A \otimes g) \circ (\id_A \otimes f)
    \circ a_{A \otimes B} \\
    & \quad = 
    a_{A \otimes D}^{-1} \circ (\id_A \otimes (g \circ f))
    \circ a_{A \otimes B}
  \end{align*}
  and similarly on the left, and finally for $f \colon A \to B$ and $g \colon C \to D$
  \begin{align*}
    & a_{D \otimes B}^{-1} \circ (g \otimes \id_B) \circ a_{C \otimes B} \circ a_{C \otimes B}^{-1} \circ (\id_C \otimes f) \circ a_{C \otimes A} \\
    & \quad = a_{D \otimes B}^{-1} \circ (g \otimes \id_B) \circ a_{C \otimes B} \circ a_{C \otimes B}^{-1} \circ (\id_C \otimes f) \circ a_{C \otimes A} \\
    & \quad = a_{D \otimes B}^{-1} \circ (g \otimes \id_B) \circ (\id_C \otimes f) \circ a_{C \otimes A} \\
    & \quad = a_{D \otimes B}^{-1} \circ (\id_D \otimes f) \circ (g \otimes \id_A) \circ a_{C \otimes A} \\
    & \quad = a_{D \otimes B}^{-1} \circ (\id_D \otimes f) \circ a_{D \otimes A} \circ a_{D \otimes A}^{-1} \circ (g \otimes \id_A) \circ a_{C \otimes A} \text.
  \end{align*}
  Naturality of coherence isomorphisms follows likewise since, for example,
  \begin{align*}
    & a_{A' \otimes (B' \otimes C')}^{-1} \circ \alpha \circ a_{(A' \otimes B') \otimes C'} \circ a_{(A' \otimes B') \otimes C'}^{-1} \circ (f \otimes g) \otimes h \circ a_{(A \otimes B) \otimes C} \\
    & \quad = a_{A' \otimes (B' \otimes C')}^{-1} \circ \alpha \circ (f \otimes g) \otimes h \circ a_{(A \otimes B) \otimes C} \\
    & \quad = a_{A' \otimes (B' \otimes C')}^{-1} \circ f \otimes
    (g \otimes h) \circ \alpha \circ a_{(A \otimes B) \otimes C} \\
    & \quad = a_{A' \otimes (B' \otimes C')}^{-1} \circ f \otimes
    (g \otimes h) \circ a_{A \otimes (B \otimes C)} \circ a_{A \otimes (B \otimes C)}^{-1} \circ \alpha \circ a_{(A \otimes B) \otimes C} 
  \end{align*}
  and similarly for the unitors and the symmetry. As for coherence conditions, 
  since these solely involve monoidal products of coherence isomorphisms and 
  identities, they transfer directly from~$\cat{C}$ by definition of 
  composition and monoidal products in $\Aut_a(\cat{C})$. For example, the 
  coherence condition that $\lambda \circ \sigma = \rho$ follows by
  \[
    a_{A}^{-1} \circ \lambda \circ a_{I \otimes A} \circ 
    a_{I \otimes A}^{-1} \circ \sigma \circ a_{A \otimes I} =
    a_{A}^{-1} \circ \lambda \circ \sigma \circ a_{A \otimes I} =
    a_{A}^{-1} \circ \rho \circ a_{A \otimes I}
  \]
  using the fact that the very same coherence condition holds in $\cat{C}$,
  and likewise for the other conditions. Thus $\Aut_a(\cat{C})$ is symmetric 
  monoidal when $\cat{C}$ is.
  
  Finally supposing that $\cat{C}$ is a rig category, it is specifically
  symmetric monoidal in two different ways, and so is $\Aut_a(\cat{C})$ by the
  argument above. To see that this extends to a rig structure, the remaining
  coherence isomorphisms $\delta_L$, $\delta_R$, and $\delta_0^L$, and 
  $\delta_0^R$ are defined in $\Aut_a(\cat{C})$ by conjugating those from 
  $\cat{C}$ with the appropriate automorphisms. That this satisfies the 
  coherence conditions of rig categories follows exactly as in the symmetric 
  monoidal case.
\end{proof}

\subsection{Models of $\PizLang$ and $\PihLang$: $\Unitary$ and $\Aut_{R\phid}(\Unitary)$}
\label{sub:models-pi}

The choice of semantics for $\PizLang$ is easy to justify: it will use
the trivial family of identity automorphisms, i.e, it will use the
canonical category $\Unitary$ itself. The semantics for $\PihLang$
will be ``rotated'' by some angle with respect to that of
$\PizLang$. By that, we mean that the semantics of $\PihLang$ will use
a family of automorphisms that is parameterised by a rotation matrix
$\mathbf{r\phid} = (
      \begin{smallmatrix}
        \cos{\phi} & -\sin{\phi} \\
        \sin{\phi} & \cos{\phi}
      \end{smallmatrix})$
for some yet-to-be determined angle $\phi$.       

\begin{definition}\label{def:piz-sem}
  The canonical model of $\PizLang$ is the rig groupoid $\Unitary$ of 
  finite-dimensional Hilbert spaces and unitaries.
\end{definition}

We recall that as a rig category, $\Unitary$ is \emph{semi-simple} in
the sense that all of its objects are generated by the rig structure
(this is a direct consequence of the fact that each finite-dimensional
Hilbert space is isometrically isomorphic to $\mathbb{C}^n$ for some
$n$~\cite{loaiza:short}). In other words, every object in $\Unitary$
can be written (up to isomorphism) using the two units $O$ and $I$ as well as the monoidal
product $\otimes$ and sum $\oplus$. We will use this fact to define a
family of automorphisms $R\phid_A$ in $\Unitary$ which will be used to
form a model of $\PihLang$.

\begin{definition}\label{def:rx}
  Given an angle $\phi$, we define a family $R\phid_A$ of
  automorphisms in $\Unitary$ as follows:
  \begin{equation*}
    \begin{array}{rclrcl}
      R\phid_O & = & \id_O & 
      R\phid_I & = & \id_I \\
      R\phid_{A \otimes B} & = & R\phid_A \otimes R\phid_B \\
      R\phid_{A \oplus B} & = & \theta_A^{-1} \oplus \theta_B^{-1} \circ 
      \mathbf{r}\phid
      \circ \theta_A \oplus \theta_B
      & \multicolumn{3}{l}{\text{(when $A \simeq I \simeq B$)}} \\
      R\phid_{A \oplus B} & = & R\phid_A \oplus R\phid_B
      & \multicolumn{3}{l}{\text{(when $A \not\simeq I$ or $B \not\simeq I$)}}
    \end{array}
  \end{equation*}
\end{definition}
\noindent The morphisms $\theta_A$ and $\theta_B$ in this definition refer to the isomorphisms witnessing $A \simeq I$ and $B \simeq I$ respectively. In particular, this definition requires one to decide isomorphism with $I$. This sounds potentially difficult, but is fortunately very simple: an object is isomorphic to $I$ iff it can be turned into $I$ by eliminating additive units $O$ and multiplicative units $I$ using the unitors $\lambda_\oplus$, $\rho_\oplus$, $\lambda_\otimes$, and $\rho_\otimes$ as well as the associators $\alpha_\oplus$ and $\alpha_\otimes$ as necessary.

Essentially, $\Aut_{R\phid}(\Unitary)$ consists of unitaries in which
qubit (sub)systems are conjugated by the unitary $\mathbf{r}\phid$.
This is significant, because it means that the additive symmetry
$\sigma_\oplus$ on $I \oplus I$ is no longer $(\begin{smallmatrix} 0 &
  1 \\ 1 & 0
\end{smallmatrix})$ as usual, but instead the potentially much more interesting gate:
\[
\begin{pmatrix}
  \cos{\phi} & \sin{\phi} \\ -\sin{\phi} & \cos{\phi}
\end{pmatrix}
\begin{pmatrix}
  0 & 1 \\
  1 & 0
\end{pmatrix}
\begin{pmatrix}
  \cos{\phi} & -\sin{\phi} \\
  \sin{\phi} & \cos{\phi}
\end{pmatrix}
=
\begin{pmatrix}
  \sin{2\phi} & \cos{2\phi} \\
  \cos{2\phi} & -\sin{2\phi}
\end{pmatrix}\enspace.
\]
This leads us to the family of models of $\PihLang$.

\begin{definition}\label{def:pih-sem}
  Given a value for $\phi$, a model of $\PihLang$ is the
  rig groupoid $\Aut_{R\phid}(\Unitary)$ of finite-dimensional Hilbert spaces
  and unitaries of the form $(R\phid_B)^{-1} \circ U \circ R\phid_A$.
\end{definition}

\section{$\PizhLang$ from Amalgamation} 
\label{sec:amalgamation}

The aim of this section is to define the language $\PizhLang$ that
combines the separate definitions of $\PizLang$ and $\PihLang$ into a
combined language that interleaves expressions from each. We begin by
explaining the amalgamations of categories in
Secs.~\ref{sub:amalgamate}, \ref{sub:amalgamateSC},
and~\ref{sub:moreamalg}. We use these constructions to define
categorical models of $\PizhLang$ in Sec.~\ref{sub:pizhmodel}. These
models justify the definition of $\PizhLang$ as an arrow over the
individual sublanguages as shown in Sec.~\ref{sub:pizh}.

\subsection{Amalgamation of Categories}
\label{sub:amalgamate}

Programs in $\PizhLang$ are formal compositions of $\PizLang$ programs and $\PihLang$ programs, combined in a way that respects product types. To account for this semantically, we need a way to combine \emph{models} of $\PizLang$ and $\PihLang$ in a way that preserves the monoidal product. This construction is known as the \emph{amalgamation of categories} (see, \textit{e.g.}, ~\citet{macdonaldscull:amalg}). In this section, we recall this construction in the slightly simpler case where the two categories have the same objects, and go on to extend it to the symmetric monoidal case.

\begin{definition}\label{def-amalg}
  Given two categories~$\cat{C}$ and $\cat{D}$ with the same objects, form a
  new category $\Amalg(\cat{C}, \cat{D})$ as follows:
  \begin{itemize}
    \item \textbf{Objects}: Objects of $\cat{C}$ (equivalently $\cat{D}$).
    \item \textbf{Morphisms:} Morphisms $A_1 \to A_{n+1}$ are equivalence 
    classes of finite lists $[f_n, \dots, f_1]$ of morphisms $f_i \colon A_i \to A_{i+1}$ of $\cat{C}$ or $\cat{D}$ tagged with their category of origin,
    under the equivalence $\sim$ below.
    \item \textbf{Identities:} Empty lists $[]$.
    \item \textbf{Composition:} Concatenation of lists, $[g_n, \dots, g_1]
    \circ [f_m, \dots, f_1] = [g_n, \dots, g_1, f_m, \dots, f_1]$.
  \end{itemize}
  When the origin category is important we will write, \textit{e.g.}, $f^\cat{C}$ to mean
  that $f$ is tagged with $\cat{C}$ and so originated from this category. Let
  $\sim$ denote the least equivalence satisfying
  \begin{multicols}{2}
  \noindent
    \begin{equation}
      \label{eq:identity-equiv}
      [\id] \sim []
    \end{equation}
    \begin{equation}
      \label{eq:comp-equiv}
      [f^{\cat{A}}, g^{\cat{A}}] \sim [f^{\cat{A}} \circ g^{\cat{A}}]
    \end{equation}
  \end{multicols}
  \noindent as well as the congruence:
  \begin{equation}
    \frac{[f_n, \dots, f_1] \sim [f'_{n'}, \dots, f'_1] \qquad [g_{m}, \dots,
    g_1] \sim [g'_{m'}, \dots, g'_1]}{[f_n, \dots, f_1] \circ [g_m, \dots, g_1]
    \sim [f'_{n'}, \dots, f'_1] \circ [g'_{m'}, \dots, g'_1]} 
    \label{eq:comp-cong}
  \end{equation}
\end{definition}
Note that the inner composition in \eqref{eq:comp-equiv} refers to composition
in the category $\cat{A}$, which in turn refers to either $\cat{C}$ or
$\cat{D}$. To verify that this forms a category, we notice that concatenation
of lists is straightforwardly associative and has the empty list as unit;
however, since these are not lists per se but in fact equivalence classes of
lists, we must also check that composition is well-defined. To see this,
consider the normalisation procedure that repeats the following two steps until
a fixed point is reached:
\begin{enumerate}
  \item Remove all identities using \eqref{eq:identity-equiv} and 
  \eqref{eq:comp-cong}.
  \item Compose all composable adjacent morphisms using
  \eqref{eq:comp-equiv} and \eqref{eq:comp-cong}.
\end{enumerate}
That a fixed point is always reached follows by the fact that both of these
steps are monotonically decreasing in the length of the list, which is always
finite.

As we would hope, there are straightforward embeddings $\cat{C} \to
\Amalg(\cat{C},\cat{D}) \leftarrow \cat{D}$.
\begin{proposition}\label{prop-embed-amalg}
  There are embeddings $\mathscr{E}_L : \cat{C} \to 
  \Amalg(\cat{C},\cat{D})$
  and $\mathscr{E}_R : \cat{D} \to \Amalg(\cat{C},\cat{D})$ given on 
  objects by $X \mapsto X$ and on morphisms by $f \mapsto [f]$.
\end{proposition}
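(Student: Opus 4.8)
The plan is to show that $\mathscr{E}_L$ (and symmetrically $\mathscr{E}_R$) is a well-defined faithful functor that is the identity on objects; these properties together constitute an embedding. Since the assignment on objects is literally the identity, injectivity on objects is immediate, so all the work concerns the morphism map $f \mapsto [f]$.

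First I would check functoriality. Preservation of identities is exactly \eqref{eq:identity-equiv}: $\mathscr{E}_L(\id_X) = [\id_X] \sim [\,] = \id_X$, where the empty list $[\,]$ is the identity of $\Amalg(\cat{C},\cat{D})$. Preservation of composition uses \eqref{eq:comp-equiv}: for composable $f, g$ in $\cat{C}$ we have $\mathscr{E}_L(g) \circ \mathscr{E}_L(f) = [g] \circ [f] = [g, f] \sim [g \circ f] = \mathscr{E}_L(g \circ f)$, where the middle equality is concatenation and the equivalence is the instance of \eqref{eq:comp-equiv} with both morphisms tagged by $\cat{C}$. Both checks are routine, and the analogous computations give functoriality of $\mathscr{E}_R$.

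The substantive part is faithfulness: I must show that $[f^{\cat{C}}] \sim [g^{\cat{C}}]$ forces $f = g$ in $\cat{C}$. I would argue this through the normalisation procedure already introduced after the definition. The idea is that $\sim$-equivalent lists share a common normal form, so it suffices to compute the normal form of a singleton coming from $\cat{C}$. If $f$ is not an identity, then $[f]$ has no identity to delete (step 1) and no adjacent composable pair to contract (step 2), so it is already in normal form; if $f = \id_X$, its normal form is $[\,]$, a case subsumed by functoriality. Hence two singletons from $\cat{C}$ can be equivalent only when they are the same normal-form list, i.e.\ only when $f = g$ (with the same tag), which gives faithfulness.

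The main obstacle is precisely the claim that $\sim$-equivalence coincides with equality of normal forms, which requires the normal form to be \emph{unique}. Termination of the rewriting was already established (each step strictly shortens the list), so by Newman's lemma it suffices to prove local confluence, i.e.\ to resolve the critical pairs of the two rules. These arise when a deletable identity sits next to a contractible adjacent pair, or when three consecutive morphisms from the same category can be contracted in two different orders; the former is resolved by unitality and the latter by associativity of composition in $\cat{C}$ (respectively $\cat{D}$), together with the congruence rule \eqref{eq:comp-cong} that permits these resolutions to be applied inside longer lists. Establishing local confluence in this way, and hence uniqueness of normal forms, is the one place where genuine (if standard) rewriting-theoretic care is needed; everything else is bookkeeping.
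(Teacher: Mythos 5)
Your proposal is correct, and its first half coincides exactly with the paper's own proof: the paper's entire argument consists of the two functoriality checks $\mathscr{E}_L(\id) = [\id] \sim []$ and $\mathscr{E}_L(g \circ f) = [g \circ f] \sim [g,f] = \mathscr{E}_L(g) \circ \mathscr{E}_L(f)$, and nothing more. Where you genuinely diverge is that you go on to prove \emph{faithfulness}, which the paper never addresses, even though it is what the word ``embedding'' demands beyond mere functoriality (injectivity on objects being trivial, as both of you note). Your route --- reuse the normalisation procedure that the paper introduces only to argue well-definedness of composition, observe that a non-identity singleton $[f^{\cat{C}}]$ is already a normal form, and reduce uniqueness of normal forms to termination plus local confluence via Newman's lemma --- is sound: the rewrite steps strictly decrease list length, the only critical pairs are an identity adjacent to a composable pair and three consecutive composable morphisms from one category, these are resolved by unitality and associativity in the origin category, and the congruence \eqref{eq:comp-cong} propagates the resolutions into longer lists. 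What your extra work buys is a real justification that $\cat{C}$ and $\cat{D}$ sit inside $\Amalg(\cat{C},\cat{D})$ without collapse --- a fact the paper relies on implicitly (amalgamations are treated throughout as genuine interleavings of programs from the two sublanguages) but never establishes; what the paper's terser proof buys is brevity, at the cost of reading ``embedding'' as meaning only ``identity-on-objects functor.'' The one place to be careful, which you flag yourself, is that local confluence must actually be checked rather than asserted, since it is exactly what upgrades termination to uniqueness of normal forms; your identification of the critical pairs is the right checklist for doing so.
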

\begin{proof}
  $\mathscr{E}_L(\id) = [\id] \sim []$ and $\mathscr{E}_L(g \circ
  f) = [g \circ f] \sim [g, f] = \mathscr{E}_L(g) \circ
  \mathscr{E}_L(f)$, likewise for $\mathscr{E}_R$.
\end{proof}

Since $\PizLang$ and $\PihLang$ are both reversible, we would also expect $\PizhLang$ to be so, simply by taking inverses pointwise. We can formalise this intuition by showing that the amalgamation of \emph{groupoids} is, again, a groupoid.
\begin{proposition}\label{prop-amalg-groupoid}
  $\Amalg(\cat{C}, \cat{D})$ is a groupoid when $\cat{C}$ and $\cat{D}$ are.
\end{proposition}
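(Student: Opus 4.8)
The plan is to exhibit an explicit two-sided inverse for each morphism and then check that it is well defined on equivalence classes. Given a morphism represented by a list $[f_n, \dots, f_1]$, with each $f_i$ tagged by its category of origin, define its candidate inverse to be $[f_1^{-1}, \dots, f_n^{-1}]$, where each $f_i^{-1}$ is the inverse taken in $\cat{C}$ or $\cat{D}$ (which exists precisely because both are groupoids) and carries the same tag as $f_i$. Writing this operation as $\mathrm{inv}$, the first observation is that it reverses concatenation: directly from the definition of composition in Definition~\ref{def-amalg}, one has $\mathrm{inv}(L \circ M) = \mathrm{inv}(M) \circ \mathrm{inv}(L)$, so $\mathrm{inv}$ is an anti-homomorphism for the list composition.

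The first and main task is to verify that $\mathrm{inv}$ descends to equivalence classes, i.e.\ that $L \sim L'$ implies $\mathrm{inv}(L) \sim \mathrm{inv}(L')$; this I would prove by induction over the derivation of $\sim$, checking that each generator is respected. For \eqref{eq:identity-equiv} it is immediate, since $\id^{-1} = \id$ gives $\mathrm{inv}([\id]) = [\id] \sim []$. For \eqref{eq:comp-equiv}, applying $\mathrm{inv}$ to $[f^{\cat{A}}, g^{\cat{A}}] \sim [f^{\cat{A}} \circ g^{\cat{A}}]$ produces $[g^{-1}, f^{-1}]$ on the left and $[(f \circ g)^{-1}] = [g^{-1} \circ f^{-1}]$ on the right, which are related by a fresh instance of \eqref{eq:comp-equiv}. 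For the congruence \eqref{eq:comp-cong}, the anti-homomorphism property rewrites $L \circ M \sim L' \circ M'$ as a comparison of $\mathrm{inv}(M) \circ \mathrm{inv}(L)$ with $\mathrm{inv}(M') \circ \mathrm{inv}(L')$, and the inductive hypotheses $\mathrm{inv}(L) \sim \mathrm{inv}(L')$ and $\mathrm{inv}(M) \sim \mathrm{inv}(M')$ together with \eqref{eq:comp-cong} itself close the case. This establishes $\mathrm{inv}$ as a well-defined operation on morphisms of $\Amalg(\cat{C}, \cat{D})$.

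It then remains to show $L \circ \mathrm{inv}(L) \sim []$, which I would prove by induction on the length $n$ of $L$, the case $n = 0$ being trivial. For the inductive step, decompose $L = M \circ [f_1]$ with $M = [f_n, \dots, f_2]$, so that $L \circ \mathrm{inv}(L) = M \circ [f_1, f_1^{-1}] \circ \mathrm{inv}(M)$. The inner pair $[f_1, f_1^{-1}]$ lies in a single origin category and composes to an identity there, so \eqref{eq:comp-equiv} followed by \eqref{eq:identity-equiv} gives $[f_1, f_1^{-1}] \sim []$; the congruence \eqref{eq:comp-cong} then collapses the expression to $M \circ \mathrm{inv}(M)$, which is $\sim []$ by the inductive hypothesis. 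The reverse identity $\mathrm{inv}(L) \circ L \sim []$ follows by applying the same result to $\mathrm{inv}(L)$, using $\mathrm{inv}(\mathrm{inv}(L)) = L$. The step demanding the most care is the well-definedness argument, since morphisms are equivalence classes rather than lists and the inverse must be shown independent of the chosen representative; this is exactly where the structural induction over the generators of $\sim$, and the anti-homomorphism property of $\mathrm{inv}$, are needed.
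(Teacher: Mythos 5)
Your proof is correct, and its core is the same as the paper's: the same candidate inverse $[f_n,\dots,f_1]^{-1} = [f_1^{-1},\dots,f_n^{-1}]$, and the same induction on list length that cancels the innermost pair $[f_1, f_1^{-1}]$ via \eqref{eq:comp-equiv} and \eqref{eq:identity-equiv} and then invokes the inductive hypothesis. Two differences are worth noting. First, what you call the ``first and main task''---showing that $\mathrm{inv}$ descends to equivalence classes---is superfluous: to prove the category is a groupoid you only need, for each morphism, the \emph{existence} of a two-sided inverse, so it suffices to pick any representative $L$ of the class and show $L \circ \mathrm{inv}(L) \sim []$ and $\mathrm{inv}(L) \circ L \sim []$; composition is already known to be well defined on classes, and once an inverse exists it is unique, so $L \sim L'$ implies $\mathrm{inv}(L) \sim \mathrm{inv}(L')$ automatically, with no induction over derivations of $\sim$ needed. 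The paper accordingly omits this step entirely (your derivation induction is fine, just unnecessary work, and it is not where the care is really required). Second, on the part that does matter you are somewhat more complete than the paper: the paper's proof explicitly establishes only $L \circ L^{-1} \sim []$ and leaves the other composite implicit, whereas your observation that $\mathrm{inv}(\mathrm{inv}(L)) = L$ lets you derive $\mathrm{inv}(L) \circ L \sim []$ from the identity already proved is a clean way to discharge it; you are also more explicit that collapsing $f_1 \circ f_1^{-1}$ in the middle of a longer list requires the congruence \eqref{eq:comp-cong} and not just \eqref{eq:comp-equiv}, a point the paper glosses over.
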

\begin{proof}
  Define $[f_n, \dots, f_1]^{-1} = [f_1^{-1}, \dots, f_n^{-1}]$ (where
  $f_i^{-1}$ is the inverse to $f_i$ in the origin category), and proceed by
  induction on $n$. When $n=0$, $[] \circ []^{-1} = [] \circ [] = []$. Assuming 
  the inductive hypothesis on all lists of length $n$ we see on lists of length 
  $n+1$ that
  \begin{align*}
    [f_{n+1}, \dots, f_1] \circ [f_{n+1}, \dots, f_1]^{-1} 
    & = [f_{n+1}, \dots, f_1, f_1^{-1}, \dots, f_{n+1}^{-1}] \\
    & \sim [f_{n+1}, \dots, f_1 \circ f_1^{-1}, \dots, f_{n+1}^{-1}] \\
    & = [f_{n+1}, \dots, f_2, \id, f_2^{-1}, \dots, f_{n+1}^{-1}] \\
    & \sim [f_{n+1}, \dots, f_2, f_2^{-1}, \dots, f_{n+1}^{-1}] \\
    & = [f_{n+1}, \dots, f_2] \circ [f_2^{-1}, \dots, f_{n+1}^{-1}] \\
    & = [f_{n+1}, \dots, f_2] \circ [f_{n+1}, \dots, f_2]^{-1} \\
    &= []
  \end{align*}
  where the last identity follows by the inductive hypothesis.
\end{proof}

\subsection{Amalgamation of Symmetric Monoidal Categories}
\label{sub:amalgamateSC}

Categorically, the amalgamation of categories (with the same objects)
has a universal property as a pushout of (identity-on-objects)
embeddings in the category $\cat{Cat}$ of (small) categories and
functors between them~\cite{macdonaldscull:amalg}. While this
amalgamation is a good first step towards a model of $\PizhLang$, it
is not enough. This is because it is only a pushout of mere functors
between unstructured categories, so it will not necessarily respect
\emph{structure} present in the categories being amalgamated, such as
symmetric monoidal structure. In this section, we will extend the
amalgamation of categories to one for symmetric monoidal
categories. In the next section, we show that this yields an arrow
over the symmetric monoidal categories involved.

\begin{definition}\label{def:symmonamalg}
  Given two symmetric monoidal categories $\cat{C}$ and $\cat{D}$ with the same 
  objects, such that their symmetric monoidal products agree on objects 
  (specifically, their units are the same), form a new category 
  $\SymMonAmalg(\cat{C}, \cat{D})$ as follows:
  \begin{itemize}
    \item \textbf{Objects}, \textbf{Morphisms}, \textbf{Identities}, and 
    \textbf{Composition} as in $\Amalg(\cat{C}, \cat{D})$ 
    (Def.~\ref{def-amalg}).
    \item \textbf{Monoidal unit:} $I$, the monoidal unit of $\cat{C}$ and 
    $\cat{D}$.
    \item \textbf{Monoidal product:} On objects, define $A \otimes B$ to be as
    in $\cat{C}$ and $\cat{D}$. On morphisms, define $[f_n, \dots, f_1] \otimes
    [g_m, \dots, g_1] = [f_n \otimes \id, \dots, f_1 \otimes \id, \id \otimes
    g_m, \dots, \id \otimes g_1]$ where $f_i \otimes \id$ and $\id \otimes g_j$
    are formed in the origin category of $f_i$ and $g_j$ respectively, up to
    the extended equivalence below.
    \item \textbf{Coherence isomorphisms:} The coherence isomorphisms $\alpha$, 
    $\sigma$, $\lambda$, $\rho$, and their inverses are given by equivalence 
    classes of lifted coherence isomorphisms from $\cat{C}$ and $\cat{D}$ 
    (\textit{e.g.}, $[\alpha^\cat{C}]$) up to the extended equivalence below.
  \end{itemize}
  The extended equivalence is the least one containing 
  \eqref{eq:identity-equiv}, 
  \eqref{eq:comp-equiv}, and \eqref{eq:comp-cong} from 
  Def.~\ref{def-amalg} as well as
  \begin{align}
    [f \otimes \id, \id \otimes g] & \sim [\id \otimes g, f \otimes \id] 
    \label{eq-bifun}
    \\
    \label{eq-lambda}
      [\alpha^\cat{C}] & \sim [\alpha^\cat{D}] &
      [\sigma^\cat{C}] & \sim [\sigma^\cat{D}] &
      [\lambda^\cat{C}] & \sim [\lambda^\cat{D}] &
      [\rho^\cat{C}] & \sim [\rho^\cat{D}] 
  \end{align}
  in addition to the congruence:
  \begin{equation}\label{eq-cong}
    \frac{[f_n, \dots, f_1] \sim [f'_{n'}, \dots, f'_1] \qquad [g_m, \dots,
    g_1] \sim [g'_{m'}, \dots, g'_1]}{[f_n, \dots, f_1] \otimes [g_m, \dots,
    g_1] \sim [f'_{n'}, \dots, f'_1] \otimes [g'_{m'}, \dots, 
    g'_1]}
  \end{equation}
\end{definition}

Note that \eqref{eq-bifun} above holds even when $f$ and $g$ originate from
different categories, such that this is not simply a consequence of
\eqref{eq:comp-equiv} and bifunctoriality in the origin category.

It follows that this defines a category, but it remains to show that this actually defines a symmetric monoidal category.

\begin{proposition}\label{prop:amalgsymmon}
  $\SymMonAmalg(\cat{C},\cat{D})$ is symmetric monoidal.
\end{proposition}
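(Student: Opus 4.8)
The plan is to verify the three bundles of axioms that upgrade the category $\SymMonAmalg(\cat{C},\cat{D})$ of Def.~\ref{def:symmonamalg} to a symmetric monoidal one: that $\otimes$ is a bifunctor, that the coherence isomorphisms $\alpha$, $\sigma$, $\lambda$, $\rho$ are natural isomorphisms, and that they satisfy the pentagon, triangle, hexagon, and symmetry coherence conditions. Throughout, the governing idea is that every morphism is (an equivalence class of) a list of single-element lifts $[f]$ with $f$ from $\cat{C}$ or $\cat{D}$, so that each axiom can be reduced, one elementary lift at a time, to the corresponding fact in a single origin category --- exactly as in the closing argument of Prop.~\ref{prop:Autriggroupoid}. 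Invertibility of each coherence morphism is immediate this way: $[\alpha^{\cat{C}}] \circ [(\alpha^{\cat{C}})^{-1}] = [\alpha^{\cat{C}}, (\alpha^{\cat{C}})^{-1}] \sim [\id] \sim []$ by \eqref{eq:comp-equiv} and \eqref{eq:identity-equiv}. The one genuinely new ingredient is \eqref{eq-bifun}, which lets a factor originating in $\cat{C}$ slide past a factor originating in $\cat{D}$, and this is what makes the cross-category interaction well behaved.

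First I would establish that $\otimes$ is a bifunctor. Well-definedness on equivalence classes is immediate from the congruence \eqref{eq-cong}, and preservation of identities is the computation $[] \otimes [] = []$. The substance is the interchange law. For single-element lifts it reads
\[
  ([f] \otimes [g]) \circ ([f'] \otimes [g'])
  = [\, f \otimes \id,\ \id \otimes g,\ f' \otimes \id,\ \id \otimes g' \,],
\]
and I would rewrite the inner adjacent pair $\id \otimes g,\ f' \otimes \id$ using \eqref{eq-bifun} (applied to a sublist through the composition congruence \eqref{eq:comp-cong}) to obtain $[\, f \otimes \id,\ f' \otimes \id,\ \id \otimes g,\ \id \otimes g' \,]$, which is precisely $([f]\circ[f']) \otimes ([g]\circ[g'])$. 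The general case of arbitrary lists then follows by induction on their lengths, repeatedly sliding each $\id \otimes g_j$ leftwards past the block of $f_i \otimes \id$ terms via \eqref{eq-bifun}. This step is the crux: the reordering is \emph{not} an instance of bifunctoriality inside a single category, since $f'$ and $g$ may have different origins, which is exactly why \eqref{eq-bifun} must be posited separately.

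Next I would prove naturality of the coherence isomorphisms. Because the morphisms of $\SymMonAmalg(\cat{C},\cat{D})$ are generated under composition and $\otimes$ by the single-element lifts, and because naturality squares paste together under composition in each tensor slot, it suffices to check each naturality condition against a single elementary morphism $[p]$ (lifted from one category, say $\cat{C}$) sitting in one tensor slot with identities elsewhere. For such a morphism I would represent the coherence isomorphism by its \emph{same-category} lift --- legitimate by \eqref{eq-lambda} --- so that both legs of the naturality square become lists of $\cat{C}$-morphisms; collapsing them with \eqref{eq:comp-equiv} reduces the square to the naturality of the corresponding isomorphism in $\cat{C}$, which holds because $\cat{C}$ is symmetric monoidal. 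Choosing the $\cat{D}$-representative instead handles lifts from $\cat{D}$.

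Finally, the coherence conditions --- pentagon, triangle, the hexagon identities, and $\sigma \circ \sigma = \id$ --- are equations among composites of coherence isomorphisms and identities only. For each such equation I would use \eqref{eq-lambda} to replace every coherence isomorphism by its $\cat{C}$-lift, turning both sides into lists of $\cat{C}$-morphisms that collapse via \eqref{eq:comp-equiv} to the corresponding composite in $\cat{C}$; the equation then holds because it holds in the symmetric monoidal category $\cat{C}$, mirroring the coherence argument at the end of Prop.~\ref{prop:Autriggroupoid}. I expect the interchange law of the second paragraph to be the main obstacle, since it is the only place where \eqref{eq-bifun} is indispensable and where one cannot retreat into a single origin category; by contrast the naturality reduction is routine if bookkeeping-heavy, and the coherence conditions are then essentially immediate.
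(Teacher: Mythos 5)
Your proposal is correct and follows essentially the same route as the paper's proof: bifunctoriality via repeated applications of \eqref{eq-bifun} to slide $\id \otimes g$ factors past $f \otimes \id$ factors, naturality by reducing to single lifts handled in their origin category (using \eqref{eq-lambda} to choose the matching representative of the coherence isomorphism), and the coherence conditions by transporting everything into one category and collapsing with \eqref{eq:comp-equiv}. The only cosmetic difference is that the paper phrases the naturality step as an explicit induction pushing the coherence isomorphism through an alternating list, whereas you package the same computation as a pasting of naturality squares over singleton lifts.
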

\begin{proof}
  We must show that the monoidal product is bifunctorial, that coherence 
  isomorphisms are natural, and that coherence conditions are satisfied. For 
  bifunctoriality, we see first that the monoidal product is functorial in each 
  component since $[] \otimes [] = []$ and
  \begin{align*}
    ([] \otimes [f_n, \dots, f_1]) \circ ([] \otimes [g_m, \dots, g_1])
    & = [\id \otimes f_n, \dots, \id \otimes f_1] \circ [\id
    \otimes g_m, \dots, \id \otimes g_1] \\
    & = [\id \otimes f_n, \dots, \id \otimes f_1, \id
    \otimes g_m, \dots, \id \otimes g_1] \\
    & = [] \otimes [f_n, \dots, f_1, g_m, \dots, g_1] \\
    & = [] \otimes ([f_n, \dots, f_1] \circ [g_m, \dots, g_1])
  \end{align*}
  and similarly with the identity on the right. For bifunctoriality, we expand
  \begin{align*}
    ([f_n, \dots, f_1] \otimes []) \circ ([] \otimes [g_m, \dots, g_1]) 
    & = [f_n \otimes \id, \dots, f_1 \otimes \id] \circ [\id \otimes g_m,
    \dots, \id \otimes g_1] \\
    & = [f_n \otimes \id, \dots, f_1 \otimes \id, \id \otimes g_m, \dots, \id
    \otimes g_1]
  \end{align*}
  and inductively applying \eqref{eq-bifun} $m$ times for each $f_i$ shows
  \begin{align*}
    ([f_n, \dots, f_1] \otimes []) \circ ([] \otimes [g_m, \dots, g_1]) & = 
    [f_n \otimes \id, \dots, f_1 \otimes \id, \id \otimes g_m, \dots, \id
    \otimes g_1] \\
    & \sim [\id \otimes g_m, \dots, \id \otimes g_1, f_n \otimes \id, \dots, f_1
    \otimes \id] \\
    & = ([] \otimes [g_m, \dots, g_1]) \circ ([f_n, \dots, f_1] \otimes [])
  \end{align*}
  To see that coherence isomorphisms are natural, we consider the case of the
  left unitor $\lambda$: that this is natural amounts to showing
  $[\lambda^\cat{C}] \circ [I \otimes f_n, \dots, I \otimes f_1] = [f_n, \dots,
  f_1] \circ [\lambda^\cat{C}]$ (we could have equivalently used
  $[\lambda^\cat{D}]$, but it makes no difference by \eqref{eq-lambda}). Assume
  without loss of generality that $[f_n, \dots, f_1]$ alternates between
  morphisms from $\cat{C}$ and morphisms from $\cat{D}$, i.e., $f_n$,
  $f_{n-2}$, $f_{n-4}$, etc. originate in $\cat{C}$, while $f_{n-1}$,
  $f_{n-3}$, $f_{n-5}$, etc. originate in $\cat{D}$. Then
  \begin{align*}
    [\lambda^\cat{C}] \circ [I \otimes f_n, I \otimes f_{n-1}, \dots, I \otimes
    f_1] & = [\lambda^\cat{C}, I \otimes f_n, I \otimes f_{n-1}, \dots, I
    \otimes f_1] \\ 
    & \sim [\lambda^\cat{C} \circ I \otimes f_n, I \otimes f_{n-1}, \dots, I
    \otimes f_1] \\
    & = [f_n \circ \lambda^\cat{C}, I \otimes f_{n-1}, \dots, I
    \otimes f_1] \\
    & \sim [f_n, \lambda^\cat{C}, I \otimes f_{n-1}, \dots, I
    \otimes f_1] \\
    & \sim [f_n, \lambda^\cat{D}, I \otimes f_{n-1}, \dots, I
    \otimes f_1] \\
    & \sim [f_n, \lambda^\cat{D} \circ I \otimes f_{n-1}, \dots, I
    \otimes f_1] \\
    & = [f_n, f_{n-1} \circ \lambda^\cat{D}, \dots, I
    \otimes f_1] \\
    & \sim [f_n, f_{n-1}, \lambda^\cat{D}, \dots, I
    \otimes f_1]
  \end{align*}
  and so inductively using \eqref{eq:comp-equiv} to compose the coherence
  isomorphism with the next morphism in line, naturality of that coherence
  isomorphism in the origin category, and the ability to swap coherence
  isomorphisms between the two categories using \eqref{eq-lambda} shows that
  this is natural. The cases for the remaining coherence isomorphisms are
  entirely analogous.
  
  For the coherence conditions, the strategy is similar. Since every coherence 
  condition of symmetric monoidal categories concerns compositions of monoidal 
  products of coherence isomorphisms and identities, we can use the 
  equivalences between coherence isomorphisms \eqref{eq-lambda} 
  as well as the two congruences \eqref{eq:comp-cong} and \eqref{eq-cong} to 
  bring every morphism involved into the same category, e.g., $\cat{C}$. Then, 
  we can inductively use the composition rule \eqref{eq:comp-equiv} to turn the 
  list into a singleton corresponding to the exact same coherence condition,
  but now phrased in $\cat{C}$. We can then finally use coherence in $\cat{C}$ 
  to show the desired identity, and then run the entire process in reverse to
  lift this in $\SymMonAmalg(\cat{C}, \cat{D})$. For example, that the triangle 
  \[\begin{tikzcd}[ampersand replacement=\&]
    {(A \otimes I) \otimes B} \&\& {A \otimes (I \otimes B)} \\
    \& {A \otimes B}
    \arrow["\alpha", from=1-1, to=1-3]
    \arrow["{\id \otimes \lambda}", from=1-3, to=2-2]
    \arrow["{\rho \otimes \id}"', from=1-1, to=2-2]
  \end{tikzcd}\]
  commutes (this is one of the coherence conditions) follows by
  \begin{equation*}
    ([] \otimes [\lambda^\cat{D}]) \circ [\alpha^\cat{C}] \sim ([] \otimes 
    [\lambda^\cat{C}]) \circ [\alpha^\cat{C}] =
    [\id \otimes \lambda^\cat{C}, \alpha^\cat{C}] \sim
    [\id \otimes \lambda^\cat{C} \circ \alpha^\cat{C}] = [\rho^\cat{C} \otimes 
    \id] = [\rho^\cat{C}] \otimes [] \enspace.
  \end{equation*}
  Again, the choice of $[\lambda^\cat{D}]$, $[\alpha^\cat{C}]$, 
  $[\rho^\cat{C}]$ is purely for illustration, and we could equivalently have 
  chosen ones from the other category by the coherence equivalences. All other 
  coherence conditions may be shown using the same strategy.
\end{proof}

As with the amalgamation of mere categories, one can show that this extends to a pushout of monoidal embeddings. Further, the embeddings we presented earlier into the amalgamation of mere categories extends to well-behaved ones in the symmetric monoidal case well:

\begin{proposition}\label{prop:monembamalg}
  There are strict monoidal embeddings $\mathscr{E}_L : \cat{C} \to 
  \SymMonAmalg(\cat{C},\cat{D})$ and $\mathscr{E}_R : \cat{D} \to 
  \SymMonAmalg(\cat{C},\cat{D})$ given by $X \mapsto X$ on objects and $f 
  \mapsto [f]$ on morphisms.
\end{proposition}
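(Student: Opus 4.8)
The plan is to observe that the underlying functoriality and faithfulness of $\mathscr{E}_L$ and $\mathscr{E}_R$ have already been established at the level of mere categories in Proposition~\ref{prop-embed-amalg} (since the objects, morphisms, identities, and composition of $\SymMonAmalg(\cat{C},\cat{D})$ are exactly those of $\Amalg(\cat{C},\cat{D})$), so the genuinely new content is \emph{strict monoidality}. Concretely, for $\mathscr{E}_L$ (the case of $\mathscr{E}_R$ being symmetric) I would verify the four requirements of a strict monoidal functor: preservation of the unit on the nose, preservation of $\otimes$ on objects, preservation of $\otimes$ on morphisms, and preservation of each coherence isomorphism.

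Three of these are immediate from Definition~\ref{def:symmonamalg}. The unit of $\SymMonAmalg(\cat{C},\cat{D})$ is by definition the common unit $I$, and $\otimes$ on objects is defined to agree with $\cat{C}$, so $\mathscr{E}_L(I)=I$ and $\mathscr{E}_L(A\otimes B)=\mathscr{E}_L(A)\otimes\mathscr{E}_L(B)$ hold literally. Likewise, the coherence isomorphisms of $\SymMonAmalg(\cat{C},\cat{D})$ are defined to be the classes $[\alpha^\cat{C}]$, $[\sigma^\cat{C}]$, $[\lambda^\cat{C}]$, $[\rho^\cat{C}]$, which are exactly the images under $\mathscr{E}_L$ of the corresponding coherence isomorphisms of $\cat{C}$; for $\mathscr{E}_R$ one additionally invokes \eqref{eq-lambda} to identify $[\alpha^\cat{D}]$ with $[\alpha^\cat{C}]$ (and similarly for the others), so that the $\cat{D}$-coherence isomorphisms land on the unique coherence isomorphisms of the amalgamation.

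The one genuine computation is preservation of $\otimes$ on morphisms, i.e.\ that $[f\otimes g]=[f]\otimes[g]$ for $f\colon A\to B$ and $g\colon C\to D$ in $\cat{C}$. Unfolding the monoidal product on morphisms for singleton lists gives $[f]\otimes[g]=[f\otimes\id_D,\,\id_A\otimes g]$, after which \eqref{eq:comp-equiv} collapses the list to $[(f\otimes\id_D)\circ(\id_A\otimes g)]$ and bifunctoriality of $\otimes$ in $\cat{C}$ rewrites the composite as $f\otimes g$, yielding $[f\otimes g]=\mathscr{E}_L(f\otimes g)$ as required. I expect the only mild obstacle to be the origin-category bookkeeping: one must check that the two factors $f\otimes\id_D$ and $\id_A\otimes g$ really live in the \emph{same} category, so that the ordinary composition rule \eqref{eq:comp-equiv} applies rather than the genuinely new relation \eqref{eq-bifun}. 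For $\mathscr{E}_L$ this holds because both factors originate in $\cat{C}$, and symmetrically for $\mathscr{E}_R$; everything else is a direct reading-off of the definitions, so the proof is short.
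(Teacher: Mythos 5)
Your proposal is correct and follows essentially the same route as the paper's proof: functoriality is inherited from Proposition~\ref{prop-embed-amalg}, preservation of $\otimes$ on objects and of the coherence isomorphisms is definitional, and the only real computation is $[f]\otimes[g] = [f\otimes\id,\,\id\otimes g] \sim [(f\otimes\id)\circ(\id\otimes g)] = [f\otimes g]$, which is exactly the paper's chain read in reverse. Your explicit remark that both factors carry the same origin tag (so that \eqref{eq:comp-equiv} rather than \eqref{eq-bifun} applies) is a point the paper leaves implicit, but it is the same argument.
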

\begin{proof}
  We show the case for $\mathscr{E}_L$, as $\mathscr{E}_R$ is 
  entirely analogous. $\mathscr{E}_L$ was shown to be functorial in 
  Prop.~\ref{prop-embed-amalg}, so suffices to show that it preserves 
  coherence isomorphisms and the monoidal product exactly on objects and
  morphisms. On objects $\mathscr{E}_L(A \otimes B) = A \otimes B = 
  \mathscr{E}_L(A) \otimes \mathscr{E}_L(B)$. On morphisms 
  $
  \mathscr{E}_L(f \otimes g) = [f \otimes g] = [f \otimes \id \circ \id
  \otimes g] \sim [f \otimes \id, \id
  \otimes g] = [f] \otimes [g] = \mathscr{E}_L(f) \otimes
  \mathscr{E}_L(g)
  $. Finally, on coherence isomorphisms $\beta$, $\mathscr{E}_L(\beta) 
  = [\beta^\cat{C}] \sim [\beta^\cat{D}] = \mathscr{E}_L(\beta)$, as 
  desired.
\end{proof}

It follows straightforwardly using the construction of inverses in Prop.~\ref{prop-amalg-groupoid} that amalgamation preserves symmetric monoidal \emph{groupoids} as well:

\begin{corollary}
  $\SymMonAmalg(\cat{C}, \cat{D})$ is a symmetric monoidal groupoid when 
  $\cat{C}$ and $\cat{D}$ are.
\end{corollary}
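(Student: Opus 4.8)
The plan is to combine the two facts already in hand. By Proposition~\ref{prop:amalgsymmon}, $\SymMonAmalg(\cat{C}, \cat{D})$ is symmetric monoidal whenever $\cat{C}$ and $\cat{D}$ are, so the only thing left to verify is that it is a groupoid, i.e., that every morphism is invertible. Since the objects, morphisms, identities, and composition of $\SymMonAmalg(\cat{C}, \cat{D})$ are defined exactly as in $\Amalg(\cat{C}, \cat{D})$ (Definition~\ref{def:symmonamalg}), the sole difference being that morphisms are equivalence classes under the \emph{extended} equivalence rather than under the $\sim$ of Definition~\ref{def-amalg}, I would reuse the inverse construction of Proposition~\ref{prop-amalg-groupoid} essentially verbatim.

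Concretely, I would first set $[f_n, \dots, f_1]^{-1} = [f_1^{-1}, \dots, f_n^{-1}]$, taking inverses in the respective origin categories, which exist because $\cat{C}$ and $\cat{D}$ are groupoids. The induction on list length carried out in Proposition~\ref{prop-amalg-groupoid} shows that $[f_n, \dots, f_1] \circ [f_1^{-1}, \dots, f_n^{-1}] = []$ and symmetrically, and crucially that argument invokes only the equivalences \eqref{eq:identity-equiv}, \eqref{eq:comp-equiv}, and \eqref{eq:comp-cong}. Since the extended equivalence of Definition~\ref{def:symmonamalg} is defined to contain all three of these, the very same chain of equalities goes through in $\SymMonAmalg(\cat{C}, \cat{D})$, supplying the required inverses, and the corollary follows at once once combined with Proposition~\ref{prop:amalgsymmon}.

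The one genuine point to address is well-definedness: the inverse is defined on list representatives, so I would check that it descends to the coarser equivalence classes of $\SymMonAmalg(\cat{C}, \cat{D})$. This is the step I expect to require the most care, but it resolves cleanly via the standard fact that inverses in any category are unique when they exist: having shown that every morphism of $\SymMonAmalg(\cat{C}, \cat{D})$ admits an inverse through the formula above, any two representatives of a single morphism necessarily determine the same inverse class. Equivalently, one may package the argument by observing that the canonical identity-on-objects quotient functor $\Amalg(\cat{C}, \cat{D}) \to \SymMonAmalg(\cat{C}, \cat{D})$ is surjective on hom-sets and that functors preserve inverses, so that invertibility of every morphism of $\Amalg(\cat{C}, \cat{D})$ from Proposition~\ref{prop-amalg-groupoid} transports directly to $\SymMonAmalg(\cat{C}, \cat{D})$.
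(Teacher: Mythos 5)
Your proposal is correct and matches the paper's own (implicit) argument: the paper states the corollary as an immediate consequence of Proposition~\ref{prop:amalgsymmon} together with the inverse construction of Proposition~\ref{prop-amalg-groupoid}, exactly as you do, with the transfer justified because the extended equivalence of Definition~\ref{def:symmonamalg} contains the relations \eqref{eq:identity-equiv}, \eqref{eq:comp-equiv}, and \eqref{eq:comp-cong} used there. Your additional remark on well-definedness is sound but not strictly needed, since exhibiting an inverse for an arbitrary representative already shows every morphism is invertible, and inverses in a category are unique.
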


We conclude this section with a lemma stating that whenever a
functor \emph{out of} a symmetric monoidal amalgamation is needed, it
is sufficient to consider functors out of each of the underlying
categories. The lemma will be used to prove the existence of a
computationally universal model of $\PizhLang$ in
Theorem.~\ref{thm:approx-universality-pizh}.

\begin{lemma}\label{lem:monamalgfree}
  Let $\cat{C}$ and $\cat{D}$ be symmetric monoidal categories with the same
  objects such that their monoidal structures agree. For any other symmetric
  monoidal category $\cat{E}$, to give a strict monoidal identity-on-objects
  functor $\SymMonAmalg(\cat{C}, \cat{D}) \to \cat{E}$ is to give strict
  monoidal identity-on-objects functors $\cat{C} \to \cat{E}$ and $\cat{D} \to
  \cat{E}$.
\end{lemma}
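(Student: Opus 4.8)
The plan is to establish the claimed correspondence as a bijection of data, by showing that $\SymMonAmalg(\cat{C}, \cat{D})$ has the universal property of a pushout (a coproduct over the shared objects) in strict monoidal identity-on-objects functors. One direction is immediate: given a strict monoidal identity-on-objects functor $F \colon \SymMonAmalg(\cat{C}, \cat{D}) \to \cat{E}$, I would precompose with the embeddings of Prop.~\ref{prop:monembamalg} to obtain $F \circ \mathscr{E}_L \colon \cat{C} \to \cat{E}$ and $F \circ \mathscr{E}_R \colon \cat{D} \to \cat{E}$; since $\mathscr{E}_L$ and $\mathscr{E}_R$ are strict monoidal and identity-on-objects, so are these composites.

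For the substantive direction, given strict monoidal identity-on-objects functors $G \colon \cat{C} \to \cat{E}$ and $H \colon \cat{D} \to \cat{E}$, I would define $F$ to be the identity on objects (legitimate since all three categories share objects) and, on a morphism represented by a tagged list, to send $[f_n, \dots, f_1]$ to the composite $\tilde f_n \circ \cdots \circ \tilde f_1$ in $\cat{E}$, where $\tilde f_i = G(f_i)$ when $f_i$ is tagged $\cat{C}$ and $\tilde f_i = H(f_i)$ when tagged $\cat{D}$, with the empty list going to the identity. The crux is to check that this descends to equivalence classes, i.e.\ respects the least congruence $\sim$; since $\sim$ is generated, it suffices to verify $F$ respects each generating equation and is compatible with the two congruence rules. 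Equations \eqref{eq:identity-equiv} and \eqref{eq:comp-equiv} follow from functoriality of $G$ and $H$, while \eqref{eq:comp-cong} and \eqref{eq-cong} follow once we observe that $F$ turns concatenation and the list-product into composition and monoidal product in $\cat{E}$.

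The two genuinely load-bearing cases are \eqref{eq-bifun} and \eqref{eq-lambda}, and I expect \eqref{eq-bifun} to be the main obstacle, precisely because it is the one equation coupling morphisms from the two different categories. For \eqref{eq-bifun}, strict monoidality and identity-on-objects let me rewrite $F(f \otimes \id) = \tilde f \otimes \id$ and $F(\id \otimes g) = \id \otimes \tilde g$ in $\cat{E}$, whereupon both sides of \eqref{eq-bifun} collapse to $\tilde f \otimes \tilde g$ by the interchange law of $\cat{E}$; thus the equation that is \emph{not} formally derivable inside the amalgamation becomes automatic once transported into $\cat{E}$. For \eqref{eq-lambda}, strictness is again essential: it forces $G$ and $H$ to send the coherence isomorphisms $\alpha$, $\sigma$, $\lambda$, $\rho$ of $\cat{C}$ and of $\cat{D}$ to the single corresponding coherence isomorphism of $\cat{E}$, so that $F([\beta^\cat{C}]) = \beta^\cat{E} = F([\beta^\cat{D}])$ for each such $\beta$.

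Once well-definedness is secured, functoriality is immediate (empty list to identity, concatenation to composition), and strict monoidality follows by unwinding the definition of the list-product and collapsing it with the interchange law in $\cat{E}$, together with $F([\beta^\cat{C}]) = \beta^\cat{E}$ for coherence isomorphisms. By construction $F \circ \mathscr{E}_L = G$ and $F \circ \mathscr{E}_R = H$. Finally, uniqueness holds because any strict monoidal identity-on-objects functor restricting to $G$ and $H$ along the embeddings is pinned down on singletons $[f^\cat{C}] = \mathscr{E}_L(f)$ and $[f^\cat{D}] = \mathscr{E}_R(f)$, hence on every morphism since each is a composite of singletons, and on objects by identity-on-objects. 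This exhibits the two assignments as mutually inverse, giving the claimed bijection.
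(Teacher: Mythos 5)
Your proposal is correct and follows essentially the same route as the paper: one direction by precomposition with the embeddings $\mathscr{E}_L, \mathscr{E}_R$, the other by folding a tagged list through $G$ and $H$ and verifying strict monoidality via the interchange law in $\cat{E}$. If anything, you are more explicit than the paper on one point: the paper's proof never spells out that the induced functor descends through the equivalence $\sim$ (in particular the cross-category clause \eqref{eq-bifun}), which you correctly single out as the load-bearing case.
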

\begin{proof}
  Given a strict monoidal identity-on-objects functor $F :
  \SymMonAmalg(\cat{C}, \cat{D}) \to \cat{E}$, we compose with the
  (strict monoidal identity-on-objects) functors $\mathscr{E}_L$ and
  $\mathscr{E}_R$ to obtain the required functors $F \circ \mathscr{E}_L \colon \cat{C} \to \cat{E}$ and $F \circ \mathscr{E}_R \colon \cat{D} \to \cat{E}$.
  
  In the other direction, given strict monoidal identity-on-objects functors $G \colon \cat{C} \to \cat{E}$ and $H \colon \cat{D} \to \cat{E}$, we define a functor
  $F_{G,H} \colon \SymMonAmalg(\cat{C}, \cat{D}) \to \cat{E}$ on objects by
  $F_{G,H}(A) = G(A) = H(A) = A$. Given some morphism $[f_n, \dots, f_1]$,
  assume without loss of generality that each $f_i$ originates in $\cat{C}$ for
  all even $i$, and in $\cat{D}$ for all odd $n$. We define $F_{G,H}([]) = \id$ 
  and
  $$
    F_{G,H}([f_n, \dots, f_1]) = G(f_n) \circ H(f_{n-1}) \circ \cdots \circ 
    H(f_1)
  $$
  This is immediately functorial. To see that it is strict monoidal, $F_{G,H}(A\otimes B) = F_{G,H}(A) \otimes F_{G,H}(B)$ follows trivially, while
  \begin{align*}
    F_{G,H}([f_n, \dots, f_1] \otimes [g_m, \dots, g_1]) & = 
    F_{G,H}([f_n \otimes \id, \dots, f_1 \otimes \id, \id \otimes g_m, \dots, 
    \id \otimes g_1]) \\
    & = G(f_n \otimes \id) \circ \cdots \circ H(f_1 \otimes \id) \circ
    G(\id \otimes g_m) \circ \cdots \circ H(\id \otimes g_1) \\
    & = G(f_n) \otimes \id \circ \cdots \circ H(f_1) \otimes \id \circ
    \id \otimes G(g_m) \circ \cdots \circ \id \otimes H(g_1) \\
    & = (G(f_n) \circ \cdots \circ H(f_1)) \otimes \id \circ
    \id \otimes (G(g_m) \circ \cdots \circ H(g_1)) \\
    & = (G(f_n) \circ \cdots \circ H(f_1)) \otimes (G(g_m) \circ \cdots \circ 
    H(g_1)) \\
    & = F_{G,H}([f_n, \dots, f_1]) \otimes F_{G,H}([g_m, \dots, g_1])
  \end{align*}
  That this preserves coherence isomorphisms such as the associator 
  $[\alpha^\cat{C}] \sim [\alpha^\cat{D}]$ 
  follows by $G(\alpha) = F_{G,H}([\alpha^\cat{C}]) = F_{G,H}([\alpha^\cat{D}]) 
  = H(\alpha) = \alpha$, and similarly for the unitors $\lambda$, $\rho$ and 
  symmetry $\sigma$. Further, $F_{G,H}$ is clearly uniquely determined by $G$ 
  and $H$, i.e., $F_{G,H} \circ \mathscr{E}_L = G$ and $F_{G,H} \circ 
  \mathscr{E}_R = H$.
\end{proof}

\subsection{The Amalgamation Arrow}
\label{sub:moreamalg}

Semantically, arrows correspond to (identity-on-objects) strict
\emph{premonoidal} functors between \emph{premonoidal}
categories~\cite{powerrobinson:premonoidal,jacobsheunenhasuo:arrows},
a special case of these being the more well-behaved
(identity-on-objects) strict \emph{monoidal} functors between
\emph{monoidal} categories. In this way, the strict monoidal functors
$\Unitary \to \SymMonAmalg(\Unitary, \Aut_{R\phid}(\Unitary))
\leftarrow \Aut_{R\phid}(\Unitary)$ of
Prop.~\ref{prop:monembamalg} provide a semantics for the arrow
combinators. We summarise this is in the following proposition.

\begin{proposition}
  The strict monoidal functors $\mathscr{E}_L$ and $\mathscr{E}_R$ are 
  arrows over the categories $\Unitary$ and $\Aut_{R\phid}(\Unitary)$.
\end{proposition}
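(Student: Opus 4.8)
The plan is to invoke the semantic characterisation of arrows recalled immediately above: an arrow over a symmetric monoidal category $\cat{B}$ is the same data as an identity-on-objects strict premonoidal functor from $\cat{B}$ into a premonoidal category~\cite{jacobsheunenhasuo:arrows,powerrobinson:premonoidal}. Since every symmetric monoidal category is in particular premonoidal, and every strict monoidal functor is in particular strict premonoidal, it is enough to exhibit $\mathscr{E}_L$ and $\mathscr{E}_R$ as identity-on-objects strict \emph{monoidal} functors into a symmetric monoidal category --- the ``well-behaved special case'' of the correspondence. Both ingredients are already in hand: by Prop.~\ref{prop:amalgsymmon} the target $\SymMonAmalg(\Unitary, \Aut_{R\phid}(\Unitary))$ is symmetric monoidal, and by Prop.~\ref{prop:monembamalg} the maps $\mathscr{E}_L$ and $\mathscr{E}_R$ are strict monoidal and send $X \mapsto X$ on objects. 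Instantiating the characterisation with $\cat{B} = \Unitary$ for $\mathscr{E}_L$ and $\cat{B} = \Aut_{R\phid}(\Unitary)$ for $\mathscr{E}_R$ then yields the claim.

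To make the correspondence concrete, I would match the three arrow combinators to structure that has already been verified. The lifting combinator $\cm{arr}$ is the action on morphisms $f \mapsto [f]$, lifting $\PizLang$-programs through $\mathscr{E}_L$ and $\PihLang$-programs through $\mathscr{E}_R$; that this respects identities and composition is precisely functoriality of the embeddings from Prop.~\ref{prop-embed-amalg}. Sequential composition $\ggg$ is interpreted by composition in $\SymMonAmalg$, i.e.\ concatenation of lists, and the parallel combinator $\arrprod$ is interpreted by the amalgamated monoidal product. The arrow laws governing the interaction of these operations --- functoriality of $\cm{arr}$, its compatibility with the parallel product, and the exchange and associativity equations for $\arrprod$ --- are then exactly the functoriality and strict-monoidality equations packaged in Props.~\ref{prop-embed-amalg} and~\ref{prop:monembamalg}, so no new calculation is required.

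The only genuine point to settle, which I expect to be the main (and only mild) obstacle, is to confirm that the abstract notion of arrow being invoked is the premonoidal-functor formulation above, and in particular that the arrow is required to respect solely the multiplicative structure $(\otimes, I)$ rather than the additive one $(\oplus, O)$. This is exactly the design described in Sec.~\ref{sec:story}: the two arrows share the multiplicative structure, whose interpretation is common to $\PizLang$, $\PihLang$, and their amalgamation, whereas the additive structure is deliberately not lifted. Once this alignment is granted, the proposition is immediate from the characterisation together with Prop.~\ref{prop:amalgsymmon} and Prop.~\ref{prop:monembamalg}.
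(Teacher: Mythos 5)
Your proposal is correct and follows essentially the same route as the paper: the paper offers no separate proof for this proposition, instead justifying it by the immediately preceding remark that arrows correspond to identity-on-objects strict premonoidal functors (with strict monoidal functors into monoidal categories as the well-behaved special case), and then citing Prop.~\ref{prop:monembamalg} for the strict monoidal embeddings $\mathscr{E}_L$ and $\mathscr{E}_R$ into the symmetric monoidal category of Prop.~\ref{prop:amalgsymmon}. Your additional unpacking --- matching $\cm{arr}$, $\ggg$, and $\arrprod$ to the embedding, concatenation, and amalgamated product, and noting that only the multiplicative structure $(\otimes, I)$ is lifted --- is consistent with the paper's design and adds detail the paper leaves implicit.
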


\subsection{Model of $\PizhLang$: $\SymMonAmalg(\Unitary, \Aut_{R\phid}(\Unitary))$}
\label{sub:pizhmodel}

Given models of $\PizLang$ (Def.~\ref{def:piz-sem}) and $\PihLang$ (Def.~\ref{def:pih-sem}), using Def.~\ref{def:symmonamalg} we can give a model of $\PizhLang$:

\begin{definition}\label{def:pizh-sem}
  Given a value for $\phi$, a model of $\PizhLang$ is the symmetric
  monoidal groupoid
  \[\SymMonAmalg(\Unitary, \Aut_{R\phid}(\Unitary))\]
  with $\Unitary$ and $\Aut_{R\phid}(\Unitary)$ considered as
  symmetric monoidal groupoids equipped with their monoidal products
  $(\otimes, I)$.
\end{definition}

In other words, $\SymMonAmalg(\Unitary, \Aut_{R\phid}(\Unitary))$ identifies the monoidal products $(\otimes, I)$ in $\Unitary$ and $\Aut_{R\phid}(\Unitary)$, but leaves their respective monoidal sums $(\oplus, O)$ alone. This may seem like a very curious choice---perhaps even a wrong one!---but is done for very deliberate reasons, which we describe here. 

First, identifying the two monoidal products is entirely reasonable, since $R\phid_{A \otimes B} = R\phid_A \otimes R\phid_B$, so the monoidal product on morphisms in $\Aut_{R\phid}(\Unitary)$ is really 
\[\begin{array}{rcl}
(R\phid_{A' \otimes B'})^{-1} \circ (f \otimes g) \circ R\phid_{A \otimes B} &=& (R\phid_{A'})^{-1} \otimes (R\phid_{B'})^{-1} \circ (f \otimes g) \circ R\phid_{A} \otimes R\phid_{B} \\
&= &((R\phid_{A'})^{-1} \circ f \circ R\phid_{A}) \otimes ((R\phid_{B'})^{-1} \circ g \otimes R\phid_{B})
\end{array}\]
\textit{i.e.}, the monoidal product in $\Unitary$ of morphisms from $\Aut_{R\phid}(\Unitary)$ (on objects, the two monoidal products agree on the nose). From this it also follows that the coherence isomorphisms for the monoidal product (\textit{i.e.}, the associator $\alpha_\otimes$, unitors $\lambda_\otimes$ and $\rho_\otimes$, and symmetry $\sigma_\otimes$) in $\Aut_{R\phid}(\Unitary)$ all coincide with those in $\Unitary$ by naturality, since, \textit{e.g.}
\[\begin{array}{rcl}
  (R\phid_{A \otimes (B \otimes C)})^{-1} \circ \alpha \circ 
  R\phid_{(A \otimes B) \otimes C}
  &=& (R\phid_{A})^{-1} \otimes ((R\phid_B)^{-1} \otimes (R\phid_C)^{-1}) \circ \alpha \circ 
  (R\phid_{A} \otimes R\phid_B) \otimes R\phid_C \\
  &=& (R\phid_{A})^{-1} \otimes ((R\phid_B)^{-1} \otimes (R\phid_C)^{-1}) \circ 
  R\phid_{A} \otimes (R\phid_B \otimes R\phid_C) \circ \alpha \\
  &=& \alpha
\end{array}\]
and likewise for the unitors and symmetry. As such, all of the morphisms that are identified by the symmetric monoidal amalgamation $\SymMonAmalg(\Unitary, \Aut_{R\phid}(\Unitary))$ are ones which were equal to begin with.

Second, one may wonder why we do not go further and identify the monoidal \emph{sums} in $\Unitary$ and $\Aut_{R\phid}(\Unitary)$ as well. In short, this is because it would confine $\PizhLang$ to being a classical language! We saw in Sec.~\ref{sub:models-pi} that the symmetry of the monoidal sum $\sigma_\oplus$ in $\Aut_{R\phid}(\Unitary)$ was
$(\begin{smallmatrix}
  \sin{2\phi} & \cos{2\phi} \\
  \cos{2\phi} & - \sin{2\phi}
\end{smallmatrix})$
whereas in $\Unitary$ it is the usual swap
$(\begin{smallmatrix}
  0 & 1 \\ 1 & 0 
\end{smallmatrix})$, and, indeed, the fact that we have both of these is central to our approach. However, identifying the monoidal sums would force us to identify these as well, destroying any hope of $\PizhLang$ being more expressive than $\PizLang$ or $\PihLang$ on their own. One cannot even hope to identify the monoidal sums as mere monoidal structures (as opposed to as \emph{symmetric} monoidal structures), since the bifunctoriality clause of the equivalence (\textit{i.e.}, clause \eqref{eq-bifun} of Def.~\ref{def:symmonamalg}) fails for $\Unitary$ and $\Aut_{R\phid}(\Unitary)$ on $I \oplus I$ in all nontrivial cases.

\begin{figure}[t]
\begin{align*}
  b & \defeq 0 \mid 1 \mid b+b \mid b \times b & \text{(value types)} \\
  t & \defeq b \isozh b & \text{(combinator types)} \\
  m & \defeq \nil \mid c_Z \cons m \mid c_{\phi} \cons m & \text{(amalgamations)}
\end{align*}
\begin{equation*}
  \frac{}{\nil \of b \isozh b} \qquad
  \frac{c \of b_1 \isoz b_2 \quad cs \of b_2 \isozh b_3}{c \cons cs \of b_1 
  \isozh b_3} \qquad
  \frac{c \of b_1 \isoh b_2 \quad cs \of b_2 \isozh b_3}{c \cons cs \of b_1 
  \isozh b_3}
\end{equation*}
\caption{\label{fig:pizh}Syntax and type rules of $\PizhLang$. The combinators $c_Z$ and $c_{\phi}$ are $\PiLang$ combinators (Fig.~\ref{fig:pi}) tagged with their sublanguage of origin.}
\end{figure}

\begin{figure}[t]
\begin{align*}
  m & \defeq \ldots \mid m \append m & \text{(derived amalgamations)} \\
  d & \defeq m \mid \cm{arr}_Z~c_Z \mid \cm{arr}_{\phi}~c_{\phi} \mid d \ggg d 
   & \text{(derived combinators)} \\
  & \mid \pid \mid \swapt \mid \assoct \mid \associt 
  \mid \unitet \mid \unitit \\
  & \mid \cm{first}~d \mid \cm{second}~d \mid d \arrprod d \mid \inv~d
\end{align*}
\caption{\label{fig:pizhp}Derived $\PizhLang$ constructs.}
\end{figure}

\subsection{$\PizhLang$: Syntax, Arrow Combinators, and Computational Universality}\label{sub:pizh-lang}
\label{sub:pizh}

As the two languages $\PizLang$ and $\PihLang$ share the same syntax,
their syntactic amalgamation in Fig.~\ref{fig:pizh} is rather
straightforward. We simply build sequences of expressions coming from
either language. To disambiguate amalgamations, we will annotate terms
from $\PizLang$ and $\PihLang$ by their language of origin and write,
\textit{e.g.}, $\xgate_{\phi}$ for the $\xgate$ isomorphism from
$\PihLang$. Further, we will consider the cons operator~$\cons$ to be
right associative, and use list notation such as $[\cm{swap}_{\phi}^+,
  \cm{swap}_Z^+]$ as syntactic sugar for the amalgamation
$\cm{swap}_{\phi}^+ \cons \cm{swap}_Z^+ \cons \nil$.

For convenience, we introduce the meta-operation $\cdot @ \cdot$ that
takes two amalgamations and forms the amalgamation given by their
concatenation, \textit{e.g.}, $[c_1,c_2] \append [c_3,c_4] =
[c_1,c_2,c_3,c_4]$. As such, any amalgamation can be uniquely
described as a finite heterogeneous list of terms from~$\PihLang$
and~$\PizLang$. But there is no need to reason about raw lists since
$\PizhLang$ is an \emph{arrow} over both $\PizLang$ and $\PihLang$
that lifts the underlying multiplicative structure to the combined
language (see Fig.~\ref{fig:pizhp} for the derived arrow constructs).
We recall that the construction only involves lifting \emph{products}
(via the arrow combinators $\cm{first}$, $\cm{second}$, and
$\arrprod$) and not also \emph{sums}; in other words, we merely define
an \emph{arrow} and not an \emph{arrow with
  choice}~\cite{hughes:arrows}. The reason is that doing so in any
meaningful way would require semantically identifying the sum
structures of $\PizLang$ and $\PihLang$, and that this, in turn, would
prevent quantum behaviours from emerging from the construction.

\begin{figure}
  \begin{equation*}
    \begin{array}{rclrclrcl}
      \multicolumn{9}{l}{\textbf{Semantics of new constructs}} \\
      &&& \sem{[c_1, \dots, c_n]} &=& [\sem{c_1}, \dots, \sem{c_n}]&&&\\[1.5ex]
      \multicolumn{9}{l}{\textbf{Derived identities}} \\
      \sem{\cm{arr}_Z~c} &=& \mathscr{E}_L(\sem{c}) &
      \sem{\cm{arr}_{\phi}~c} &=& \mathscr{E}_R(\sem{c}) & 
      \sem{d_1 \ggg d_2} &=& \sem{d_2} \circ \sem{d_1} \\
      \sem{\id} &=& [] &
      \sem{\swapt} &=& [\sigma_\otimes] \\
      \sem{\assoct} &=& [\alpha_\otimes] &
      \sem{\associt} &=& [\alpha_\otimes^{-1}] \\
      \sem{\unitet} &=& [\rho_\otimes] &
      \sem{\unitit} &=& [\rho_\otimes^{-1}] \\
      \sem{d_1 \arrprod d_2} &=& \sem{d_1} \otimes \sem{d_2} &
      \sem{\inv~d} &=& \sem{d}^\dagger \\
      \sem{\cm{first}~d} &=& \sem{d} \otimes \id &
      \sem{\cm{second}~d} &=& \id \otimes \sem{d} 
    \end{array}
  \end{equation*}
  \caption{The arrow semantics of $\PizhLang$.}
  \label{fig:pizh-arr-sem}
\end{figure}

The semantics of $\PizhLang$ is given in
Fig.~\ref{fig:pizh-arr-sem}. First the lifting of combinators builds
singleton lists:
\[
\cm{arr}_Z(c_Z) = [c_Z] \qquad\text{and}\qquad \cm{arr}_{\phi}(c_{\phi}) = [c_{\phi}]
\]
and composition of amalgamations is given by their concatenation:
\[
{cs}_1 \ggg {cs}_2 = cs_1 \append cs_2
\]
To define $\cm{first}$ we need to make use of meta-level recursion in order to traverse amalgamations. To do this, we notice that both $\PizLang$ and $\PihLang$ are trivially arrows, with arrow lifting given by the identity, arrow composition given by composition, and $\cm{first}~c$ given by $\cm{first}~c = c \times \pid$. As such, $\cm{first}$ can be defined in $\PizhLang$ by simply mapping this underlying combinator over the list, \textit{i.e.}, 
$$ \cm{first}~xs = \cm{map}~\cm{first}~xs$$ To derive $\cm{second}$,
we note that we can define all of the combinators relating to $\times$
(precisely: $\swapt, \assoct, \associt, \unitet, \unitit$) by lifting
them from either $\PizLang$ or $\PihLang$. It turns out not to matter
which we choose, as they are equivalent. Arbitrarily, we define their
liftings in $\PizhLang$ to be those from $\PizLang$, \textit{e.g.},
\[
\swapt = \cm{arr}_Z(\swapt_Z)
\]
and so on for the remaining ones. We can then derive $\cm{second}$ and $\arrprod$ in the usual way as:
\[
\cm{second} = \swapt \ggg \cm{first} \ggg \swapt \enspace \quad\text{and}\quad
xs \arrprod ys = \cm{first}~xs \ggg \cm{second}~ys
\]

We conclude this section by showing that there exists a particular
model in which $\PizhLang$ is computationally universal for quantum
circuits.

\begin{theorem}
  \label{thm:approx-universality-pizh}
  If $\phi$ is chosen to be $\pi/8$, the model of $\PizhLang$ is
  computationally universal for quantum circuits, \textit{i.e.},
  unitaries on Hilbert spaces of dimension $2^n$ (for any natural
  number $n$).
\end{theorem}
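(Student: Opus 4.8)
The plan is to make precise what it means for the \emph{model} $\SymMonAmalg(\Unitary, \Aut_{R\phid}(\Unitary))$ to be universal, and then to exhibit a computationally universal gate set among the unitaries it realises. A morphism of the amalgamation is an equivalence class of lists, so to speak of genuine unitaries I first build a canonical \emph{flattening} functor
\[
  F \colon \SymMonAmalg(\Unitary, \Aut_{R\phid}(\Unitary)) \longrightarrow \Unitary
\]
that collapses a list $[f_n,\dots,f_1]$ to the composite $f_n \circ \cdots \circ f_1$ in $\Unitary$. By Lemma~\ref{lem:monamalgfree} it suffices to supply two strict monoidal identity-on-objects functors: for one I take the identity $\id \colon \Unitary \to \Unitary$, and for the other I take the inclusion $\iota \colon \Aut_{R\phid}(\Unitary) \to \Unitary$ sending each morphism $(R\phid_B)^{-1}\circ U \circ R\phid_A$ to the same underlying unitary. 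The only thing to verify is that $\iota$ is well defined and strict monoidal; this is immediate, since composition in $\Aut_{R\phid}(\Unitary)$ is inherited from $\Unitary$ and, as established in Sec.~\ref{sub:pizhmodel}, its tensor product and the associated coherence isomorphisms coincide on the nose with those of $\Unitary$. The resulting $F$ is forced (the only sensible operator meaning of a list is its composite), and it automatically respects the extended equivalence of Def.~\ref{def:symmonamalg} because Lemma~\ref{lem:monamalgfree} guarantees well-definedness.

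Next I would compute the denotations of two specific programs under $F$. For the $Z$ copy, $F(\sem{\cm{arr}_Z~\ccxgate}) = \sem{\ccxgate}$ (using $F \circ \mathscr{E}_L = \id$), which is exactly the quantum Toffoli gate on $\sem{\bool\times\bool\times\bool} = \mathbb{C}^2\otimes\mathbb{C}^2\otimes\mathbb{C}^2$: the $Z$ copy interprets $\PiLang$ in $\Unitary$, $\PiLang$ expresses Toffoli, and its internal use of $\dist$ and $\factor$ happens entirely inside $\Unitary$ and never at the amalgamation layer. For the $\phi$ copy, $F(\sem{\cm{arr}_\phi~\swapp}) = \iota(\sigma_\oplus)$, and on $\bool = I \oplus I$ the symmetry $\sigma_\oplus$ of $\Aut_{R\phid}(\Unitary)$ is the conjugated swap computed after Def.~\ref{def:rx}, namely $\begin{pmatrix}\sin 2\phi & \cos 2\phi \\ \cos 2\phi & -\sin 2\phi\end{pmatrix}$. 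Setting $\phi = \pi/8$ gives $2\phi = \pi/4$ and hence $\tfrac{1}{\sqrt2}\begin{pmatrix}1 & 1 \\ 1 & -1\end{pmatrix}$, exactly the Hadamard gate. This is the key arithmetic observation that pins down the angle.

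With both gates available I would assemble arbitrary qubit circuits. The shared multiplicative structure of $\PizhLang$ (via $\arrprod$, $\cm{first}$, $\cm{second}$, $\swapt$, $\assoct$, $\associt$) together with sequential composition $\ggg$ places a Toffoli on any three wires and a Hadamard on any wire of an $n$-qubit register $\sem{\bool^n} = \mathbb{C}^{2^n}$, and these route correctly because $F$ is strict monoidal. Hence the set of unitaries realised by $\PizhLang$ programs contains the Toffoli--Hadamard gate set, which is computationally universal by Theorem~\ref{thm:aharonov}; restricting attention to the types $\bool^n$ matches the dimension-$2^n$ hypothesis of that theorem and completes the argument. (One could equally invoke the second clause of Theorem~\ref{thm:aharonov}: for generic $\phi$ the flattened $\cm{arr}_\phi~\swapp$ is already a real basis-changing single-qubit gate, so the choice $\phi=\pi/8$ is convenient rather than essential.)

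I expect the main obstacle to be conceptual rather than computational: making rigorous the claim that universality of \emph{the model} is witnessed through the flattening functor $F$, and checking that the roles of $\id$ and $\iota$ in Lemma~\ref{lem:monamalgfree} indeed yield a strict monoidal functor compatible with clauses \eqref{eq-bifun} and \eqref{eq-lambda}. Everything else is either the angle computation $2\phi = \pi/4 \Rightarrow$ Hadamard or the observation that Toffoli lives entirely in the $Z$ copy; the genuinely substantive input, Theorem~\ref{thm:aharonov}, is imported wholesale.
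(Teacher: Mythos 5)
Your proposal is correct and takes essentially the same approach as the paper: the paper's proof likewise uses Lemma~\ref{lem:monamalgfree}, pairing the identity functor on $\Unitary$ with the underlying-unitary inclusion of $\Aut_{R\phid}(\Unitary)$ into $\Unitary$, to obtain a strict monoidal interpretation under which $\cm{arr}_Z~\ccxgate$ denotes Toffoli and $\cm{arr}_\phi~\swapp$ denotes Hadamard at $\phi=\pi/8$, then invokes Theorem~\ref{thm:aharonov}. Your only additions---spelling out the $2\phi=\pi/4$ computation and noting that any generic $\phi$ would do via the real basis-changing clause of Theorem~\ref{thm:aharonov}---are consistent with, and implicit in, the paper's argument.
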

\begin{proof}
  Define a functor $\Aut_{R\phidpi}(\Unitary) \to \Unitary$ given
  simply by $A \mapsto A$ and $(R\phidpi_B)^{-1} \circ f \circ R\phidpi_A \mapsto
  (R\phidpi_B)^{-1} \circ f \circ R\phidpi_A$; this is strict monoidal and
  identity-on-objects. By Lem.~\ref{lem:monamalgfree}, this functor,
  along with the identity $\Unitary \to \Unitary$ uniquely define a
  (strict monoidal identity-on-objects) functor $\sem{\cdot} :
  \SymMonAmalg(\Unitary, \Aut_{R\phidpi}(\Unitary)) \to \Unitary$
  sending~$\PizLang$ programs to their usual interpretation lifted
  into unitaries, and $\PihLang$ programs to the meaning of the
  corresponding $\PizLang$ program conjugated by appropriate $R\phidpi_A$'s.
  Under this interpretation,
  $\sem{\cm{arr}_Z~\ccxgate}$ is the Toffoli
  gate, and $\sem{\cm{arr}_{\phi}~\xgate}$ is the Hadamard gate, while at
  the same time $\sem{c_1 \arrprod c_2} = \sem{c_1} \otimes \sem{c_2}$
  and $\sem{c_1 \ggg c_2} = \sem{c_2} \circ \sem{c_1}$, allowing
  parallel and sequential composition of gates. But then it follows by
  Thm.~\ref{thm:aharonov} that $\PizhLang$ is computationally
  universal.
\end{proof}

\section{$\SPizheLang$ from States and Effects}
\label{sec:spice}

The previous language $\PizhLang$ is not expressive enough to define
state preparations (and dually effect preparations). More generally,
the language imposes a rigid constraint on quantum circuits: the
number of wires must remain constant throughout the entire circuit: it
is not possible to temporarily allocate an ancilla, use it during a
subexpression, and discard it after. More importantly, the implicit
structures for cloning and joining classical states embedded in
$\PizhLang$ are not visible: making these explicit is necessary for us
to state constraints on their interaction from which quantum
properties will emerge. After a short exposition on cloning in quantum
computing, we develop the categorical construction to define classical
structures and ancillae and then use it to generalise $\PizhLang$ with
the notions of states $\ket{\cdot}$ and effects $\bra{\cdot}$ yielding
the language $\SPizheLang$.

\subsection{Classical Structures} 

There is some subtlety about copying in quantum computing. The
no-cloning theorem states that it is not possible to clone an
arbitrary quantum state. However, it is possible (and quite common) to
clone the subset of quantum states that are ``classical.'' For
example, the following maps are definable:
\[\begin{array}{rcl}
\ket{0} &\mapsto& \ket{00} \\
\ket{1} &\mapsto& \ket{11} \\
1/\sqrt{2} ~(\ket{0} + \ket{1}) &\mapsto& 1/\sqrt{2}~(\ket{00} + \ket{11})
\end{array}\]
If partial operations are allowed, then these maps are reversible:
that is to say, the classical clone maps are injective but not
surjective, so their inverses are only \emph{partial} injective
functions.

An important property of these classical clone maps is that their
behaviour is \emph{basis dependent}. In particular, the above maps
assume the clone operation is defined in the computational
$Z$-basis. If we instead use the $X$-basis $= \{ \ket{+}, \ket{-} \}$,
we get that cloning $\ket{+} = 1/\sqrt{2} ~(\ket{0} + \ket{1})$
produces $\ket{++} = 1/2 (\ket{00} + \ket{01} + \ket{10} + \ket{11})$
which is quite different from cloning the same state in the $Z$-basis.

As we will establish, the cloning operations in $\PizLang$ and
$\PihLang$ each satisfy the properties of classical structures
necessary for quantum behaviour to emerge in the next section.

\subsection{Monoidal Indeterminates}
\label{sub:ind}

At a categorical level, the only missing ingredient is to allow for
morphisms to manipulate ancilla systems $\mathscr{J}_N(U)$ generated
by a single object $N$. The first step in this construction is, given
a symmetric monoidal category $\cat{C}$, to define the monoidal
subcategory generated by taking arbitrary monoidal products of a
single generating object $N$ with itself, with only monoidal coherence
isomorphisms between them.
\begin{definition}
  Given a symmetric monoidal category $\cat{C}$ with a distinguished object $N$,
  define the symmetric monoidal category $\Gen_N(\cat{C})$ as follows.
  \begin{itemize}
    \item \textbf{Objects:} All finite monoidal products of objects generated 
    by $I$ and $N$, \textit{e.g.}, $I$, $N \otimes I$, $N \otimes (I \otimes N)$, etc.
    \item \textbf{Morphisms:} All finite monoidal products of morphisms 
    generated by identities as well as the monoidal coherence isomorphisms
    $\lambda_\otimes$, $\rho_\otimes$, $\sigma_\otimes$, $\alpha_\otimes$ and
    their inverses.
    \item \textbf{Composition and identities}: As in $\cat{C}$.
    \item \textbf{Monoidal structure}: As in $\cat{C}$.
  \end{itemize}
\end{definition}
This is clearly a symmetric monoidal subcategory of $\cat{C}$ with a strict
monoidal embedding $\mathscr{J}_N : \Gen_N(\cat{C}) \to \cat{C}$ given by the
identity on objects and morphisms. Putting this together:
\begin{definition}
  Define a symmetric monoidal category $\cat{C}_N$ as follows:
  \begin{itemize}
    \item \textbf{Objects:} As in $\cat{C}$.
    \item \textbf{Morphisms:} Morphisms $A \to B$ are equivalence
      classes of triples $[U,f,V]$ consisting of two objects $U$ and
      $V$ of $\Gen_N(\cat{C})$ and a morphism $A \otimes \mathscr{J}_N(U)
      \to B \otimes \mathscr{J}_N(V)$ under the equivalence $\sim$
      below.
    \item \textbf{Identities:} The identity $A \to A$ is the equivalence class 
    of $\id_{A \otimes I}$ (since $\mathscr{J}_N(I) = 
    I$).
    \item \textbf{Composition}: The composition of $[U,f,V]$ and $[W,g,X]$ with
    $f \colon A \otimes \mathscr{J}_N(U) \to B \otimes \mathscr{J}_N(V)$ and $g \colon B
    \otimes \mathscr{J}_N(W) \to C \otimes \mathscr{J}_N(X)$ is the equivalence
    class of the representative $A \otimes \mathscr{J}_N(U \otimes W) \to C \otimes
    \mathscr{J}_N(V \otimes X)$ in $\cat{C}$ given by 
    \[
    [U \otimes W,
    \alpha_\otimes \circ g \otimes \id_{\mathscr{J}_N(V)} \circ
    \alpha_\otimes^{-1} \circ \id_B \otimes \sigma_\otimes \circ \alpha_\otimes
    \circ f \otimes \id_{\mathscr{J}_N(W)} \circ \alpha_\otimes^{-1}, X \otimes
    V] \text.
    \]
    \item \textbf{Monoidal structure}: On objects as in $\cat{C}$. On
    morphisms, the monoidal product of $[U, f, V]$ and $[W, g, X]$ with $f \colon A
    \otimes \mathscr{J}_N(U) \to B \otimes \mathscr{J}_N(V)$ and $g \colon C \otimes \mathscr{J}_N(W) \to D \otimes \mathscr{J}_N(X)$ is $[U \otimes W,
    \vartheta^{-1} \circ f \otimes g \circ \vartheta, V \otimes X]$, where
    $\vartheta \colon (A \otimes C) \otimes (B \otimes D)
    \to (A \otimes B) \otimes (C \otimes D)$ is the evident isomorphism. Coherence isomorphisms $\beta$
    are given by $[I, \beta \otimes \id_I, I]$.
  \end{itemize}
  Define the equivalence relation $\sim$ as the least such satisfying (for all $U$ and $V$)
  \begin{equation}
    f \otimes \mathscr{J}_N(\id_U) \sim f \otimes \mathscr{J}_N(\id_V)
  \end{equation}
  and $f \sim g$ if there exist mediators $m \colon U \to U'$ and $n \colon V \to V'$ in 
  $\Gen_N(\cat{C})$ making the square below commute.
  \begin{equation}
    \begin{tikzcd}[ampersand replacement=\&]
    	{A \otimes \mathscr{J}_N(U)} \& {B \otimes \mathscr{J}_N(V)} \\
    	{A \otimes \mathscr{J}_N(U')} \& {B \otimes \mathscr{J}_N(V')}
    	\arrow["f", from=1-1, to=1-2]
    	\arrow["g"', from=2-1, to=2-2]
    	\arrow["{\id \otimes \mathscr{J}_N(m)}"', from=1-1, to=2-1]
    	\arrow["{\id \otimes \mathscr{J}_N(n)}", from=1-2, to=2-2]
    \end{tikzcd}
  \end{equation}
\end{definition}
This describes a symmetric monoidal
category~\cite{hermidatennent:indeterminates}. Note that the two
clauses in the equivalence relation are necessary precisely to ensure
uniqueness of identities and associativity of composition. For a given
morphism $[X, f, Y]$, we will collectively denote the objects $X$ and
$Y$ as the \emph{ancilla system} of $[X, f, Y]$.

Importantly, this also defines an arrow over $\cat{C}$ in the form of a strict
monoidal functor:

\begin{proposition}
  There is a strict monoidal functor $\mathscr{F}_N \colon \cat{C} \to \cat{C}_N$
  (for any choice of $N$) given by $A \mapsto A$ on objects and $f \mapsto [I,
  f \otimes \id_I, I]$ on morphisms.
\end{proposition}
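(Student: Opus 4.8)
The plan is to check the three things required of a strict monoidal functor: that the assignment is well-defined and functorial, and that it preserves the monoidal product and coherence isomorphisms strictly. Well-definedness is immediate, since to a morphism $f \colon A \to B$ of $\cat{C}$ we assign the specific equivalence class of the triple $[I, f \otimes \id_I, I]$, a legitimate morphism $A \to B$ of $\cat{C}_N$ because $\mathscr{J}_N(I) = I$ makes $f \otimes \id_I$ a morphism $A \otimes \mathscr{J}_N(I) \to B \otimes \mathscr{J}_N(I)$. Preservation of identities is equally short: $\mathscr{F}_N(\id_A) = [I, \id_A \otimes \id_I, I] = [I, \id_{A \otimes I}, I]$ by bifunctoriality of $\otimes$ in $\cat{C}$, and this is exactly the identity of $\cat{C}_N$.

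The substance lies in preservation of composition. Writing $\tilde f = f \otimes \id_I$ and $\tilde g = g \otimes \id_I$, the composition formula with all four ancilla objects equal to $I$ presents $\mathscr{F}_N(g) \circ \mathscr{F}_N(f)$ as the class of a triple $[I \otimes I, h, I \otimes I]$ whose middle component $h$ is a zig-zag of associators and one symmetry wrapped around $\tilde g \otimes \id_I$ and $\tilde f \otimes \id_I$. I would show this equals $\mathscr{F}_N(g \circ f) = [I, (g \circ f) \otimes \id_I, I]$ by exhibiting the mediators $m = n = \lambda_\otimes \colon I \otimes I \to I$ (a coherence isomorphism, hence a morphism of $\Gen_N(\cat{C})$) and checking that the mediator square commutes, i.e.\ that $(\id_C \otimes \lambda_\otimes) \circ h = ((g \circ f) \otimes \id_I) \circ (\id_A \otimes \lambda_\otimes)$. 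Because every structural isomorphism appearing in $h$ is built solely from the unit $I$, this is a pure coherence computation: coherence collapses the $\alpha_\otimes$'s and the $\sigma_\otimes$ on $I \otimes I$, after which bifunctoriality yields $\tilde g \circ \tilde f = (g \circ f) \otimes \id_I$.

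Strictness on objects is trivial, as $\mathscr{F}_N$ is the identity on objects and the product of $\cat{C}_N$ agrees with that of $\cat{C}$ there. Preservation of coherence isomorphisms is free: for any structural $\beta$ of $\cat{C}$ one has $\mathscr{F}_N(\beta) = [I, \beta \otimes \id_I, I]$, which is by definition the corresponding coherence isomorphism of $\cat{C}_N$. There remains preservation of the product on morphisms, $\mathscr{F}_N(f \otimes g) = \mathscr{F}_N(f) \otimes \mathscr{F}_N(g)$, which I would settle by the same device: the right-hand side is the class of $[I \otimes I, \vartheta^{-1} \circ (\tilde f \otimes \tilde g) \circ \vartheta, I \otimes I]$, and the mediators $\lambda_\otimes \colon I \otimes I \to I$ together with a coherence argument identify it with $[I, (f \otimes g) \otimes \id_I, I]$.

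The main obstacle I anticipate is bookkeeping rather than anything conceptual: tracking the unit-laden associators and symmetry in the composition and monoidal-product formulas and discharging the two mediator squares. Both reduce to coherence for symmetric monoidal categories once one notices that the whole ancilla system stays at $I$, so that $\mathscr{F}_N$ makes no genuine use of $N$ and lands in the morphisms with trivial ancilla.
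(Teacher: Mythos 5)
Your proof is correct, but it takes a genuinely different route from the paper. The paper's entire proof is a citation: it invokes Hermida and Tennent's result that $\mathscr{F}_N$ is \emph{strong} monoidal for the monoidal-indeterminates construction, and then remarks that strictness follows because the embedding $\Gen_N(\cat{C}) \to \cat{C}$ is strict monoidal (identity on objects and morphisms). You instead verify everything from scratch, and your verification is sound: the key device --- mediating by the unitor $\lambda_\otimes \colon I \otimes I \to I$, which is legitimately a morphism of $\Gen_N(\cat{C})$, and then discharging the mediator square by naturality plus symmetric monoidal coherence (in particular $\sigma_{I,I} = \id$ and $\lambda_I = \rho_I$, so that $h$ collapses to $(g \circ f) \otimes \sigma_{I,I}$ and both sides of the square equal $(g \circ f) \otimes \lambda_\otimes$) --- is exactly what makes the composition and product cases go through, and the identity and coherence-isomorphism cases hold definitionally as you say. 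What the paper's approach buys is brevity and the reassurance that the construction as a whole (including well-definedness of $\cat{C}_N$ itself) is backed by the literature; what your approach buys is a self-contained argument that makes explicit \emph{why} strictness holds in this presentation, namely that the ancilla system never leaves the unit $I$, so all the bookkeeping is absorbed by coherence. Your observation that $\mathscr{F}_N$ ``makes no genuine use of $N$'' is precisely the content of the paper's parenthetical ``for any choice of $N$.''
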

\begin{proof}
  \citet{hermidatennent:indeterminates} show that this functor is
  strong monoidal; it follows readily that it is strict monoidal when
  the functor $\Gen_N(\cat{C}) \to \cat{C}$ is.
\end{proof}

The category $\cat{C}_N$ contains all morphisms $f$ of $\cat{C}$ by lifting them into the trivial ancilla system $[I, f \otimes \id_I, I]$. However, it also adds a state $I \to N$ (as the equivalence class of $\sigma_\otimes : I \otimes N \to N \otimes I$, \textit{i.e.}, $[N, \sigma_\otimes, I]$) and an effect $N \to I$ (as the equivalence class of $\sigma_\otimes : N \otimes I \to I \otimes N$, \textit{i.e.}, $[I, \sigma_\otimes, N]$). 

One may reasonably wonder whether this construction adds more than this unique state and effect; after all, there are many more ancilla systems than just $I$ and $N$. The answer, informally, is no. The trivial ancilla system (\textit{e.g.}, $[I, f, I]$) is needed to account for morphisms that do not use states or effects at all, but multiple occurrences of $I$ in an ancilla system (\textit{e.g.}, $I \otimes I$ or $(I \oplus I) \otimes I$) are vacuous, as we are able to remove them by mediating with the left or right unitor using the equivalence relation. This leaves only ancilla systems of the form $(I \oplus I)^{\otimes n}$, where precise bracketing does not matter as we are able to mediate by arbitrary combinations of associators. These ancilla systems represent multiple uses of the unique state or effect.

The more formal answer to this question is that the functor $\mathscr{F}_N$ is universal with this property:
\begin{proposition}\label{prop-ht-univ}
  Given any symmetric monoidal category $\cat{D}$ outfitted with distinguished
  morphisms $I \to N$ and $N \to I$, and strict monoidal functor $F : \cat{C}
  \to \cat{D}$, there is a unique strict monoidal functor $\widehat{F} : \cat{C}_N
  \to \cat{D}$ making the triangle below commute.
  \[\begin{tikzcd}[ampersand replacement=\&]
  	{\cat{C}} \&\& {\cat{C}_N} \\
  	\&\& {\cat{D}}
  	\arrow["F"', from=1-1, to=2-3]
  	\arrow["{\mathscr{F}_N}", from=1-1, to=1-3]
  	\arrow["{\widehat{F}}", dashed, from=1-3, to=2-3]
  \end{tikzcd}\]
\end{proposition}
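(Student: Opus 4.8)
The plan is to define $\widehat{F}$ explicitly from the distinguished state and effect, which I write as $s \colon I \to FN$ and $e \colon FN \to I$ (using $FI = I$ since $F$ is strict monoidal), and then to verify in turn that it is well-defined on equivalence classes, functorial, strict monoidal, compatible with the triangle, and unique. First I would package the state and effect along ancilla systems: for every object $U$ of $\Gen_N(\cat{C})$ define $s_U \colon I \to F(\mathscr{J}_N(U))$ by induction, with $s_I = \id_I$, $s_N = s$, and $s_{U \otimes U'} = (s_U \otimes s_{U'}) \circ \lambda_\otimes^{-1}$, using strict monoidality to identify $F(\mathscr{J}_N(U \otimes U'))$ with $F(\mathscr{J}_N(U)) \otimes F(\mathscr{J}_N(U'))$; dually define effects $e_V \colon F(\mathscr{J}_N(V)) \to I$. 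On objects set $\widehat{F}(A) = FA$ (forced by the triangle), and on a morphism $[U, f, V] \colon A \to B$ set
\[
  \widehat{F}([U,f,V]) = \rho_\otimes \circ (\id_{FB} \otimes e_V) \circ F(f) \circ (\id_{FA} \otimes s_U) \circ \rho_\otimes^{-1},
\]
that is: feed the input state into the ancilla, run $F(f)$, and discard the output ancilla with the effect. By construction this sends the canonical state $[N, \sigma_\otimes, I]$ to $s$ and the canonical effect $[I, \sigma_\otimes, N]$ to $e$, which is the extra data implicitly pinning down the functor.

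Next I would discharge well-definedness. For the identity-ancilla clause this is immediate since $\mathscr{J}_N$ preserves identities. For the mediator clause I would use that the canonical states and effects are compatible with mediators: since every morphism $m \colon U \to U'$ of $\Gen_N(\cat{C})$ is a monoidal combination of coherence isomorphisms and $F$ preserves these on the nose, coherence gives $s_{U'} = F(\mathscr{J}_N(m)) \circ s_U$ and $e_V = e_{V'} \circ F(\mathscr{J}_N(n))$; substituting the commuting mediator square into the defining formula then collapses $\widehat{F}(g)$ to $\widehat{F}(f)$.

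I expect the main difficulty to be functoriality, specifically preservation of composition. (Preservation of identities is clear, as $[I, \id_{A \otimes I}, I]$ has $s_I = e_I = \id$.) Here one must take the shuffling representative $\alpha_\otimes \circ (g \otimes \id) \circ \alpha_\otimes^{-1} \circ (\id \otimes \sigma_\otimes) \circ \alpha_\otimes \circ (f \otimes \id) \circ \alpha_\otimes^{-1}$ of the composite in $\cat{C}_N$, apply $F$, and insert the combined state $s_{U \otimes W}$ and effect $e_{X \otimes V}$. The crucial facts are that $s_{U \otimes W} = s_U \otimes s_W$ and $e_{X \otimes V} = e_X \otimes e_V$ up to unitors, so that after using strict monoidality to rewrite each $\alpha_\otimes$ and $\sigma_\otimes$ as the corresponding coherence isomorphism of $\cat{D}$ and invoking their naturality, the state $s_W$ slides past $F(f)$ to meet $g$'s input ancilla while the effect $e_V$ meets $f$'s output ancilla. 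The interchange law in $\cat{D}$ then factors the whole composite as $\widehat{F}([W,g,X]) \circ \widehat{F}([U,f,V])$.

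Strict monoidality I would treat analogously, using the $\vartheta$-formula for the monoidal product of morphisms together with $s_{U \otimes W} = s_U \otimes s_W$, and noting $\widehat{F}([I, \beta \otimes \id_I, I]) = \beta$ for preservation of coherence isomorphisms. The triangle $\widehat{F} \circ \mathscr{F}_N = F$ holds because $\mathscr{F}_N(f) = [I, f \otimes \id_I, I]$ and $s_I = e_I = \id$ reduce the defining formula to $F(f)$. Finally, for uniqueness I would observe that every morphism $[U, f, V]$ factors in $\cat{C}_N$ as a monoidal combination of canonical effects, composed with $\mathscr{F}_N(f)$, composed with a monoidal combination of canonical states; any competing strict monoidal functor agreeing with $F$ along $\mathscr{F}_N$ and sending the canonical state and effect to $s$ and $e$ is thereby forced to agree with $\widehat{F}$ on each factor, hence everywhere.
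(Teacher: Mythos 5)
Your overall plan---proving the universal property by a direct evaluation construction, where the paper simply invokes Theorem~2.9 of Hermida and Tennent for a dual pair of monoidal-indeterminates constructions---is a legitimate route, and most of your verifications (mediator well-definedness via coherence, functoriality via naturality of the symmetry and interchange, strict monoidality, the triangle, and uniqueness by factoring $[U,f,V]$ through the generic state and effect) are sound. The genuine gap sits at exactly the step you dismiss as immediate: well-definedness for the clause $f \otimes \mathscr{J}_N(\id_U) \sim f \otimes \mathscr{J}_N(\id_V)$. Writing $f \colon A \otimes \mathscr{J}_N(U') \to B \otimes \mathscr{J}_N(V')$ and suppressing unitors and associators, your formula evaluates the padded representative to
\[
  (\id_{FB} \otimes e_{V'} \otimes e_U) \circ (Ff \otimes \id) \circ (\id_{FA} \otimes s_{U'} \otimes s_U)
  \;=\; \widehat{F}([U',f,V']) \otimes (e_U \circ s_U) ,
\]
that is, to $\widehat{F}([U',f,V'])$ scalar-multiplied by $e_U \circ s_U \colon I \to I$: the padding ancilla is fed the state and then immediately consumed by the effect. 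The clause demands that this be independent of the padding object; comparing a padding by $U$ against a padding by $I$ (whose scalar is $\id_I$) forces $e_U \circ s_U = \id_I$ for every $U$, in particular $e \circ s = \id_I$ in $\cat{D}$. Nothing in ``$\mathscr{J}_N$ preserves identities'' yields this, it is not among the stated hypotheses, and you never derive it, so as written your $\widehat{F}$ is not constant on equivalence classes.

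Nor is this a detail you could have argued around: the condition is genuinely necessary. The clause in question is what makes identities unique in $\cat{C}_N$; it gives $[N, \id_{A \otimes N}, N] = \id_A$, and combined with mediators it makes the composite of the canonical effect after the canonical state, $[I, \sigma_\otimes, N] \circ [N, \sigma_\otimes, I]$, equal to $\id_I$ already in $\cat{C}_N$. Hence any strict monoidal $\widehat{F}$ sending the canonical state and effect to $s$ and $e$---the implicit side condition which, as you rightly note, is what makes uniqueness meaningful at all---must satisfy $e \circ s = \id_I$ in $\cat{D}$. So the proposition holds only for distinguished pairs composing to the identity; this is satisfied in the intended instance ($\bra{0}\ket{0} = 1$ in $\cat{Contraction}$, reflected syntactically in the execution law $\cm{zero} \ggg \cm{assertZero} = \pid$ of Prop.~\ref{prop:classexeclaws}), and it is tacitly inherited from the Hermida--Tennent setting in the paper's citation-style proof. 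A blind proof has to either impose this normalisation as a hypothesis or discover that it is forced; yours does neither.
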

\begin{proof}
  The construction is a dual pair of \emph{monoidal indeterminates}
  constructions of \citet{hermidatennent:indeterminates} (using the simplified
  form for the equivalence due to
  \citet[Def.~8]{andresmartinezheunenkaarsgaard:universal}, so the 
  theorem follows from Thm~2.9 of \citet{hermidatennent:indeterminates},
  noting that this straightforwardly extends from strong monoidal functors to
  strict monoidal ones when the inclusion $\cat{C}_N \to \cat{C}$ is strict
  monoidal.
\end{proof}

Note that the morphisms $I \to N$ and $N \to I$ are considered part of the
\emph{structure} of $\cat{D}$ in the above. As such, in as far as $\cat{D}$ has
other choices of morphisms $I \to N$ and $N \to I$, the theorem states that for
any such choice of morphisms and strict monoidal functor $\cat{C} \to \cat{D}$,
there is a unique strict monoidal functor $\cat{C}_N \to \cat{D}$.

Observe further that $\cat{C}_N$ is a dagger category when $\cat{C}$ is, with $[X, f, Y]^\dagger = [Y, f^\dagger, X]$. With this dagger structure, the adjoint of the state $I \to N$ is the effect $N \to I$ and vice versa.

\subsection{Models of $\SPizheLang$: $\SymMonAmalg(\zlang{\Unitary}, \hlang{\Aut_{R\phid}(\Unitary))}_{I\oplus I}$ and $\cat{Contraction}$}
\label{sub:spicemodel}

As explained in the previous section, given a model $\cat{C}$ of
$\PizhLang$, we can define the category $\cat{C}_{N}$ which extends
$\cat{C}$ to model a language equipped with a state $I \to N$ and an
effect $N \to I$. In order to model quantum computation, we specialize
this free model by fixing the semantics of the state and effect to
correspond to the traditional basis vector $\ket{0}$ and dual vector
$\bra{0}$. This is achieved by choosing $N=I \oplus I$ making
$\SymMonAmalg(\zlang{\Unitary},
\hlang{\Aut_{R\phid}(\Unitary))}_{I\oplus I}$ a model of
$\SPizheLang$. This model embeds in $\Contraction$, the universal
dagger rig category containing all unitaries, states, and
effects~\cite{andresmartinezheunenkaarsgaard:universal} as follows.

Recall how we gave $\PizhLang$ unitary semantics via a
(identity-on-objects) strict monoidal functor $\sem{-} :
\SymMonAmalg(\Unitary, \Aut_{R\phid}(\Unitary))$. Recall further that
the category $\Contraction$ of finite-dimensional Hilbert spaces and
contractions contains all states as morphisms $I \to A$ (as these are
isometries), all effects as morphisms $A \to I$ (as these are
coisometries). Since all unitaries are contractions we get an
inclusion functor $\Unitary \to \Contraction$ (which is easily seen
to be a strict monoidal dagger functor), and precomposing with the
strict monoidal functor $\sem{-} \colon \SymMonAmalg(\Unitary,
\Aut_{R\phid}(\Unitary)) \to \Unitary$ yields a functor:
\[
  \SymMonAmalg(\Unitary, \Aut_{R\phid}(\Unitary)) \to \Contraction 
\]
Now, choosing $\ket{0}$ as the distinguished state $I \to I \oplus I$, and
$\bra{0}$ as the distinguished effect $I \oplus I \to I$, by
Prop.~\ref{prop-ht-univ} we get a unique extended functor
\[
  \sem{-} \colon \SymMonAmalg(\Unitary, \Aut_{R\phid}(\Unitary))_{I \oplus I} \to \Contraction
\]
which assigns semantics to $\SPizheLang$ in $\Contraction$.

\begin{figure}[t]
\begin{align*}
  b & \defeq 0 \mid 1 \mid b+b \mid b \times b & \text{(value types)} \\
  n & \defeq 1 \mid 1+1 \mid n \times n & \text{(ancilla types)} \\
  t & \defeq b \arzh b & \text{(combinator types)} \\
  p & \defeq \cm{lift}~m & \text{(primitives)} 
\end{align*}
\begin{equation*}
  \frac{xs \of b_1 \times n_1 \isozh b_2 \times n_2}{
  \cm{lift}~xs \of b_1 \arzh b_2}
\end{equation*}
\caption{\label{fig:spizhe}$\SPizheLang$ syntax and type rules. Amalgamations $m$ are defined in Figs.~\ref{fig:pizh} and~\ref{fig:pizhp}.}
\end{figure}

\begin{figure}[t]
\begin{align*}
  d & \defeq p \mid \cm{arr}~m \mid d \ggg d \mid \cm{first}~d \mid 
  \cm{second}~d \mid d \arrprod d  & \text{(derived combinators)} \\
  & \mid \pid \mid \swapt \mid \assoct \mid \associt 
  \mid \unitet \mid \unitit \\
  & \mid \inv~d \mid \cm{zero} \mid \cm{assertZero} 
\end{align*}
\caption{\label{fig:spizhep}Derived $\SPizheLang$ constructs.}
\end{figure}

\begin{figure}
  \begin{equation*}
    \begin{array}{rclrclrcl}
      \multicolumn{9}{l}{\textbf{Semantics of new constructs}} \\
      &&& \sem{\cm{lift}~xs} &=& [\sem{n_1}, \sem{xs}, \sem{n_2}] &&&\\[1.5ex]
      \multicolumn{9}{l}{\textbf{Derived identities}} \\
      \sem{\cm{arr}~m} &=& \mathscr{F}_{I \oplus I}(\sem{m}) &
      \sem{d_1 \ggg d_2} &=& \sem{d_2} \circ \sem{d_1} \\
      \sem{\id} &=& [I, \id, I] &
      \sem{\swapt} &=& [I, \sigma_\otimes \otimes \id, I] \\
      \sem{\assoct} &=& [I, \alpha_\otimes \otimes \id, I] &
      \sem{\associt} &=& [I, \alpha_\otimes^{-1} \otimes \id, I] \\
      \sem{\unitet} &=& [I, \rho_\otimes \otimes \id, I] &
      \sem{\unitit} &=& [I, \rho_\otimes^{-1} \otimes \id, I] \\
      \sem{\cm{first}~d} &=& \sem{d} \otimes \id &
      \sem{\cm{second}~d} &=& \id \otimes \sem{d} &
      \sem{d_1 \arrprod d_2} &=& \sem{d_1} \otimes \sem{d_2} \\
      \sem{\inv~d} &=& \sem{d}^\dagger &
      \sem{\cm{zero}} &=& [I \oplus I, \sigma_\otimes, I] &
      \sem{\cm{assertZero}} &=& [I, \sigma_\otimes, I \oplus I]
    \end{array}
  \end{equation*}
  \caption{The arrow semantics of $\SPizheLang$.}
  \label{fig:spizhe-arr-sem}
\end{figure}

\subsection{Syntax, States, and Effects}
\label{sub:spice}

In $\SPizheLang$, we allow the creation and discarding of a restricted
set of values of ancilla types. As given in Fig.~\ref{fig:spizhe}, the
ancilla types are restricted to be collections of bits. The construct
$\cm{lift}$ allows the discarding of some ancilla $n_1$ and the
creation of some ancilla $n_2$. Like in the previous section, the new
language defines an arrow over $\PizhLang$ with the derived
combinators in Fig.~\ref{fig:spizhep}. Most notably, the language
includes two new derived constructs $\cm{zero}$ and $\cm{assertZero}$
whose semantics are $\ket{0}$ and $\bra{0}$ respectively.

We summarise the arrow semantics of $\SPizheLang$ in
Figure~\ref{fig:spizhe-arr-sem}. To see that this is an arrow, we must
define $\cm{arr}$, $\ggg$, and $\cm{first}$. Bringing combinators from
$\PizhLang$ into $\SPizheLang$ is straightforwardly done by adding the
trivial ancilla $1$ to both the input and output,
$$
\cm{arr}~m = \cm{lift}(\unitet \ggg m \ggg \unitit) \enspace.
$$
This allows us to lift the isomorphisms $\pid$, $\swapt$, $\assoct$, $\associt$, $\unitet$, and $\unitit$ of $\PizhLang$ simply by applying $\cm{arr}$ to them.
To compose lifted $\PizhLang$ terms $m : b_1 \times n_1 \isozh b_2 \times n_2$ and $p : b_2 \times n_3 \isozh b_3 \times n_4$, since ancillae are closed under products, we can form this as the lifting of a term of type $b_1 \times (n_1 \times n_3) \isozh b_3 \times (n_4 \times n_2)$, namely
\begin{align*}
(\cm{lift}~m) \ggg (\cm{lift}~p) & = \cm{lift}(\associt \ggg \cm{first}~m \ggg \assoct \ggg \cm{second}~\swapt \ggg \\
& \phantom{= \cm{lift}(} \associt \ggg \cm{first}~p \ggg \assoct) \enspace.
\end{align*}
Then, $\cm{first}$ can be defined using $\cm{first}$ in $\PizhLang$, since this allows us to extend a lifted term of type $b_1 \times n_1 \isozh b_2 \times n_2$ to one of type $(b_1 \times n_1) \times b_3 \isozh (b_2 \times n_2) \times b_3$, so we need only swap the ancillae back into the rightmost position from there, \textit{i.e.},
\begin{align*}
\cm{first}(\cm{lift}~m) & = \cm{lift}(\assoct \ggg \cm{second}~\swapt \ggg \associt \ggg \cm{first}~m \ggg \\
& \phantom{= \cm{lift}(}\assoct \ggg \cm{second}~\swapt \ggg \assoct) \enspace.
\end{align*}
In turn, $\cm{second}$ and $\arrprod$ are derived exactly as in $\PizhLang$. Inversion is simple since lifted terms are symmetric in having an ancilla type on both their input and output, so we have
$$
\inv(\cm{lift}~m) = \cm{lift}(\inv~m) \enspace.
$$ Finally, the state $\cm{zero}$ and effect $\cm{assertZero}$ exist
as the lifting of $\swapt \of 1 \times (1+1) \isozh (1+1) \times 1$
and $\swapt \of (1+1) \times 1 \isozh 1 \times (1+1)$, \textit{i.e.},
\[
\cm{zero} = \cm{lift}(\swapt) \of 1 \arzh 1+1 \qquad\text{and}\qquad
\cm{assertZero} = \cm{lift}(\swapt) \of 1+1 \arzh 1 \enspace,
\]
bringing the state into and out of focus respectively. A pleasant
consequence of these definitions is that $\inv(\cm{zero}) =
\cm{assertZero}$ and vice versa. More generally, states and effects in
$\SPizheLang$ satisfy the following properties.

\begin{proposition}[Classical Structures for $\PizLang$ and $\PihLang$ and their execution laws]\label{prop:classexeclaws}
To avoid clutter, we will implicitly lift $\PizLang$ and $\PihLang$
gates to $\SPizheLang$, writing $c_Z$ for $\cm{arr}~(\cm{arr}_Z~c)$
and $c_\phi$ for $\cm{arr}~(\cm{arr}_\phi~c)$. Introduce the following
abbreviations:
\begin{align*}
\cm{copy}_Z & =
  \unitit \ggg \pid \arrprod \cm{zero} \ggg \cxgate_Z & 
\cm{copy}_X & =
  \xgate_{\phi} \ggg \cm{copy}_Z \ggg \xgate_{\phi} \arrprod \xgate_{\phi} \\ 
\cm{one} &= \cm{zero} \ggg \xgate_Z &
\cm{assertOne} &= \xgate_Z \ggg \cm{assertZero} 
\end{align*}
The following equations are satisfied in $\SPizheLang$:
\begin{align*}
  \cm{copy}_Z \ggg (\pid \arrprod \cm{copy}_Z) 
  &~=~  
  \cm{copy}_Z \ggg (\cm{copy}_Z \arrprod \pid) \ggg \cm{assoc}^\times
  \\
  \cm{copy}_Z \ggg \swapt
  &~=~ 
  \cm{copy}_Z
  \\
  \cm{copy}_Z \ggg (\inv\; \cm{copy}_Z)
  &~=~
  \id
  \\
  (\cm{copy}_Z \arrprod \pid) \ggg (\pid \arrprod \inv\; \cm{copy}_Z)
  &~=~ 
  (\pid \arrprod \cm{copy}_Z) \ggg (\inv\; \cm{copy}_Z \arrprod \pid) \\
\\
  \cm{copy}_X \ggg (\pid \arrprod \cm{copy}_X) 
  &~=~ 
  \cm{copy}_X \ggg (\cm{copy}_X \arrprod \pid) \ggg \cm{assoc}^\times
  \\
  \cm{copy}_X \ggg \swapt
  &~=~ 
  \cm{copy}_X
  \\
  \cm{copy}_X \ggg (\inv\; \cm{copy}_X)
  &~=~
  \id
  \\
  (\cm{copy}_X \arrprod \pid) \ggg (\pid \arrprod \inv\; \cm{copy}_X)
  &~=~ 
  (\pid \arrprod \cm{copy}_X) \ggg (\inv\; \cm{copy}_X \arrprod \pid) \\
\\
  \cm{zero} \ggg \cm{assertZero} &~=~ \pid \\
  \cm{zero} \arrprod \pid \ggg \ctrlgate~c &~=~ \cm{zero} \arrprod \pid \\
  \cm{one} \arrprod \pid \ggg \ctrlgate~c &~=~ \cm{one} \arrprod c \\
  \cm{zero} \ggg \xgate_{\phi} \ggg \cm{assertOne} &~=~ \cm{one} \ggg \xgate_{\phi} \ggg \cm{assertZero}
\end{align*}
\end{proposition}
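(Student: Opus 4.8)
The plan is to verify every equation semantically, under the interpretation $\sem{-} \colon \SPizheLang \to \Contraction$ constructed in Sec.~\ref{sub:spicemodel}, by computing both sides as concrete contractions and comparing them: the eight copy laws will reduce to the classical structure laws \eqref{eq:classicalstructure1}--\eqref{eq:classicalstructure2}, while the four execution laws reduce to elementary computations with $\ket{0}$, $\ket{1}$, and the controlled gates. It is worth flagging that these equations do \emph{not} hold in the free model $\SymMonAmalg(\Unitary, \Aut_{R\phid}(\Unitary))_{I \oplus I}$ itself --- there $\cm{zero} \ggg \cm{assertZero}$ still carries a nontrivial $I \oplus I$ ancilla, and no mediator in $\Gen_{I\oplus I}$ collapses it --- but they do hold once $\sem{-}$ sends the generic state and effect to $\ket{0}$ and $\bra{0}$, where $\braket{0|0}=1$.

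First I would unwind the definitions in Fig.~\ref{fig:spizhe-arr-sem} to show that $\sem{\cm{copy}_Z}$ is exactly the $Z$-basis copy map $\delta \colon \mathbb{C}^2 \to \mathbb{C}^2 \otimes \mathbb{C}^2$, $\ket{i} \mapsto \ket{ii}$, of \eqref{eq:classicalstructure1}: tracing $\cm{copy}_Z = \unitit \ggg (\pid \arrprod \cm{zero}) \ggg \cxgate_Z$ gives $\ket{b} \mapsto \ket{b} \otimes \ket{0} \mapsto \ket{bb}$. Since $\sem{\inv~d} = \sem{d}^\dagger$, we then have $\sem{\inv~\cm{copy}_Z} = \delta^\dagger$. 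The four copy equations match the classical structure laws once the implicit coherence isomorphisms are accounted for: coassociativity is the left equation of \eqref{eq:classicalstructure1} with $\cm{assoc}^\times$ supplying the associator; $\cm{copy}_Z \ggg \swapt = \cm{copy}_Z$ is its right equation $\sigma \circ \delta = \delta$; $\cm{copy}_Z \ggg (\inv~\cm{copy}_Z) = \id$ is speciality $\delta^\dagger \circ \delta = \id$; and the last is the Frobenius law of \eqref{eq:classicalstructure2}. All hold because $\delta$ is a classical structure (Sec.~\ref{subsec:complementarityagain}).

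For the $\cm{copy}_X$ laws I would compute $\sem{\cm{copy}_X} = (M \otimes M) \circ \delta \circ M$, where $M = \sem{\xgate_{\phi}} = \left(\begin{smallmatrix} \sin 2\phi & \cos 2\phi \\ \cos 2\phi & -\sin 2\phi \end{smallmatrix}\right)$ is the rotated additive symmetry computed in Sec.~\ref{sub:models-pi}. The key observation is that $M$ is real, symmetric, and involutive ($M^2 = \id$, hence $M = M^{-1} = M^\dagger$), so $\sem{\cm{copy}_X} = (M \otimes M) \circ \delta \circ M^\dagger$ is precisely the copy map for the orthonormal basis $\{M\ket{0}, M\ket{1}\}$. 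Since the class of classical structures is closed under conjugation by a unitary, $\sem{\cm{copy}_X}$ is again a classical structure and its four laws follow exactly as for $\cm{copy}_Z$; equivalently, each law reduces to the corresponding $\cm{copy}_Z$ law after cancelling adjacent factors $M M = \id$.

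The execution laws are then direct: $\sem{\cm{zero} \ggg \cm{assertZero}} = \braket{0|0} = 1 = \id_I$; the controlled-gate laws follow because $\sem{\ctrlgate~c}$ fixes $\ket{0} \otimes v$ and sends $\ket{1} \otimes v \mapsto \ket{1} \otimes \sem{c}(v)$, with $\sem{\cm{one}} = \sem{\xgate_Z}\ket{0} = \ket{1}$ and $\sem{\cm{assertOne}} = \bra{0}\sem{\xgate_Z} = \bra{1}$; and the final equation holds since both $\sem{\cm{zero} \ggg \xgate_{\phi} \ggg \cm{assertOne}} = \bra{1} M \ket{0}$ and $\sem{\cm{one} \ggg \xgate_{\phi} \ggg \cm{assertZero}} = \bra{0} M \ket{1}$ equal the entry $\cos 2\phi$, agreeing precisely because $M$ is symmetric. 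I expect the main obstacle to be the first step: carefully unwinding $\sem{-}$ through the three stacked constructions (the symmetric monoidal amalgamation, the monoidal-indeterminates layer, and the embedding $\Unitary \to \Contraction$), keeping track of the ancilla bookkeeping hidden in the derived combinators $\arrprod$, $\cm{first}$, and $\ggg$ of Sec.~\ref{sub:spice}, so as to certify that $\sem{\cm{copy}_Z}$ really is the bare copy map and that the coherence isomorphisms in the arrow-level equations line up with those in \eqref{eq:classicalstructure1}--\eqref{eq:classicalstructure2}; everything after that is routine linear algebra.
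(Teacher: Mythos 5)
Your proposal is correct and takes essentially the same route as the paper: the paper's own proof is an informal version of exactly this semantic verification in the model of Sec.~\ref{sub:spicemodel} (it observes that $\sem{\cm{copy}_Z}$ acts as $v \mapsto (v,v)$, that $\cm{copy}_X$ is the same map in a rotated basis, and reads the execution laws as statements about how $\ket{0}$ and $\ket{1}$ interact with control); your write-up supplies the identification with \eqref{eq:classicalstructure1}--\eqref{eq:classicalstructure2} and the properties of $M$ that the paper leaves implicit, and those details check out.

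One flagged side remark is wrong, however. You claim the equations fail in the free model $\SymMonAmalg(\Unitary, \Aut_{R\phid}(\Unitary))_{I\oplus I}$, citing $\cm{zero} \ggg \cm{assertZero}$ as carrying an ancilla that ``no mediator in $\Gen_{I\oplus I}$ collapses.'' Mediators alone indeed cannot collapse it, but the equivalence relation of the monoidal-indeterminates construction has a further clause, $f \otimes \mathscr{J}_N(\id_U) \sim f \otimes \mathscr{J}_N(\id_V)$, whose purpose is precisely to discharge ancillas that pass through untouched. The composite $\cm{zero} \ggg \cm{assertZero}$ has a representative $[(I \oplus I) \otimes I,\, h,\, (I\oplus I) \otimes I]$ in which $h$ is a composite of coherence maps only, hence equal to the identity by coherence; applying the clause above to a representative of $\pid$ with $U = I\oplus I$ and $V = I$, and mediating by unitors, identifies this triple with $[I, \id, I]$. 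So this particular equation already holds in the free model---necessarily so, since by functoriality every admissible target of Prop.~\ref{prop-ht-univ} must then have its distinguished state and effect compose to the identity (which $\ket{0}$, $\bra{0}$ satisfy, as $\braket{0|0}=1$). Your broader point is right for the other equations: for instance, interpreting the generic state and effect as $\ket{1}$ and $\bra{1}$ is equally legitimate and yields a model in which $\cm{copy}_Z$ denotes $\ket{b} \mapsto \ket{b} \otimes X\ket{b}$, falsifying $\cm{copy}_Z \ggg \swapt = \cm{copy}_Z$ and both controlled-gate laws, so those genuinely require fixing $\sem{\cm{zero}} = \ket{0}$ and $\sem{\cm{assertZero}} = \bra{0}$. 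None of this affects the validity of your verification of the proposition itself, which concerns that fixed semantics.
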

\begin{proof}
The first two groups state that $\cm{copy}_Z$ and $\cm{copy}_X$ are
each a classical cloning map for the relevant basis: the $Z$-basis in
the case of $\cm{copy}_Z$ and some rotated basis depending on $\phi$
for $\cm{copy}_X$. Because we fixed the semantics of $\PizLang$ to be
the standard semantics in $\Unitary$ without any rotation, the action
of $\cm{copy}_Z$ on input $v$ is to produce the pair $(v,v)$.  The
rotated version $\cm{copy}_X$ has the same semantics but in another
basis. The last group of equations are the \emph{execution equations},
which are so named as they describe how the states (and, by duality,
effects) interact with program execution. The first simply shows that
preparing the zero state and then asserting it does nothing at
all. The remaining equations define how states (and, by dualising the
equations, effects) must interact with control, and with one another:
e.g., the second equation shows that passing $\ket{0}$ to a control
line prevents the controlled program from being executed, while the
third shows that passing $\ket{1}$ on a control line executes the
controlled program.
\end{proof}

\subsection{Computational Universality}

Like in the previous section, we can conclude that there exists a
particular model in which $\SPizheLang$ is computationally universal for
quantum circuits equipped with arbitrary states and effects.

\begin{theorem}[Expressivity]\label{thm:approx-universality-spizhe}
  If $\phi$ is chosen to be $\pi/8$, the model of $\SPizheLang$ is
  computationally universal for quantum circuits equipped with
  arbitrary states and effects.
\end{theorem}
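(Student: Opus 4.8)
The plan is to reuse the semantics functor into $\Contraction$ already assembled in Sec.~\ref{sub:spicemodel}, and to argue that, at $\phi = \pi/8$, its image is rich enough to approximate every quantum circuit with states and effects. Concretely, that construction composes the universality functor $\sem{-} : \SymMonAmalg(\Unitary, \Aut_{R\phidpi}(\Unitary)) \to \Unitary$ of Theorem~\ref{thm:approx-universality-pizh} with the strict monoidal dagger inclusion $\Unitary \to \Contraction$, and then extends it along the monoidal indeterminates construction via the universal property of Prop.~\ref{prop-ht-univ} at $N = I \oplus I$, taking the distinguished state to be $\ket{0}$ and the distinguished effect to be $\bra{0}$. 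The result is a strict monoidal functor
\[
  \sem{-} : \SymMonAmalg(\Unitary, \Aut_{R\phidpi}(\Unitary))_{I \oplus I} \to \Contraction \enspace.
\]

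First I would record what this functor already realises: by Theorem~\ref{thm:approx-universality-pizh}, $\sem{\cm{arr}_Z~\ccxgate}$ is the Toffoli gate and $\sem{\cm{arr}_{\phi}~\xgate}$ is Hadamard, while $\sem{d_1 \ggg d_2} = \sem{d_2} \circ \sem{d_1}$ and $\sem{d_1 \arrprod d_2} = \sem{d_1} \otimes \sem{d_2}$ preserve sequential and parallel composition. By construction, the extension additionally sends $\cm{zero}$ to $\ket{0}$ and $\cm{assertZero}$ to $\bra{0}$.

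The substantive step is to upgrade universality-for-unitaries to universality for circuits equipped with arbitrary states and effects. By Theorem~\ref{thm:aharonov}, Toffoli and Hadamard approximate every unitary to within any $\varepsilon > 0$. Adjoining the state $\ket{0}$ and effect $\bra{0}$ then makes every computational basis state and effect expressible (post-compose with $\xgate_Z$), and hence every state and effect up to $\varepsilon$ by applying an approximating universal unitary to $\ket{0}^{\otimes n}$ (dually, precomposing $\bra{0}^{\otimes n}$ with one). Since $\Contraction$ is the universal dagger rig category generated by all unitaries, states, and effects~\cite{andresmartinezheunenkaarsgaard:universal}, every contraction factors as a state preparation, a unitary dilation, and an effect, so this gate set exhausts the morphisms needed.

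The main obstacle is making this last factorisation precise and quantitative: one must show that an arbitrary substate of $\Contraction$ arises, up to arbitrary accuracy, as a universal unitary applied to $\ket{0}^{\otimes n}$ followed by post-selection against $\bra{0}$ on the ancilla factors (a dilation argument), and dually for effects and for general contractions. It is the combination of this dilation with the universal characterisation of $\Contraction$ that turns approximate universality for the unitary gate set into computational universality for quantum circuits equipped with arbitrary states and effects.
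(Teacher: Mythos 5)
Your setup is the paper's own: the functor $\sem{-} \colon \SymMonAmalg(\Unitary, \Aut_{R\phidpi}(\Unitary))_{I \oplus I} \to \Contraction$ built in Sec.~\ref{sub:spicemodel} from Theorem~\ref{thm:approx-universality-pizh}, the inclusion $\Unitary \to \Contraction$, and Prop.~\ref{prop-ht-univ}, and your idea of realising states by unitaries acting on $\ket{0}$ is also the right one. The genuine gap is that you aim the argument at \emph{all} contractions, and the decisive step for that target --- that every contraction arises, quantitatively, as a universal unitary applied to $\ket{0}$-ancillas followed by $\bra{0}$-post-selections --- is precisely what you leave open as ``the main obstacle.'' An argument that ends by naming its own missing step is not a proof. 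Filling it would require a Halmos-type unitary dilation $\left(\begin{smallmatrix} f & (\id - ff^\dagger)^{1/2} \\ (\id - f^\dagger f)^{1/2} & -f^\dagger \end{smallmatrix}\right)$, the identification $A \oplus A \cong A \otimes \mathbb{C}^2$ for qubit spaces together with dimension padding, and the (routine, but unstated) remark that operator-norm errors do not grow under composition with contractions; none of this appears in your proposal. A second, smaller omission: you approximate the state preparation and the circuit unitary \emph{separately}, which silently composes two $\varepsilon$-approximations and needs an error-addition remark.

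The paper avoids all of this by fixing, immediately after the theorem statement, what ``quantum circuits equipped with arbitrary states and effects'' means technically: contractions of the form $S\,\mathcal{U}\,E$, where $S$ is a tensor product of single-qubit states and identities, $\mathcal{U}$ is unitary, and $E$ is the adjoint of such a tensor product. For this class the proof is short, exact-then-approximate, and needs no dilation and no error bookkeeping: by Gram--Schmidt choose exact unitaries $S_i$ with $S_i\ket{0} = s_i$ and $T_i^\dagger$ carrying $\bra{0}$ to $t_i^\dagger$; replace each non-identity $s_i$ and $t_i^\dagger$ by $\ket{0}$ and $\bra{0}$, which are represented \emph{exactly} by products of $\pid$, $\cm{zero}$ and $\cm{assertZero}$ terms $p$ and $q$; absorb all the $S_i$ and $T_i^\dagger$ into the unitary and approximate the single composite unitary $(S_1 \otimes \cdots \otimes S_n)\,\mathcal{U}\,(T_1^\dagger \otimes \cdots \otimes T_n^\dagger)$ once, by a term $u$, using Theorem~\ref{thm:approx-universality-pizh}; then $p \ggg u \ggg q$ approximates the target with the same single $\varepsilon$. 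If you were to carry out your dilation step you would in fact obtain a statement strictly stronger than the paper's (since, e.g., $\tfrac{1}{2}\id$ is a contraction but not of the restricted form without an ancilla); as written, however, the proposal does not prove the claim it announces.
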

The more technical presentation of this theorem is the following. Say
that a \emph{preparation of states} on a $2^n$-dimensional Hilbert
space is a tensor product $s_1 \otimes \cdots \otimes s_n$, where each
$s_i$ is either a state or an identity. Dually, a \emph{preparation of
  effects} is the adjoint $s_1^\dagger \otimes \cdots \otimes
s_n^\dagger$ to a preparation of states $s_1 \otimes \cdots \otimes
s_n$. The theorem then states that $\SPizheLang$ is approximately
universal for contractions $\hilbspace \to \hilbspacep$ between
Hilbert spaces \hilbspace\ (of dimension $2^n$) and \hilbspacep\ (of
dimension $2^m$) of the form $S \mathcal{U} E$, where $S$ is a
preparation of states, $\mathcal{U}$ is unitary, and $E$ is a
preparation of effects.
\begin{proof}
  Let $s_1 \otimes \cdots \otimes s_n$ be some state preparation,
  $\mathcal{U}$ be some unitary, and $t_1^\dagger \otimes \cdots
  \otimes t_n^\dagger$ be some effect preparation.  In the state
  preparation $s_1 \otimes \cdots \otimes s_n$, for each non-identity
  $s_i$, choose some unitary $S_i$ mapping $\ket{0}$ to
  $s_i$. Likewise, in the effect preparation $t_1^\dagger \otimes
  \cdots \otimes t^\dagger_n$, choose for each non-identity
  $t_i^\dagger$ a unitary $T_i^\dagger$ mapping $\bra{0}$ to
  $t_i^\dagger$. Produce now a state preparation $s_i' = s_1' \otimes
  \cdots \otimes s_n'$ where $s_i' = \ket{0}$ if $s_i$ is a state, and
  $s_i' = \id$ if $s_i$ is an identity. Produce an effect preparation
  $t_1' \otimes \cdots \otimes t_n'$ similarly. Notice that $(S_1
  \otimes \cdots \otimes S_n)(s_1' \otimes \cdots \otimes s_n') = s_1
  \otimes \cdots \otimes s_n$ and $(T_1 \otimes \cdots \otimes
  T_n)(t_1' \otimes \cdots \otimes t_n') = t_1 \otimes \cdots \otimes
  t_n$.  However, the state preparation $s_1' \otimes \cdots \otimes
  s_n'$ involves only identities and $\ket{0}$, and so has a direct
  representation in $\PizhLang$ as a product of a number of $\pid$ and
  $\cm{zero}$ terms---call the resulting term~$p$. Likewise, $t_1'
  \otimes \cdots \otimes t_n'$ has a direct representation in
  $\PizhLang$ as a product of a number of $\pid$ and $\cm{assertZero}$
  terms---call the resulting term $q$. Finally, by
  Thm.~\ref{thm:approx-universality-pizh} we can approximate the
  unitary $ (S_1 \otimes \cdots \otimes S_n) \mathcal{U} (T_1^\dagger
  \otimes \cdots \otimes T_n^\dagger)$ by some $\PizhLang$ term
  $u$. But then $p \ggg u \ggg q$ approximates $(s_1 \otimes \cdots
  \otimes s_n) \mathcal{U} (t_1^\dagger \otimes \cdots \otimes
  t_n^\dagger)$.
\end{proof}

\noindent Though the proof above may at first glance appear to be non-constructive (as it involves choice among unitaries), we note that a unitary in finite dimension mapping $\ket{0}$ to some $\ket{v}$ (of norm $1$) can be constructed using the Gram-Schmidt process.

\section{Canonicity and Quantum Computational Universality}
\label{sec:canonicity}

The section proves the main result of the paper.

So far, we have built a particular model of $\SPizheLang$ in $\Contraction$
and proved that by imposing $\phi=\pi/8$, we get a computationally
universal quantum programming language. In fact, it turns out that we
have a much stronger result. \emph{Any model of $\SPizheLang$ in
  $\Contraction$ satisfying the equations for classical structures and
  their execution laws defined in Prop.~\ref{prop:classexeclaws} as
  well as the complementarity equation in Def.~\ref{def:comp} is
  computationally universal!}

\begin{definition}[Complementarity]\label{def:comp}
To avoid clutter, we will implicitly lift $\PizLang$ and $\PihLang$
gates to $\SPizheLang$, writing $c_Z$ for $\cm{arr}~(\cm{arr}_Z~c)$
and $c_\phi$ for $\cm{arr}~(\cm{arr}_\phi~c)$. The complementarity law
requires the following identity:
\[\begin{array}{cl}
&  (\cm{copy}_Z \arrprod \pid)
  \ggg
  \assoct
  \ggg \\
&  (\pid \arrprod (\inv\;\cm{copy}_X))
  \ggg
  (\pid \arrprod \cm{copy}_X)
  \ggg
  \associt
  \ggg \\
&  ((\inv\;\cm{copy}_Z) \arrprod \pid) \\
  ~=~
&  \pid
\end{array}\]
\end{definition}
Recall that the complementarity law is equivalent to the two bases
being mutually unbiased in the usual sense.

To show the canonicity theorem, we rely on the following
characterisation of orthonormal bases complementary to the $Z$-basis
where the unitary change of basis is involutive:
\begin{proposition}\label{prop:charhad}
  Every orthonormal basis $\{\ket{b_1}, \ket{b_2}\}$ on $\mathbb{C}^2$
    which is complementary to the $Z$-basis, and for which the
    associated change of basis unitary is involutive, is on one of the
    following two forms for $\theta \in [0, 2\pi)$:
  \begin{align*}
    \ket{b_1} = \tfrac{1}{\sqrt{2}}\left(\begin{smallmatrix}
      1 \\ e^{-i \theta}
    \end{smallmatrix}\right) \qquad
    \ket{b_2} = \tfrac{1}{\sqrt{2}}\left(\begin{smallmatrix}
      e^{i \theta} \\ -1
    \end{smallmatrix}\right)
  \end{align*}
  or
  \begin{align*}
    \ket{b_1} = \tfrac{1}{\sqrt{2}}\left(\begin{smallmatrix}
      -1 \\ e^{-i \theta}
    \end{smallmatrix}\right) \qquad
    \ket{b_2} = \tfrac{1}{\sqrt{2}}\left(\begin{smallmatrix}
      e^{i \theta} \\ 1
    \end{smallmatrix}\right) \enspace.
  \end{align*}
\end{proposition}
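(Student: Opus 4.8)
The plan is to reduce the statement to an elementary computation with $2\times 2$ matrices. I encode an arbitrary orthonormal basis of $\mathbb{C}^2$ as the columns of the associated change-of-basis unitary $U = \ket{b_1}\bra{0} + \ket{b_2}\bra{1}$, which sends the $Z$-basis to $\{\ket{b_1},\ket{b_2}\}$ in order. Each of the three hypotheses (complementarity, involutivity, orthonormality) then becomes a separate constraint on the entries of $U$, and I impose them one at a time.

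First I would use complementarity. Being complementary to the $Z$-basis is exactly mutual unbiasedness, which forces $|\braket{0|b_i}| = |\braket{1|b_i}| = \tfrac{1}{\sqrt{2}}$ for $i = 1,2$. Hence every entry of $U$ has modulus $\tfrac{1}{\sqrt{2}}$, so I may write $U = \tfrac{1}{\sqrt{2}}\left(\begin{smallmatrix} e^{i\alpha_1} & e^{i\alpha_2} \\ e^{i\beta_1} & e^{i\beta_2}\end{smallmatrix}\right)$ for some phases $\alpha_1,\alpha_2,\beta_1,\beta_2$. Next I would handle involutivity, whose matrix content is clean: for a \emph{unitary} $U$, the condition $U^2 = \id$ is equivalent to $U$ being Hermitian, since $U^2 = \id$ gives $U^{-1} = U$, while unitarity gives $U^{-1} = U^\dagger$, so $U = U^\dagger$. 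Imposing $U = U^\dagger$ forces the diagonal entries to be real, hence $e^{i\alpha_1}, e^{i\beta_2} \in \{\pm 1\}$, and forces the off-diagonal entries to be complex conjugates, hence $e^{i\alpha_2} = e^{-i\beta_1}$. Writing $\theta := \beta_1$ and $\epsilon_1 := e^{i\alpha_1}$, $\epsilon_2 := e^{i\beta_2}$, this already pins the basis down to $\ket{b_1} = \tfrac{1}{\sqrt{2}}\left(\begin{smallmatrix}\epsilon_1 \\ e^{i\theta}\end{smallmatrix}\right)$ and $\ket{b_2} = \tfrac{1}{\sqrt{2}}\left(\begin{smallmatrix} e^{-i\theta} \\ \epsilon_2\end{smallmatrix}\right)$ with $\epsilon_1,\epsilon_2 \in \{\pm 1\}$.

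Finally I would invoke the remaining orthonormality data: unit norm is automatic, and $\braket{b_1|b_2} = 0$ reduces (using that $\epsilon_1$ is real) to $\tfrac{1}{2}e^{-i\theta}(\epsilon_1 + \epsilon_2) = 0$, i.e. $\epsilon_2 = -\epsilon_1$. This leaves exactly the two discrete sign choices $(\epsilon_1,\epsilon_2) = (1,-1)$ and $(-1,1)$, which yield precisely the two displayed families after the harmless reparametrisation $\theta \mapsto -\theta$, legitimate because $\theta$ ranges over all of $[0, 2\pi)$.

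I do not expect a genuine obstacle, as the whole argument is a direct matrix calculation. The only steps requiring care are the involutive-$\Leftrightarrow$-Hermitian reduction (one must genuinely use unitarity to replace $U^{-1}$ by $U^\dagger$), and the phase bookkeeping: the mildly delicate point is confirming that the Hermitian condition already removes \emph{all} residual phase freedom except the two discrete signs and the single angle $\theta$, so that no spurious additional family survives and the result matches the stated parametrisation exactly rather than an equivalent rewriting of it.
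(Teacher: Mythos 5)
Your proposal is correct and follows essentially the same route as the paper's proof: complementarity fixes all entries of the change-of-basis matrix to have modulus $\tfrac{1}{\sqrt{2}}$, involutivity plus unitarity forces the matrix to be Hermitian (real diagonal, conjugate off-diagonal entries), and orthogonality then forces the two diagonal signs to be opposite, leaving exactly the two stated families. Your explicit remark about the $\theta \mapsto -\theta$ reparametrisation is a nice touch of care that the paper glosses over silently, but it does not constitute a different argument.
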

\begin{proof}
  Let $\{\ket{b_1}, \ket{b_2}\}$ be an orthonormal basis on
  $\mathbb{C}^2$ complementary to the $Z$-basis. Since we then have
  \begin{align*}
    \braket{0|b_1}\braket{b_1|0} = \braket{1|b_1}\braket{b_1|1} & =
    \tfrac{1}{2} \\
    \braket{0|b_2}\braket{b_2|0} = \braket{1|b_2}\braket{b_2|1} & =
    \tfrac{1}{2}
  \end{align*}
  by complementarity, the only freedom in $\{\ket{b_1}, \ket{b_2}\}$
  is in the choice of phases $p_1, \dots, p_4$ in
  \begin{align*}
    \ket{b_1} = \tfrac{1}{\sqrt{2}}\left(\begin{smallmatrix}
      p_1 \\ p_2
    \end{smallmatrix}\right) \qquad
    \ket{b_2} = \tfrac{1}{\sqrt{2}}\left(\begin{smallmatrix}
      p_3 \\ p_4
    \end{smallmatrix}\right) \enspace.
  \end{align*}
  Its associated change of basis is given by the matrix
  $$
  U = \frac{1}{\sqrt{2}}\begin{pmatrix}
    p_1 & p_3 \\ p_2 & p_4
  \end{pmatrix}
  $$ which is seen to be unitary and further assumed to be involutive,
  so Hermitian. By $U$ Hermitian we have $p_1 =
  \overline{p_1}$, $p_2 = \overline{p_3}$, and $p_4 = \overline{p_4}$,
  so in particular $p_1 = \pm 1$ and $p_4 = \pm 1$. By orthogonality
  \begin{align*}
    \braket{b_1|b_2} = 0
    \quad & \Leftrightarrow \quad p_1 \overline{p_3} + p_2 \overline{p_4} = 0 \\
    \quad & \Leftrightarrow \quad p_1 p_2 = -p_4 p_2 \\
    \quad & \Leftrightarrow \quad p_1 p_2 \overline{p_2}
    = -p_4 p_2 \overline{p_2} \\
    \quad & \Leftrightarrow \quad p_1 = -p_4
  \end{align*}
  so in fact either $p_1 = 1$ and $p_4 = -1$ or the other way around.
  Writing $p_2 = e^{i \theta}$ and $p_3 = \overline{p_2} = e^{-i \theta}$ then
  gives us the following two possible forms for $U$,
  $$
  U = \frac{1}{\sqrt{2}} \begin{pmatrix}
    1 & e^{i \theta} \\ e^{-i \theta} & -1
  \end{pmatrix} \quad\text{or}\quad
  U = \frac{1}{\sqrt{2}} \begin{pmatrix}
    -1 & e^{i \theta} \\ e^{-i \theta} & 1
  \end{pmatrix}
  $$
  for $\theta \in [0,2\pi)$, proving the claim since the
    columns of $U$ were given by $\ket{b_1}$ and $\ket{b_2}$.
\end{proof}
\begin{theorem}[Canonicity]\label{thm:canonicity}
  If a categorical semantics $\sem{-}$ for $\SPizheLang$ in
  $\Contraction$ satisfies the classical structure laws and the
  execution laws (defined in Prop.~\ref{prop:classexeclaws}) and the
  complementarity law (Def.~\ref{def:comp}), then it is
  computationally universal. Specifically, it must be the semantics of
  Sec.~\ref{sub:spicemodel} with the semantics of $\xgate_\phi$ being
  the Hadamard gate (up to conjugation by $X$ and $Z$) and:
  \begin{align*}
    \sem{\cm{copy}_Z} & \colon \ket{i} \mapsto \ket{ii} &
    \sem{\cm{zero}} & = \ket{0} \\
    \sem{\cm{copy}_X} & \colon \ket{\pm} \mapsto \ket{\pm\pm} &
    \sem{\cm{assertZero}} & = \bra{0}
  \end{align*}
  up to a global unitary.
\end{theorem}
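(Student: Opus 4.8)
The plan is to follow the informal argument sketched after the informal canonicity theorem, replacing each appeal to intuition by the two structural characterisations recalled in Sec.~\ref{subsec:complementarityagain} (classical structures correspond to orthonormal bases, complementary classical structures to mutually unbiased bases), together with Prop.~\ref{prop:charhad} and Thm.~\ref{thm:aharonov}. Everything reduces to pinning down the single unitary $U := \sem{\xgate_\phi}$ on $\sem{1+1} = I \oplus I \cong \mathbb{C}^2$ and checking that the classical fragment is interpreted as genuine permutations, after which Thm.~\ref{thm:aharonov} supplies universality.

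First I would extract the two bases. The four $\cm{copy}_Z$-equations of Prop.~\ref{prop:classexeclaws} are exactly the commutative special dagger Frobenius axioms \eqref{eq:classicalstructure1}--\eqref{eq:classicalstructure2}, and likewise for $\cm{copy}_X$; by the characterisation of classical structures, each of $\sem{\cm{copy}_Z}$ and $\sem{\cm{copy}_X}$ therefore arises from a unique orthonormal basis, say $\{\ket{z_0}, \ket{z_1}\}$ and $\{\ket{x_0}, \ket{x_1}\}$, with $\sem{\cm{copy}_Z} \colon \ket{z_i} \mapsto \ket{z_i z_i}$. Using the global-unitary freedom I would conjugate the whole functor so that $\{\ket{z_0}, \ket{z_1}\}$ is the computational basis, giving $\sem{\cm{copy}_Z} \colon \ket{i} \mapsto \ket{ii}$.

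Next I would use the execution laws to fix the remaining data. The law $\cm{zero} \ggg \cm{assertZero} = \pid$ forces, by Cauchy--Schwarz in $\Contraction$ (where states and effects have norm $\le 1$), that $\sem{\cm{zero}}$ is a unit vector with $\sem{\cm{assertZero}} = \sem{\cm{zero}}^\dagger$; feeding $\cm{zero}$ into the definition $\cm{copy}_Z = \unitit \ggg (\pid \arrprod \cm{zero}) \ggg \cxgate_Z$ and using the two control laws identifies $\sem{\cm{zero}}$ with the ``false'' computational vector and forces $\sem{\cm{arr}_Z~c}$ to act as the standard permutation for each classical $c$ (in particular $\sem{\cm{arr}_Z~\ccxgate}$ is the ordinary Toffoli), up to residual phases that I would absorb into the permitted global-unitary and $X$/$Z$-conjugation slack; after this $\sem{\cm{zero}} = \ket{0}$ and $\sem{\cm{one}} = \ket{1}$. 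Since $\swapp$ is a monoidal symmetry we have $\swapp \ggg \swapp = \pid$, so functoriality makes $U$ involutive, hence (being unitary) Hermitian; and the last execution law $\cm{zero} \ggg \xgate_\phi \ggg \cm{assertOne} = \cm{one} \ggg \xgate_\phi \ggg \cm{assertZero}$ reads $\bra{1} U \ket{0} = \bra{0} U \ket{1}$, i.e.\ $U$ is symmetric, which with Hermiticity makes $U$ \emph{real}.

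Finally I would bring in complementarity and conclude. The complementarity law of Def.~\ref{def:comp} is precisely \eqref{eq:complementarity}, so the two classical structures are complementary and $\{\ket{x_0}, \ket{x_1}\}$ is mutually unbiased with respect to the computational basis; unwinding $\cm{copy}_X = \xgate_\phi \ggg \cm{copy}_Z \ggg (\xgate_\phi \arrprod \xgate_\phi)$ together with $U^2 = \id$ shows $\ket{x_j} = U\ket{j}$, so $U$ is exactly the change of basis from $Z$ to this complementary basis. Thus $U$ is a real, involutive, unitary change of basis to a basis complementary to $Z$, and Prop.~\ref{prop:charhad} applies: reality forces $e^{i\theta} = \pm 1$, leaving four matrices, each of which is the Hadamard gate up to conjugation by $X$ and $Z$. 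Each is a real basis-changing single-qubit gate, so with the Toffoli already identified, Thm.~\ref{thm:aharonov} yields computational universality, and reading off the forced values of $\sem{\cm{copy}_Z}$, $\sem{\cm{copy}_X}$, $\sem{\cm{zero}}$, and $\sem{\cm{assertZero}}$ recovers the model of Sec.~\ref{sub:spicemodel}. I expect the main obstacle to be exactly the middle step: showing that the classical-structure and execution laws rigidify the \emph{whole} functor (not merely $\cm{copy}_Z$ and $\cm{copy}_X$) down to the standard permutation action on the $Z$-basis, so that the Toffoli is genuinely the Toffoli and the residual phase ambiguities all land inside the permitted ``global unitary, up to conjugation by $X$ and $Z$'' freedom.
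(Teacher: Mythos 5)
Your proposal is correct and follows essentially the same route as the paper's proof: classical structure laws give the two orthonormal bases, the global-unitary freedom fixes the $Z$-basis as computational, complementarity makes the bases mutually unbiased, involutivity of $\sem{\xgate_\phi}$ (from monoidal symmetry) plus Prop.~\ref{prop:charhad} plus the last execution law pin it down to Hadamard up to $X$/$Z$-conjugation, and Thm.~\ref{thm:aharonov} gives universality. The only divergences are cosmetic: you derive reality of $U$ (Hermitian from involutivity, symmetric from the execution law) before invoking Prop.~\ref{prop:charhad}, whereas the paper applies Prop.~\ref{prop:charhad} first and then uses the same execution law to force $e^{i\theta}=\pm 1$; and the ``main obstacle'' you flag at the end --- rigidifying the interpretation of the classical fragment --- is exactly what the paper dispatches in one line, by declaring without loss of generality that $\PizLang$ carries its standard $Z$-basis semantics, this being precisely the freedom the global unitary affords.
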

\begin{proof}
  Observe that $\sem{I+I}=\mathbb{C}^2$ must be the
  qubit.  Without loss of generality, we may assume that $\PizLang$
  has the usual semantics in the computational ($Z$) basis---this is
  the freedom that the global unitary affords.
  
  The execution equations ensure that
  $\{\sem{\cm{zero}},\sem{\cm{one}}\}$ and
  $\{\sem{\cm{plus}},\sem{\cm{minus}}\}$ are copyable by
  $\sem{\cm{copy}_Z}$ and $\sem{\cm{copy}_X}$ respectively, while the
  complementarity equations further ensure that
  $\{\sem{\cm{zero}},\sem{\cm{one}}\}$ and
  $\{\sem{\cm{plus}},\sem{\cm{minus}}\}$ form complementarity
  orthonormal bases for $\mathbb{C}^2$. By assumption
  $\sem{\cm{zero}}$ and $\sem{\cm{one}}$ form the $Z$-basis, so the
  only possibility for $\{\sem{\cm{plus}},\sem{\cm{minus}}\}$ is as
  an orthonormal basis complementary to the $Z$-basis. 

  Since $\sem{\cm{arr}_\phi~\swapp}$ is the symmetry of a symmetric
  monoidal category it is involutive, and by the complementarity
  equations it is the change of basis unitary between orthonormal
  bases. It follows then by Proposition~\ref{prop:charhad} that
  $$
  \sem{\cm{arr}_\phi~\swapp} = \frac{1}{\sqrt{2}} \begin{pmatrix}
    1 & e^{i \theta} \\ e^{-i \theta} & -1
  \end{pmatrix} \qquad\text{or}\qquad
  \sem{\cm{arr}_\phi~\swapp} = \frac{1}{\sqrt{2}} \begin{pmatrix}
    -1 & e^{i \theta} \\ e^{-i \theta} & 1
  \end{pmatrix}
  $$
  for some $\theta \in [0,2\pi)$. However, the execution equation
  $$
    \cm{zero} \ggg \xgate_\phi \ggg \cm{assertOne} =
    \cm{one} \ggg \xgate_\phi \ggg \cm{assertZero}
  $$
  translates to the requirement that
  $$
  e^{i \theta} = \bra{0}\sem{\cm{arr}_\phi~\swapp}\ket{1} =
  \bra{1}\sem{\cm{arr}_\phi~\swapp}\ket{0} = e^{-i \theta}
  $$
  in turn implying $e^{i \theta} = \pm 1$. This leaves
  $$
  \frac{1}{\sqrt{2}}
  \begin{pmatrix}
    1 & 1 \\ 1 & -1
  \end{pmatrix}, \qquad
  \frac{1}{\sqrt{2}}
  \begin{pmatrix}
    1 & -1 \\ -1 & -1
  \end{pmatrix}, \qquad
  \frac{1}{\sqrt{2}}
  \begin{pmatrix}
    -1 & 1 \\ 1 & 1
  \end{pmatrix}, \quad\text{and} \quad
  \frac{1}{\sqrt{2}}
  \begin{pmatrix}
    -1 & -1 \\ -1 & 1
  \end{pmatrix} \qquad
  $$
  as the only possibilities for $\sem{\cm{arr}_\phi~\swapp}$, which
  are precisely Hadamard up to conjugation by $Z$ and/or $X$. Either
  way, this is a real basis-changing single-qubit unitary, so
  computationally universal in conjunction with the Toffoli gate
  (which is expressible in $\PizLang$) by Theorem~\ref{thm:aharonov}.
\end{proof}

Observe that the Hadamard gate could already be expressed in
$\PizhLang$ with the appropriate choice of $\phi$, without states and
effects. The latter were only needed to impose equations on
$\PizhLang$ which essentially forces $\phi$ to be $\pi/8$.

\section{\qpi: Examples and Reasoning}
\label{sec:reasoning}
\label{sec:measurement}

Having shown that quantum behaviour emerges from two copies of a
classical reversible programming language mediated by the complementarity
equation, we now illustrate that developing quantum programs similarly only
needs classical principles augmented with the complementarity equation, and
some forms of reasoning can similarly be reduced. Before we
proceed however, we present a sanitised version of \SPizheLang\ that
fixes $\phi=\pi/8$, that hides some of the constructs that were only
needed for the intermediate steps, and that uses Agda syntax for ease
of experimentation and for providing machine-checked proofs of
equivalences. The code shown here is not self-contained, and the
previously mentioned repository should be consulted for the details.
Nevertheless this paper is ``literate Agda'' in the sense that we
extract the Agda code (using the \LaTeX package \texttt{minted})
directly from the sources.

\subsection{\qpi: Syntax and Terms}

The public interface of \qpi\ consists of two layers: the core
reversible classical language \PiLang\ (of Fig.~\ref{fig:pi}) and the
arrow layer (presented in Sec.~\ref{sec:story}). We reproduce these
below using the notation in our Agda specification.

The \qpi\ types are directly collected in an Agda datatype:

\agdafrag{Pi/Types.agda}{15}{19}

\noindent Since commutative monoids are used multiple times, their
definition is abstracted in a structure \agda{CMon} that is
instantiated twice:

\agdafrag{Pi/Language.agda}{21}{26}  

\noindent The \PiLang-combinators are encoded in a type family:

\agdafrag{Pi/Language.agda}{28}{38}  

\noindent Finally, the syntax and types of the \qpi\ combinators are encoded in
another type family that uses another instance of our commutative
monoid \agda{CMon}:

\agdafrag{QPi/Syntax.agda}{22}{34}

\noindent In the following, we will refer to common gates and states
which we collect here. The definitions are a straightforward transcription
into Agda of the ones in previous sections. Below $\Pi$ refers to the
module that (abstractly) defines \PiLang-combibators.

\agdafrag{QPi/Terms.agda}{38}{56}

\noindent And so, as expected, the \agda{X} gate and the \agda{H} gate are both
lifted versions of $\swapp$ from the underlying definition of $\Pi$.
The classical gates of \PiLang\ and their controlled versions are
lifted using \agda{arrZ}.

\subsection{Proving Simple Equivalences}

The laws of classical structures, the execution equations, and the
complementarity law, combined with the conventional laws for arrows and 
monoidal and rig categories, allow us to reason about \qpi\ programs
at an abstract extensional level that eschews complex numbers,
vectors, and matrices.

As a first demonstration, we can prove that the \agda{X} and \agda{H}
gates are both involutive:

\agdafrag{Reasoning.agda}{63}{71}

\noindent The first proof uses Agda's equational style where each step
is justified by one of the \qpi\ equivalences (see full code). The
proof starts by moving ``under the arrow'' exposing the underlying
$\swapp$ gate, using the fact that $\swapp$ is an involution, and then
lifting the equivalence back through the arrow. The proof for
\agda{H}, presented as a sequence of equivalences, applies the same
strategy to the other arrow.  In later examples, we also use some
additional reasoning combinators that help manage congruences,
associativity and sequencing.

For a more interesting example, we prove that the $\zgate$ gate sends
$\sem{\cm{minus}}$ to $\sem{\cm{plus}}$:

\agdafrag{Reasoning.agda}{76}{90}

\noindent The execution equations allow us to extend this reasoning to programs
involving control, \textit{e.g.}:

\agdafrag{Reasoning.agda}{92}{110}

\bigskip We can string together more involved proofs to establish
identities such as the following (where the classical proof of
\agda{xcx} is elided):

\begin{center}
  \begin{tikzpicture}
    \node (l) {
    \begin{tikzpicture} 
    \begin{yquant*}
        qubit {} q0;
        qubit {} q1;
        box {$Z$} q1;
        box {$H$} q1;
        cnot q1 | q0;
    \end{yquant*}
    \end{tikzpicture}
    }; 
    
    \node[right=3mm of l] (eq) {$=$};
    
    \node[right=3mm of eq] (r) {
    \begin{tikzpicture} 
    \begin{yquant*}
        qubit {} q0;
        qubit {} q1;
        box {$Z$} q1 | q0;
        box {$H$} q1;
        not q1;
    \end{yquant*}
    \end{tikzpicture}
    };
  \end{tikzpicture}
\end{center}

\agdafrag{Reasoning.agda}{113}{138}

\subsection{Modeling Complex Numbers}

The $S$ gate is expressed in the computational basis using the matrix
$(\begin{smallmatrix} 1 & 0 \\ 0 & i \end{smallmatrix})$. Although
\qpi\ lacks explicit complex numbers, the language, being
computationally universal, can express them by encoding $a+ib$ as
$(\begin{smallmatrix} a & -b \\ b &
  a \end{smallmatrix})$~\cite{aharonov:toffolihadamard,10.5555/2011508.2011515}. Under
this encoding, we can even express the controlled-$S$ gate:

\agdafrag{QPi/Terms.agda}{118}{122}

\noindent The encoding uses an extra qubit to distinguish the real
part from the imaginary part.

\subsection{Postulating Measurement}

\citet{heunenkaarsgaard:qie} derive quantum measurement as a
computational effect layered on top of a language of unitaries, by
extending the language with effects for \emph{classical cloning} and
\emph{hiding}. Since \qpi\ already has a notion of classical cloning
(two distinct such, in fact) given by the $\cm{copy}_Z$ and
$\cm{copy}_X$ combinators, we only need to extend it with hiding to
obtain measurement of qubit systems.

We can extend \qpi\ with hiding using the exact same arrow
construction as the one used to introduce hiding in
$\mathcal{U}\Pi^\chi_a$ by \citet{heunenkaarsgaard:qie}, with two
subtle differences. The first difference is that, since the model of
\qpi\ is a (dagger) symmetric monoidal category and not a rig
category, this will yield a mere \emph{arrow} over \qpi, and not an
\emph{arrow with choice}:
$$
\frac{c \of b_1 \arzh b_2 \times b_3 \quad b_3 \text{ inhabited}}{\cm{lift}~c
\of b_1 \leadsto b_2}
$$ All available arrow combinators, including the crucial
$\cm{discard} \of b \leadsto 1$ and the derived projections $\cm{fst}
\of b_1 \times b_2 \leadsto b_1$ and $\cm{snd} \of b_1 \times b_2
\leadsto b_2$, are defined precisely as \citet{heunenkaarsgaard:qie}
define them. The second difference concerns partiality of the
model. Since the model of \qpi\ is one of \emph{partial} maps (whereas
the model of $\mathcal{U}\Pi_a$ is one of \emph{total} maps), we need
to accommodate for this in the categorical model. This is precisely
what is done by the $L_\otimes^t$-construction of
\citet{andresmartinezheunenkaarsgaard:universal}.  In short, the
resulting model will not satisfy $\sem{c \ggg \cm{discard}} =
\sem{\cm{discard}}$ for \emph{all} programs $c$, though it will
satisfy it those $c$ for which $\sem{c \ggg \cm{inv}~c} =
\sem{\cm{id}}$.

With this notion of hiding, we can derive measurement in the two bases as
\begin{equation}\label{eq:copyfirstcommute}
  \cm{measure}_Z = \cm{copy}_Z \ggg \cm{fst} \qquad\text{and}\qquad
\cm{measure}_\phi = \cm{copy}_X \ggg \cm{fst}
\end{equation}
exactly as done in previous
work~\cite{coeckeperdrix:envclassicalchannels, heunenkaarsgaard:qie}.
Note that, by commutativity of copying, we could equivalently have
chosen the second projection $\cm{snd}$ instead of $\cm{fst}$. This
can all be expressed in the Agda formalisation as follows:

\agdafrag{QPi/Measurement.agda}{24}{27}

This postulate is \emph{dangerous}, as it does not enforce that it is only applied to
total maps (though we are careful to only do so in the examples here). We hope
to tighten up this loophole in future work.

\agdafrag{QPi/Measurement.agda}{28}{36}

From just this observation, we can show that measurement in the
$\phi$-basis is nothing more than measurement in the $Z$-basis conjugated
by $\hgate$ because:

\agdafrag{Reasoning.agda}{140}{160}

\noindent Following the same principle, we can define measurement in
more exotic bases. For example, measurement in the 2-qubit Bell basis
can be defined by conjugating a pair of $Z$-measurements by the
unitary $\cxgate \ggg \hgate \arrprod \cm{id}$.

\begin{figure}[t]
\begin{center}
\begin{tikzpicture}[scale=1.0,every label/.style={rotate=40, anchor=south west}]
\begin{yquant*}[operators/every barrier/.append style={red, thick}]
    qubit {$a_0=\ket0$} a0;
    qubit {$a_1=\ket0$} a1;
    qubit {$b_0=\ket0$} b0;
    qubit {$b_1=\ket0$} b1;
    box {$H$} a0;
    box {$H$} a1;
    cnot b0 | a0;
    cnot b1 | a0;
    cnot b0 | a1;
    cnot b1 | a1;
    align -;
    box {$H$} a0;
    box {$H$} a1;
    output {$m_0$} a0;
    output {$m_1$} a1;
    output {$r_0$} b0;
    output {$r_1$} b1;
\end{yquant*}
\end{tikzpicture}
\end{center}
\caption{\label{fig:simon}Given a 2-1 function $f : \bool^n
  \rightarrow \bool^n$ with the property that there exists an $a$ such
  that $f(x) = f(x~\textsc{xor}~a)$ for all $x$, Simon's problem is to
  determine $a$. The figure gives the textbook quantum algorithm for
  instance with $n=2$.}
\end{figure}

\subsection{Quantum Algorithms: Simon and Grover}

The language \qpi\ can easily model textbook quantum algorithms. The
circuit in Fig.~\ref{fig:simon} that solves an instance of Simon's
problem can be transliterated directly:

\agdafrag{QPi/Terms.agda}{95}{99}

\noindent The four \cxgate-gates, and more generally an arbitrary
quantum oracle consisting of classical gates, can be implemented in
the underlying classical language and lifted to \qpi.

Having access to measurement allows us to express end-to-end
algorithms as we illustrate with an implementation of a small instance
of Grover's search~\cite{grover:fast}, which, with high probability,
is able to find a particular element $x_0$ in an unstructured data
store of size $n$ by probing it only $O(\sqrt{n})$ times. The
algorithm works by preparing a particular quantum state, and then
repeating a subprogram---the \emph{Grover iteration}, consisting of an
\emph{oracle} stage and an \emph{amplification} stage---a fixed number
of times proportional to $\sqrt{n}$, before finally measuring the
output. The data store of size $n$ is implemented as a unitary $U :
[\bool^n] \to [\bool^n]$ such that $U \ket{x} = -\ket{x}$ if $\ket{x}$
is the element $\ket{x_0}$ being searched for, and $U \ket{x} =
\ket{x}$ otherwise. Though this uses nontrivial phases, it is still a
classical program in disguise, and one could also use a classical
function instead (though this presentation is slightly more
economical). We assume that this unitary is given to us in the form of
a \qpi\ program. The final part of the algorithm is the amplification
stage, which guides the search towards $\ket{x_0}$. This part is the
same for every oracle, depending only on the number of qubits. On
three qubits, the amplifier is given by the circuit
$$
\begin{tikzpicture}[scale=1.0,every label/.style={rotate=40, anchor=south west}]
\begin{yquant*}[operators/every barrier/.append style={red, thick}]
    qubit {} q0;
    qubit {} q1;
    qubit {} q2;
    box {$H$} q0;
    box {$H$} q1;
    box {$H$} q2;
    not q0;
    not q1;
    not q2;
    box {$Z$} q2 | q0, q1;
    not q0;
    not q1;
    not q2;
    box {$H$} q0;
    box {$H$} q1;
    box {$H$} q2;
\end{yquant*}
\end{tikzpicture}
$$

\noindent Using the fact that the $Z$ gate is actually negation
conjugated by Hadamard, we see that this $3$-qubit amplifier is
expressible as the \qpi\ program:

\agdafrag{QPi/Terms.agda}{103}{110}

To put this together, we suppose that we are given a unitary of the
form $u \colon \bool^3 \isozh \bool^3$ described above. The initial
state before iteration should be $\ket{+++}$, and we need to repeat
the Grover iteration $\lceil \tfrac{\pi}{4} \sqrt{2^3} \rceil = 3$
times before measuring the output in the computational basis. All
together, this yields the following \qpi\ program implementing
$3$-qubit Grover search:

\agdafrag{QPi/Measurement.agda}{38}{41}

\section{Conclusion}
\label{sec:conclusion}

We have shown that a computationally universal quantum programming
language arises as a formal combination of two copies of a classical
reversible one, exploiting \emph{complementarity} to guarantee
expressivity. This construction was given as a series of arrows,
providing a positive answer to the existence of a ``quantum
effect.'' Semantically, every step was given as a categorical
construction, giving each language along the way a categorical
semantics. Additionally, we showed that the language can be extended
further with quantum measurements, and that the laws of monoidal
categories, extended with the laws of complementarity, can be used to
reason about quantum programs. The concrete semantics is fully
implemented in an Agda package that was used to run several
examples. The Agda type system can be readily used to verify some
elementary laws (\textit{e.g.}, \agda{inv zero ≡ assertZero}).

We envision the following avenues for future research.

\paragraph{The more $\PiLang$, the better precision}
We have shown that two copies of a classical language, when aligned
just right, are sufficient to yield computationally universal quantum
computation. However, encoding general rotation gates, such as the
ones needed for the quantum Fourier transform, is awkward and
inefficient using just Toffoli and Hadamard. Can additional copies of
the classical base language, when aligned carefully, significantly
improve this---and, if so, by how much?

\paragraph{Completeness}
An equational theory is \emph{sound and complete} for a semantic
domain when any two objects in the domain are equal iff they can be
proven to be equal using the rules of the equational theory. While it
is clear that reasoning in \qpi\ is sound, completeness is wholly
unclear. At the moment, it is unclear how to even proceed, since
already a complete equational theory for the Clifford+T gate set beyond
$2$ qubits is unknown.

\paragraph{Formal quantum experiments}
In investigations by
\citet{abramskyhorsman:demonic,nurgayievamathisdelriorenner:thoughtexp},
programming languages have served as tools in providing outcomes to
thought experiments about physical theories. We envision using (an
extension of) \qpi\ for similar purposes---\textit{e.g.}\ to formulate
contextuality scenarios and quantum protocols, and use the reasoning
capabilities of \qpi\ to answer questions about them.

\paragraph{Infinite-dimensionality and iteration}
\citet{jamessabry:infeff} consider an extended version of $\PiLang$
called $\Pi^o$ with isorecursive types and iteration in the form of a
\emph{trace operator}. Extending \qpi\ to start from this extended
language, rather than the finite $\PiLang$, is of natural
interest. However, doing so successfully would require answering
fundamental open questions about the nature of infinite-dimensional
quantum computation.

\paragraph{Approximate reasoning}
While the categorical semantics of \qpi\ give a way to reason about
quantum programs, it only allows one to prove that programs are
\emph{exactly} equal. However, since approximation plays a large role
in quantum computation, reasoning that two programs are equal not on
the nose, but up to a given error, seems equally important. How can we
extend the model of \qpi\ to account for such approximate reasoning?

\subsection*{Acknowledgements}
We are grateful to Tiffany Duneau for her comments and suggestions
relating to the canonicity theorem.


\bibliography{bibliography.bib}

\end{document}